\newtheorem{problem}{Problem}
\newtheorem{definition}{Definition}
\newtheorem{theorem}{Theorem}
\newcommand{\smallsection}[1]{\noindent{\smash{\textbf{#1.}}}}
\newcommand{\vect}[1]{\mathbf{#1}}
\newcommand{\mat}[1]{\mathbf{#1}}
\newcommand{\set}[1]{\mathcal{#1}}
\newcommand{\users}{\set{U}}
\newcommand{\edges}{\set{E}}
\newcommand{\Ub}{\users_{b}}
\newcommand{\items}{\set{I}}
\newcommand{\Ib}{\items_{b}}
\newcommand{\Ibp}{\items_{b'}}
\newcommand{\Ubp}{\users_{b'}}
\newcommand{\B}{\set{B}}
\newcommand{\C}{\set{C}}
\newcommand{\Eb}{\set{E}_{b}} % b-th interaction
\newcommand{\graph}{\set{G}}
\newcommand{\Gb}{\mathcal{G}_{b}} % b-th graph
\newcommand{\Gind}[1]{\mathcal{G}_{#1}} % b-th graph
\newcommand{\GC}{\mathcal{G}_{\C}} % cascading behavior graph
\newcommand{\Ab}{\mat{A}_{b}} % adj
\newcommand{\tAb}{\mat{\tilde{A}}_{b}} % adj
\newcommand{\Abnorm}{\tilde{\mat{A}}_{b}}
\renewcommand{\L}{\mat{L}}
\newcommand{\Linv}{\mat{\mathcal{L}}}
\newcommand{\Linvprod}[2]{\Linv_{b_{#1}\leftsquigarrow b_{#2}}}
\newcommand{\NIb}[1]{\set{N}_{\Ib}(#1)}
\newcommand{\NUb}[1]{\set{N}_{\Ub}(#1)}
\newcommand{\Dub}{\mat{D}_{\Ub}}
\newcommand{\Dib}{\mat{D}_{\Ib}}
\newcommand{\taobao}{\texttt{Taobao}\xspace}
\newcommand{\tmall}{\texttt{Tmall}\xspace}
\newcommand{\tenrec}{\texttt{Tenrec}\xspace}
\newcommand{\method}{\textsc{CascadingRank}\xspace} % contractions = CascRank
\newcommand{\methodcol}{\textsc{CascadingRank-col}}
\newcommand{\methodsym}{\textsc{CascadingRank-sym}}
\newcommand{\qb}{\vect{q}_{b}}
\newcommand{\qu}{\vect{q}_{\users}}
\newcommand{\qi}{\vect{q}_{\items}}
\newcommand{\qub}{\vect{q}_{\Ub}}
\newcommand{\qib}{\vect{q}_{\Ib}}
\newcommand{\qindhat}[1]{{\hat{\vect{q}}_{b_{#1}}}}
\newcommand{\rb}{\vect{r}_{b}}
\newcommand{\rbp}{\vect{r}_{b'}}
\newcommand{\ru}{\vect{r}_{\users}}
\newcommand{\ri}{\vect{r}_{\items}}
\newcommand{\rub}{\vect{r}_{\Ub}}
\newcommand{\rib}{\vect{r}_{\Ib}}
\newcommand{\rubp}{\vect{r}_{\Ubp}}
\newcommand{\ribp}{\vect{r}_{\Ibp}}
\newcommand{\wui}{\tilde{w}_{ui}}
\newcommand{\wiu}{\tilde{w}_{iu}}
\newcommand{\wuib}{\tilde{w}_{b}(u, i)}
\newcommand{\wiub}{\tilde{w}_{b}(i, u)}
\let\oldnl\nl
\newcommand{\nonl}{\renewcommand{\nl}{\let\nl\oldnl}}
\setlist[itemize]{topsep=1pt}
\setlist[enumerate]{topsep=1pt}
\title{
Personalized Ranking on Cascading Behavior Graphs for Accurate Multi-Behavior Recommendation
}
\author[G. Ko et al.]{%
Geonwoo Ko\thanks{Two first authors have contributed equally to this work.}\footnotemark[1]\\
\institute{Soongsil University}\\
\email{geonwooko@soongsil.ac.kr}\And
Minseo Jeon\footnotemark[1]\\
\institute{Soongsil University}\\
\email{minseojeon@soongsil.ac.kr}\And
Jinhong Jung\thanks{Corresponding author.}\footnotemark[2]\\
\institute{Soongsil University}\\
\email{jinhong@ssu.ac.kr}
}
\begin{document}

\maketitle

\begin{abstract}
Multi-behavior recommendation predicts items a user may purchase by analyzing diverse behaviors like viewing, adding to a cart, and purchasing. Existing methods fall into two categories: representation learning and graph ranking. Representation learning generates user and item embeddings to capture latent interaction patterns, leveraging multi-behavior properties for better generalization. However, these methods often suffer from over-smoothing and bias toward frequent interactions, limiting their expressiveness. Graph ranking methods, on the other hand, directly compute personalized ranking scores, capturing user preferences more effectively. Despite their potential, graph ranking approaches have been primarily explored in single-behavior settings and remain underutilized for multi-behavior recommendation.
In this paper, we propose \method, a novel graph ranking method for multi-behavior recommendation. It models the natural sequence of user behaviors (e.g., viewing, adding to cart, and purchasing) through a cascading behavior graph. An iterative algorithm computes ranking scores, ensuring smoothness, query fitting, and cascading alignment. Experiments on three real-world datasets demonstrate that \method outperforms state-of-the-art methods, with up to 9.56\% and 7.16\% improvements in HR@10 and NDCG@10, respectively. Furthermore, we provide theoretical analysis highlighting its effectiveness, convergence, and scalability, showcasing the advantages of graph ranking in multi-behavior recommendation.
\end{abstract}

\section{Introduction}
\label{sec:introduction}
% problem introduction
Given graphs of multi-behavior interactions, how can we accurately rank items that a user is likely to purchase?
Multi-behavior recommendation~\cite{LoniPLH16} aims to recommend items to be purchased by a specific user, analyzing plentiful interactions across various user behaviors. 
Unlike early recommender systems~\cite{KorenBV09,RendleRFGS12,HeHDWLZW20} that rely on single-behavior interactions, the multi-behavior recommendation can more precisely capture user preferences for the target behavior (e.g., \textit{purchase}) by leveraging rich information from auxiliary behaviors (e.g., \textit{viewing} and \textit{adding to cart}). 
As a result, it has recently gained significant attention from data mining communities~\cite{JinJGHJL20,LeeLKSJ24xxaw,LiLCYLLD24,GaoGHGCFLCJ19,ChengCHLZGP23fqvn,YanYCSSP23} across various industrial domains, including streaming services, e-commerce, social media, and content aggregation.

% personalized graph ranking for recommender systems
To provide a recommendation list for a user, it is essential to calculate personalized ranking scores on items w.r.t. that user, particularly by analyzing user-item interactions, represented as a bipartite graph between users and items, based on the assumption\footnote{Collaborative filtering assumption indicates that users with similar preferences in the past will continue to have similar preferences in the future.} underlying collaborative filtering (CF)~\cite{SchaferFHS07}.
Existing methods for obtaining personalized rankings on graph data fall into two main categories: 
\textit{representation learning methods} and \textit{graph ranking methods}.
The former focuses on extracting representation vectors of users and items from the data, which are then used to predict the scores, while the latter aims to directly compute the scores by analyzing the relationships between users and items on graphs.

% previous methods for representation learnings
Many researchers have recently made tremendous efforts to develop representation learning methods within the CF framework for multi-behavior recommendation.
Their previous studies~\cite{LeeLKSJ24xxaw, ChengCHLZGP23fqvn, MengZGGLZZTLZ23, YanYCSSP23, MengMZYZL23} extend matrix factorization (MF)~\cite{KorenBV09} and graph neural networks (GNNs)~\cite{WangWHWFC19,HeHDWLZW20} to encode user and item embeddings for multi-behavior interactions, optimizing them to rank positive items higher than negative ones~\cite{RendleRFGS12}.
In particular, this approach allows models to 1) easily incorporate the inductive bias of multi-behavior (e.g., a natural sequence of user behaviors, such as viewing, adding to cart, and purchasing) in their encoding step~\cite{ChengCHLZGP23fqvn, MengMZYZL23, LiuXWY00024}, and 2) leverage these embeddings for multi-task learning to enhance generalization power~\cite{MengZGGLZZTLZ23,ZhangBCSYGWHH24,GaoGHGCFLCJ19}.
However, their recommendation quality remains limited because these methods are prone to producing over-smoothed embeddings\footnote{Over-smoothing refers to the phenomenon where node representations become too similar to each other.} under the assumption of CF~\cite{LiuGJ20,XiaHSX23,LiLCYLLD24}, particularly when utilizing GNNs, which limits their expressiveness.
Furthermore, the optimization across all users often prioritizes  users or items with a large number of multi-behavior interactions~\cite{YinCLYC12,abs-2410-04830}, limiting the ability to discover items in the long-tail distribution.

%, unlike graph ranking, which individually optimizes the scores for each user

% previous methods for graph ranking
Unlike representation learning, graph ranking methods directly generate ranking scores for items with respect to a specific user.
Traditional methods~\cite{PanPYFD04,BrinB98,DengDLK09kqpg,HeHGKW17} for graph ranking have focused on single-behavior graphs and smooth the ranking scores of neighbors\footnote{This follows the smoothness assumption~\cite{ZhouZBLWS03,AgarwalA06}, which states that the ranking of a node is influenced by the rankings of its neighbors.}, while incorporating information from a querying user (e.g., interacted items) to generate ranking scores, each with its own unique design.
In particular, the smoothness property helps these methods effectively identify similar nodes to the querying node, whose performance in collaborative filtering with implicit feedback has been empirically shown to outperform the representation learning methods in previous studies~\cite{ParkPJK17,HeHGKW17,GoriGP07,LeeSKLL11}.
However, relying on just one behavior (e.g., view) may fail to accurately capture users’ genuine interest in the target behavior (e.g., purchase), and the graph ranking approach for multi-behavior recommendation remains underexplored.

In this work, we explore the graph ranking approach for multi-behavior recommendation, and propose \method, a novel personalized graph ranking method tailored for it.
To leverage the semantics of multi-behaviors of users, we first construct a \textit{cascading behavior graph} by linking behavior bipartite graphs between users and items in the order of a natural (or cascading) sequence of behaviors (e.g., \texttt{view} $\rightarrow$ \texttt{cart} $\rightarrow$ \texttt{buy}), where later behaviors exhibit stronger user preferences for the target behavior compared to earlier ones.
We then design our ranking model and its iterative algorithm that produce ranking scores along the cascading behavior graph, ensuring the smoothness for CF and fitting the query information on the current behavior while incorporating scores from the previous behavior.
Through this process, our ranking scores precisely capture users’ preferences for the target behavior while leveraging the graph structure formed by multi-behavior interactions.
%
% contribution list
Our main contributions are summarized as follows:
\begin{itemize}[leftmargin=9mm,noitemsep]
    \item {
        We propose \method, a new ranking model on a cascading behavior graph for accurate multi-behavior recommendation, and develop an iterative algorithm for computing our ranking scores.
    }
    \item {
        We theoretically analyze our ranking model and algorithm in terms of cascading effect, convergence, and scalability.
    }
    \item {
        We conduct extensive experiments on three real-world datasets for multi-behavior recommendation, comparing \method with state-of-the-art ranking and representation learning methods.
    }
\end{itemize}

%\footnote{We will make the datasets used in this work, as well as the code for \method, publicly available upon acceptance.}
Our research findings in this paper reveal the following strengths of \method:
\begin{itemize}[leftmargin=9mm,noitemsep]
    \item{
        \textbf{Accurate}:
        Our method provides accurate multi-behavior recommendation, achieving higher accuracy than competitors in HR@$k$ and NDCG@$k$ across various values of $k$, with improvements of up to 9.56\% in HR@10 and 7.16\% in NDCG@10.
    }
    \item{
        \textbf{Reliable}:
        The convergence of our iterative algorithm is guaranteed, producing reliable rankings.
    }
    \item{
        \textbf{Scalable}:
        The running time of our algorithm scales linearly with the number of interactions.
    }
\end{itemize}

\def\UrlFont{\ttfamily}
For reproducibility, the code and the datasets are publicly available
at \url{https://github.com/geonwooko/CascadingRank}.
The rest of the paper is organized as follows. 
We review previous methods in Section~\ref{sec:related}.
After introducing preliminaries in Section~\ref{sec:preliminaires}, we describe our proposed \method in Section~\ref{sec:proposed}.
In Sections~\ref{sec:experiments}~and~\ref{sec:conclusion}, we present our experimental results and conclusions, respectively.

\section{Related Work}
\label{sec:related}
In this section, we review previous studies for ranking in recommendation systems: 1) representation learning methods and 2) graph ranking methods.

\subsection{Representation Learning Methods for Recommendation}
\label{sec:related:rl}
% overview for subsection level (abstract level 1)
Representation learning extracts latent representation vectors (or embeddings) of users and items from user-item interactions, using these embeddings to yield scores for recommendations.
% overview for MF and GNN (abstract level 2)
Early research on single-behavior recommendation modeled user-item interactions based on a specific behavior, such as viewing or rating, using Matrix Factorization (MF) or Graph Neural Networks (GNNs) within the framework of collaborative filtering.
% detail of MF (detail level 1)
MF~\cite{RendleRFGS12} decomposes a user-item interaction matrix into low-dimensional user and item representations, optimizing Bayesian Personalized Ranking (BPR)~\cite{RendleRFGS12} loss to rank a user’s consumed items higher than unconsumed ones.
% detail of GNN (detail level 1)
GNNs~\cite{KipfKW16} learn the embeddings from a bipartite graph of the user-item interactions through message-passing, which aggregates embeddings from neighboring nodes.
% detail of GNN (detail level 2)
For example, NGCF~\cite{WangWHWFC19} employs GNNs to extract user and item embeddings, effectively modeling high-order connectivity through multiple rounds of message-passing. 
LightGCN~\cite{HeHDWLZW20} simplifies the message-passing mechanism used in NGCF by removing several learning components (e.g., feature transformation and non-linear activation) to reduce overfitting and improve  efficiency.

% method for multi-behavior
However, relying solely on a single behavior is limited in effectively capturing user preferences, especially in multi-behavior recommendation, because such interaction data is insufficient, and it does not account for the distinct semantics associated with each behavior.
To fully leverage rich interactions from  multiple behaviors, recent studies~\cite{MengZGGLZZTLZ23} have extended GNN-based approaches to model multi-behavior interactions.
For example, MBGCN~\cite{JinJGHJL20} constructs a unified graph from multi-behavior interactions, assigning a specific behavior type to each edge, and then uses heterogeneous GNNs to learn representations from the graph.
MB-HGCN~\cite{YanYCSSP23} hierarchically learns representations by using GNNs, starting from a unified graph and progressing to behavior-specific graphs.
MuLe~\cite{LeeLKSJ24xxaw} proposes a multi-grained graph learning framework to capture diverse aspects between behaviors, exploiting graph attention to denoise uncertain auxiliary interactions.

Several researchers~\cite{LiuXWY00024, YanCGSLSL24, ChengCHLZGP23fqvn} have exploited \textit{cascading pattern} as an inductive bias, a natural sequence of user behaviors (e.g., a user first views an item, adds it to a cart, and then makes a purchase), where the later a behavior occurs in the sequence, the greater its influence on the target behavior.
Specifically, MB-CGCN~\cite{ChengCHLZGP23fqvn} directly leverages the cascading pattern in its representation learning with GNNs, sequentially refining user and item embeddings by learning from each behavior graph in the order of the sequence.
PKEF~\cite{MengMZYZL23} enhances the cascading graph learning process by incorporating signals from each behavior graph.
HEC-GCN~\cite{yin2024hecgcn} learns the structure of behavior-specific hypergraphs of users or items, and performs a cascading learning on those hypergraphs at the coarse-grained level, combining this with the cascading learning of behavior-specific graphs at the fine-grained level.

% limitation (common)
However, the existing methods are likely to produce over-smoothed embeddings due to the assumptions in collaborative filtering and GNNs~\cite{LiuGJ20,XiaHSX23,LiLCYLLD24} that encourage the embeddings of connected nodes to be similar.
Furthermore, they optimize a likelihood across all users, which causes a bias toward heavy users with a large number of interactions in their learning~\cite{YinCLYC12,abs-2410-04830}.
Due to these issues, the representation learning methods produce embeddings with limited expressiveness, hindering their ability to precisely capture user preferences from multi-behavior interactions.

\subsection{Graph Ranking Methods for Recommendation}
\label{sec:related:graph}
% overview for subsection level
Graph ranking aims to directly produce ranking scores on items by analyzing the relationships between nodes (e.g., users and items) within a graph, and it has been widely used for recommending items to querying users on a graph of user-item interactions~\cite{GoriGP07,DengDLK09kqpg,HeHGKW17}, as well as in other domains such as search~\cite{KleinbergK99,BrinB98}, social network analysis~\cite{HaveliwalaH02, JungJJSK16,JungJK20}, and more~\cite{ValdeolivasVTNPOLCRB19,CowenCIRS17,JinJK19}.

% overview of paragraph level (ranking for general graphs)
Numerous ranking models~\cite{PanPYFD04,KleinbergK99,DengDLK09kqpg,HeHGKW17,MaMGZ08,BrinB98,HaveliwalaH02} for graphs have been proposed, each with its own assumptions about ranking scores, most of which focus on single-behavior interactions (e.g., clicks or ratings) for recommendation~\cite{GoriGP07,ParkPJK17,HeHGKW17,NikolakopoulosNK19,TianJ13}.
%
%RWR
For example, random walk with restart (RWR)~\cite{PanPYFD04, MaMGZ08,ShinJSK15,JungPSK17,LeeLJ22}, a variant of PageRank~\cite{BrinB98}, exploits a random surfer that stochastically either performs a random walk to neighboring nodes or restarts at a querying node, yielding personalized ranking scores, with more frequently visited nodes being ranked higher, across various types of graphs~\cite{JungJJSK16,JungJPK21,ChunLSJ24}.
%HITS
Hyperlink-induced topic search (HITS)~\cite{KleinbergK99} assumes that good hubs links to good authorities, and iteratively updates hub and authority scores for each node based on the graph structure. 

% overview of paragraph level (ranking for bipartite graphs)
Several methods have been developed specifically for bipartite graphs (e.g., user-item graphs).
%CoHITS
For example, Deng et al.~\cite{DengDLK09kqpg} developed Co-HITS, a generalized method combining HITS and RWR for bipartite graphs, controlled by flexible personalized parameters, which results in scores suited for the input graph.
%BiRank
BiRank~\cite{HeHGKW17} follows smoothness convention~\cite{ZhouZBLWS03,AgarwalA06} on bipartite graphs, i.e., a node on one side should be ranked higher if it is linked to higher-ranked nodes on the other side, while incorporating prior information (e.g., consumed items) and diminishing the influence of high-degree nodes during the ranking process.

% limitations of graph ranking
However, the previous methods have dealt with ranking on graphs of single-behavior interactions, and relying on a single behavior (e.g., clicks) struggles to capture the genuine intent of users regarding the target behavior (e.g., purchases).
Someone might naively apply those methods to a unified graph of all behaviors, but this approach loses the semantics of each behavior and overlooks important patterns, such as cascading sequences, resulting in limited ranking scores.

% special mention on BPMR (similar, but outside to this category)
While it is not a graph ranking method, Li et al.~\cite{LiLCYLLD24} have recently proposed BPMR, a pattern-mining approach that enumerates diverse patterns of behavioral paths and estimates the probabilities of items being purchased in a Bayesian manner.
Although BPMR outperforms GNN-based representation learning methods, its performance is limited because it considers only a few steps of paths, ignoring the overall graph structure, and requires heavy computational costs for enumeration.

\section{Preliminaries}
\label{sec:preliminaires}

We introduce preliminaries on basic notations, and the problem definition addressed in this work. 

\subsection{Notations}
We describe the basic notations frequently used in this paper, with the related symbols summarized in Table~\ref{tab:symbols}.

\smallsection{Vector and matrix}
We use lowercase bold letters for vectors (e.g., $\vect{a}$) and uppercase bold letters for matrices (e.g., $\vect{A}$).
The $i$-th entry of vector $\vect{a}$ is denoted as $\vect{a}(i)$.
The entry in the $i$-th row and $j$-th column of matrix $\mat{A}$ is denoted as $\mat{A}(i, j)$.
The $i$-th row vector of $\mat{A}$ is denoted as $\mat{A}(i)$.

\smallsection{User-item interactions}
Let $\users$ and $\items$ denote the sets of users and items, where $|\users|$ and $|\items|$ are the numbers of users and items, respectively.
Suppose $\B = \{\texttt{view}, \texttt{cart}, \cdots, \texttt{buy}\}$ is the set of behaviors, and let $b_{t}$ denote the target behavior (e.g., $\texttt{buy}$).
If user $u$ has interacted with item $i$ on behavior $b \in \B$, a pair $(u, i)$ is included in $\Eb$, the set of user-item interactions on $b$.

\smallsection{User-item bipartite graphs}
A user-item bipartite graph on behavior $b$ is denoted by $\Gb = (\Ub, \Ib, \Eb)$, where $\Ub$ and $\Ib$ are the copies of $\users$ and $\items$, respectively.
Its bi-adjacency matrix is denoted by $\Ab \in \mathbb{R}^{|\users| \times |\items|}$, where $\Ab(u, i)$ is $1$ if the edge between $u$ and $i$ is in $\Eb$; otherwise, $0$.
Let $\NIb{u}$ be the set of neighboring nodes of $u \in \Ub$, where the neighbors belong to $\Ib$, and its size $d_b(u) = |\NIb{u}|$ is the degree of $u$ in $\Gb$. 
Similarly, $\NUb{i}$ is the set of neighbors of $i \in \Ib$, and $d_b(i) = |\NUb{i}|$ is the degree of $i$ in $\Gb$.

 \begin{table}
    %\vspace{-10mm}
    \centering
    \small
    \caption{\label{tab:symbols}Frequently-used symbols.}
    \begin{tabular}{c|l}
        \toprule
        \textbf{Symbol} & \textbf{Description} \\
        \midrule
        %$\mat{A} \in \mathbb{R}^{N \times M}$ & matrix\\
        $\users$ and $\items$ & sets of users or items, resp.\\
        $\B$ & set of behaviors \\         
        $b_t$ & target behavior (e.g., \texttt{buy}) \\ 
        $q$ & querying user \\ 
        $\C$ & cascading sequence of behaviors, i.e.,  $\C= (b_1 \rightarrow b_2 \rightarrow \cdots 
        \rightarrow b_t)$ \\
        $\Ub$ & set of users on behavior $b$ (i.e., copy of $\users$) \\
        $\Ib$ & set of items on behavior $b$ (i.e., copy of $\items$) \\
        $\Eb$ & set of user-item interactions on behavior $b$\\
        $\Gb$ & user-item graph on behavior $b$, i.e.,  $\Gb = (\Ub, \Ib, \Eb)$ \\
        $\GC$ & cascading behavior graph\\
        $\mat{A}_{b} \in \mathbb{R}^{|\Ub| \times |\Ib|}$ & bi-adjacency matrix of $\Gb$\\
        $\NIb{u}$ and $\NUb{i}$ & sets of neighbors of $u \in \Ub$ and $i \in \Ib$, resp. \\
        $\Dub$ and $\Dib$ & diagonal degree matrices of users and items on $\Gb$, resp.\\
        $\Abnorm$ & symmetrically normalized matrix of $\Ab$, i.e., $\Abnorm = \Dub^{-1/2} \Ab \Dib^{-1/2}$\\ 
        $\rub$ and $\rib$ & ranking score vectors of users and items on $\Gb$, resp. \\
        $\qub$ and $\qib$ & query vectors of users and items on $\Gb$, resp. \\
        $\alpha$ and $\beta$ & strengths of query fitting and cascading alignment \\
        $\gamma$ & strength of smoothness, i.e., $\gamma = 1 - \alpha - \beta$\\
        %$K$ & maximum number of iterations for convergence
        \bottomrule
    \end{tabular}
\end{table}   

\subsection{Problem Definition}
We describe the formal definition of the problem on multi-behavior recommendation as follows:
\begin{problem}[Personalized Ranking for Multi-behavior Recommendation]
{\color{white}T}
\label{prob:mbr}
\begin{itemize}[leftmargin=9mm,noitemsep]
    \item {
        \textbf{Input}: the sets $\users$ and $\items$ of users and items, the set of multi-behavior interactions, i.e., $\set{E} = \{\set{E}_{b} \: | \: b \in \B \}$, and a querying user $q \in \users$,
    }
    \item {
        \textbf{Output}: a ranking score $\vect{r}_{u}(i)$, meaning the likelihood of the user $q$ performing the target behavior $t$ (e.g., \texttt{buy}) for item $i$.
    }
\end{itemize}
\end{problem}
For each querying user, the recommendation list is generated by sorting items in descending order based on their ranking scores.

\subsection{Personalized Ranking on Bipartite Graphs}
\label{sec:preliminaires:birank}
Given a user-item bipartite graph $\graph=(\users, \items, \edges)$ with single-type interactions, traditional ranking models~\cite{PanPYFD04,BrinB98,DengDLK09kqpg,HeHGKW17} aim to calculate personalized ranking scores regarding a querying user $q$ by analyzing the relationships in $\edges$ between users and items.
For this purpose, most of them follow \textit{smoothness assumption} and \textit{query fitting}, where the former assumes that a node should be ranked high if it is linked to higher-ranked nodes, while the latter prioritizes (or fits) the prior belief of $q$ during the ranking process.

Let $\ru \in \mathbb{R}^{|\users|}$ and $\ri \in \mathbb{R}^{|\items|}$ denote the ranking score vectors for the nodes in $\users$ and $\items$, respectively, with respect to $q$.
For each node $u \in \users$ and $i \in \items$, their respective scores, $\ru(u)$ and $\ri(i)$, are represented as follows:
\begin{align}
    \label{eq:birank}
    \begin{split}
    \ru(u) &= (1-\lambda_{\users}) \cdot
    \Big(
         \sum_{i \in \NIb{u}} \wiu \cdot \ri(i) \
    \Big) + \lambda_{\users}\cdot\qu(u), \\
    \ri(i) &= \underbrace{(1-\lambda_{\items}) \cdot \Big(
    \sum_{u \in \NUb{i}}{\wui \cdot \ru(u)} 
    \Big)}_{\text{Smoothing ranking scores}}
    + \underbrace{\lambda_{\items} \cdot \qi(i), 
        {\color{white}\Biggl(_{B_{B_{B}}}\!\!\!\!\!\!\!\!\!\!\!\!\!\!}
    }_{\text{Fitting queries}} 
    \end{split}
\end{align}
where $\qu(u)$ and $\qi(i)$ represent the prior beliefs at nodes $u$ and $i$, respectively, given by the querying user $q$.
The first term smooths (or aggregates) the scores of the target node’s neighbors with normalized edge weights $\wiu$ and $\wui$, while the second term injects the prior beliefs in $\qu(u)$ and $\qi(i)$ to the ranking scores. 
Their contributions are adjusted by $\lambda_{\users} \in [0,1]$ and $\lambda_{\items} \in [0,1]$, called personalized parameters for users and items, respectively. 
The ranking models iteratively refines these scores from their initial values until convergence.
%{\color{red}
Note that different designs of 1) the normalized edge weights $\{\wui, \wiu\}$, 2) the personalized parameters $\{\lambda_{\users}, \lambda_{\items}\}$, or 3) the query vectors $\{\qu, \qi\}$ lead to different ranking models (refer to~\cite{HeHGKW17} for detailed configuration information).

\section{Proposed Method}
\label{sec:proposed}
In this section, we propose \method, a novel graph ranking method for effective multi-behavior recommendation.
%The overall architecture of \method is described in Figure~\ref{}.

\subsection{Overview}
\label{sec:proposed:overview}
We summarize the technical challenges addressed in this work for accurate multi-behavior recommendation as follows:

\begin{enumerate}[leftmargin=8mm,noitemsep]
    \item[C1.]{
        \textbf{Achieving accurate recommendations.}
        The representation learning methods for multi-behavior recommendation have limited expressive power, leading to suboptimal accuracy. 
        How can we enhance recommendation accuracy in multi-behavior settings, especially compared to these representation learning methods?
    }
    \item[C2.]{
        \textbf{Leveraging multi-behavior characteristics.}
        Graph ranking can be a promising alternative to representation learning, but the traditional ranking methods are limited in multi-behavior settings because they are designed for single-behavior graphs.
        Which aspects of multi-behavior interactions can be leveraged for effective graph ranking?
    }
    \item[C3.]{
        \textbf{Ensuring stability and scalability.}
        Ensuring stable ranking results and scalable computation is crucial when designing a personalized ranking method on graphs.
        How can we develop a graph ranking algorithm tailored for multi-behavior recommendation that guarantees both stability and scalability?
    }
\end{enumerate}

We propose the following ideas that addresses the aforementioned technical challenges:

\begin{enumerate}[leftmargin=8mm,noitemsep]
    \item[A1.]{
        \textbf{Personalized graph ranking.}
        We adopt a graph ranking approach for accurate multi-behavior recommendation. 
        Our ranking design adheres to collaborative filtering by smoothing ranking scores and fitting queries, while mitigating the impact of high-degree nodes on our ranking scores.
    }
    \item[A2.]{
        \textbf{Cascading behavior graph and cascading alignment.}
        We leverage a cascading sequence of user behaviors, constructing a cascading behavior graph that sequentially connects behavior graphs, and measure rankings on it. 
        This smoothly aligns the ranking scores of the current behavior with those of the previous one, capturing the cascading effect.
    }
    \item[A3.]{
        \textbf{Iterative computation.}
        We iteratively refine the ranking scores of \method for a querying user, guiding them to converge (i.e., reach a stable state) while ensuring scalable computation w.r.t. the number of interactions.
    }
\end{enumerate}

\begin{figure}[t]
    \centering
    \includegraphics[width=1.0\linewidth]{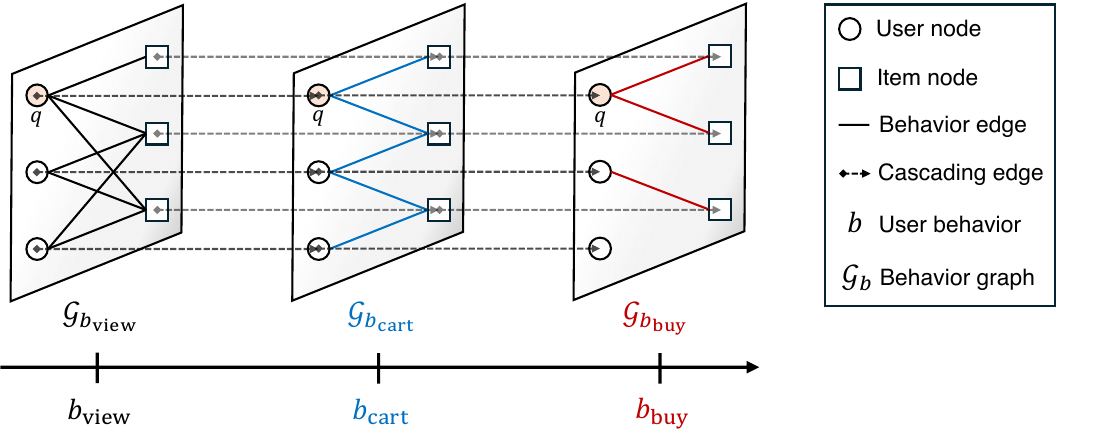}
    \caption{
        \label{fig:overview}
        Example of a cascading behavior graph $\GC=(\mathcal{G}_{b_{\texttt{view}}} \rightarrow \mathcal{G}_{b_{\texttt{cart}}} \rightarrow \mathcal{G}_{b_{\texttt{buy}}})$, given a cascading sequence $\C = (b_{\texttt{view}} \rightarrow b_{\texttt{cart}} \rightarrow b_{\texttt{buy}})$ of behaviors where $b_{\texttt{buy}}$ is the target behavior (i.e., $b_t$), and $q$ is the querying user node.
        Starting from $\mathcal{G}_{b_{\texttt{view}}}$, \method calculates ranking scores w.r.t. $q$ along $\GC$, and returns ranking scores of items at the target behavior $b_t$.
    }
\end{figure}

To model a cascading sequence of behaviors, we first construct a \textit{cascading behavior graph} by connecting behavior graphs in the order of the sequence, as shown in Figure~\ref{fig:overview}.
Then, we measure the ranking scores along the cascading behavior graph while considering the querying user’s past history for each behavior. 
We use the ranking scores on the target (or last) behavior to recommend items to the querying user.

\subsection{Cascading Behavior Graph Construction}
\label{sec:proposed:cbg}

% motivation
To design a graph ranking model for multi-behavior interactions, one might simply merge the interactions, resulting in a unified multigraph where each edge represents either a unified or distinct type of behavior, and then measure the ranking scores on the graph. 
However, this naive approach may fail to fully capture the semantics of user behaviors. 
Especially, it is worth noting that behaviors often occur in a certain sequence~\cite{LiuXWY00024, yin2024hecgcn, ChengCHLZGP23fqvn, YanCGSLSL24}; for example, a user first views an item, adds it to a cart, and then purchases it.
We aim to inject this information of user behaviors into our ranking scores. 
%
% details
For this purpose, we first define a cascading sequence as follows:
\begin{definition}[Cascading Sequence]
\label{def:cascading_sequence}
Given a set $\B$ of user behaviors, a cascading sequence $\C$ is defined as follows:
\begin{equation*}
    \C \coloneqq (b_1 \rightarrow b_2 \rightarrow \cdots \rightarrow b_{|\B|}), 
\end{equation*}
where each behavior $b_{i} \in \B$ in the sequence is distinct. 
In general, a sequence of user behaviors leads to the target behavior $b_t$ (e.g., buy); thus, the last behavior $b_{|\B|}$ of $\C$ represents $b_t$ in this case.
\hfill\qedsymbol
\end{definition}

We consider a natural cascading sequence $\C$ of behaviors, i.e., $\C = (\texttt{view} \rightarrow \texttt{cart} \rightarrow \texttt{buy})$ for the aforementioned case. 
From the cascading sequence, we construct the cascading behavior graph $\GC$ as follows:
\begin{definition}[Cascading Behavior Graph]
Given a cascading sequence $C = (b_1 \rightarrow b_2 \rightarrow \cdots \rightarrow b_{t-1} \rightarrow b_t)$ and a set $\{\Gb \: | \: b \in \B \}$ of behavior graphs, the cascading behavior graph $\GC$ is defined as follows:
\begin{align}
    \label{eq:cbg}
    \GC \coloneqq (\Gind{b_1} \rightarrow \Gind{b_2} \rightarrow \cdots \rightarrow
    \Gind{b_{t-1}} \rightarrow \Gind{b_t}), 
\end{align}
where each $\Gind{b'} \rightarrow \Gind{b}$ indicates a node-wise connection that forwardly links each node in $\Gind{b'}$ of the previous behavior $b'$ to the corresponding node in  $\Gind{b}$ of the next behavior $b$.
\hfill\qedsymbol
\end{definition}

Figure~\ref{fig:overview} presents an example of the cascading behavior graph $\GC$. 
By doing so, we ensure that information from a previous behavior influences that of the subsequent behavior in the sequence $\C$, thereby modeling temporal dynamics similar to those observed in dynamic graphs~\cite{LeeLJ22}.
Note that for this purpose, we do not allow reverse-direction connections such as $\Gind{b'} \leftarrow \Gind{b}$ in $\GC$.

\subsection{\method: Personalized Ranking on a Cascading Behavior Graph}
\label{sec:proposed:cascrank}

In this section, we design our ranking model \method on the cascading behavior graph $\GC$ for multi-behavior recommendation. 
Our main idea for estimating ranking scores along the cascading sequence is to smooth the ranking scores of neighboring nodes while incorporating query-specific information for the current behavior $b$, as inspired by Equation~\eqref{eq:birank}, and to propagate and smooth the ranking scores from the previous behavior $b'$.

\begin{figure}[t]
    \centering
    \includegraphics[width=0.9\linewidth]{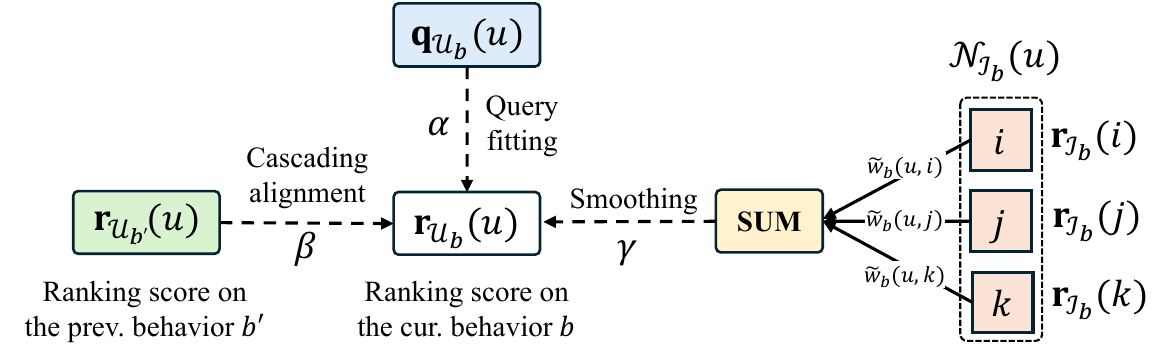}
    \caption{
        \label{fig:formulation}
        Illustration of the computation on the ranking score $\rub(u)$ of user $u$ of Equation~\eqref{eq:cascrank:inst}, with details of the notations provided in Section~\ref{sec:proposed:cascrank}. 
        %The illustration of the computation of $\rib(i)$ is similar to this figure.”
    }
\end{figure}Suppose we consider $\Gind{b'} \rightarrow \Gind{b}$ in the cascading behavior graph $\GC$ to obtain ranking scores $\rub(u)$ and $\rib(i)$ of user $u$ and item $i$ for the current behavior $b$ with respect to the querying user $q$. 
These scores are recursively represented as follows:
\begin{align}
    \label{eq:cascrank:inst}
    \begin{split}
    \rub(u) &= (1-\alpha-\beta) \cdot
    \biggl(
         \sum_{i \in \NIb{u}} \wuib \cdot \rib(i) \
    \biggr) + \alpha\cdot\qub(u) + \beta\cdot\rubp(u), \\
    \rib(i) &= \underbrace{(1-\alpha-\beta) \cdot \biggl(
    \sum_{u \in \NUb{i}}{\wiub \cdot \rub(u)} \biggr)}_{_{\substack{\text{Ranking score} \\ \text{smoothing}}}}
    + \underbrace{\alpha \cdot \qib(i)
        {\color{white}\Biggl(_{B_{B_{B}}}\!\!\!\!\!\!\!\!}
    }_{\substack{\text{Query} \\ \text{fitting}}}  + \underbrace{\beta \cdot \ribp(i), {\color{white}\Biggl(_{B_{B_{B}}}\!\!\!}}_{\substack{\text{Cascading} \\ \text{alignment}}}
    \end{split}
\end{align}
where the equation of $\rub(u)$ is illustrated in Figure~\ref{fig:formulation}, and that of $\rib(i)$ is similar to the figure.
The first term in $\rub(u)$ follows the smoothness assumption by aggregating the ranking scores of each neighbor in $\NIb{u}$ of node $u$ with weight $\wuib$, where it is adjusted by $\gamma \coloneqq 1 - \alpha -\beta$, referred to as the strength of the smoothness.
The second term aims to align with $\qub(u)$ and $\qib(i)$, which encapsulate the information associated with nodes $u$ and $i$ with respect to the querying user $q$ for behavior $b$, where $\alpha$ controls the strength of the query fitting.
The third term, called \textit{cascading alignment}, smoothly aligns $\rub(u)$ with $\rubp(u)$ from the previous behavior $b’$, where $\beta$ controls the strength of this alignment.
The hyperparameters $\alpha$ and $\beta$ are within the range $(0, 1)$, satisfying $0 < \alpha + \beta < 1$.
%
%These constraints enforce that the scale of ranking score smoothing remains within $[0,1]$.
Note that the equation of $\rib(i)$ is interpreted similarly to that of $\rub(u)$, and we set $\alpha = \lambda_{\mathcal{U}} = \lambda_{\mathcal{I}}$, unlike in Equation~\eqref{eq:birank}, to reduce the number of hyperparameters.

\smallsection{Normalization weights}
As discussed in Section~\ref{sec:preliminaires:birank}, the design choice for the normalization edge weights $\wuib$ and $\wiub$ is one of the crucial factors in designing ranking models.
In this work, we adopt a symmetric normalization that sets these weights when there is an interaction between $u$ and $i$ (i.e., $\Ab(u,i)=1$ and $\Ab^{\top}(i,u)=1$), as follows:
\begin{equation*}
    \wuib = \frac{\Ab(u,i)}{\sqrt{d_b(u)}\sqrt{d_b(i)}} \quad \text{and} \quad \wiub = \frac{\Ab^{\top}(i,u)}{\sqrt{d_b(i)}\sqrt{d_b(u)}}, 
\end{equation*}
where $\Ab$ is the bi-adjacency matrix of $\Gb$, and $d_b(u)$ and $d_b(i)$ are the degrees of user $u$ and item $i$, respectively.
If $\Ab(u, i)$ is $0$, then $\wuib = \wiub$ is set to $0$.
The main reason we choose the symmetric normalization is its effectiveness in eliminating the impact of high-degree nodes when aggregating the ranking scores, especially compared to a stochastic normalization, such as $\wuib = \Ab(u,i)/d_b(i)$.
By doing so, this prevents the estimated ranking scores from being biased toward high-degree nodes (or items) and avoids penalizing low-degree items in a long-tail distribution. For this reason, it has been widely used in graph ranking for recommendation in previous studies~\cite{HeHDWLZW20, HeHGKW17}.
We empirically verify the effectiveness of the symmetric normalization over the stochastic normalization, as shown in Table~\ref{tab:ablation:normalization}.

\smallsection{Query information}
We use the information of $q$ and its interaction history for each behavior $b$ to set $\qub$ and $\qib$ for multi-behavior recommendation. 
Specifically, we set $\qub$ and $\qib$ as follows:
\begin{equation*}
    \qub(u) = \begin{cases}
        1, & \text{if } u = q, \\
        0, & \text{otherwise, }
    \end{cases}
    \quad \text{ and } \quad
    \qib(i) = \begin{cases}
        |\NIb{q}|^{-1}, & \text{if } i \in \NIb{q}, \\
        0, & \text{otherwise, }
    \end{cases}
\end{equation*}
where $\NIb{q}$ denotes the set of items that $q$ has interacted with under behavior $b$.
In other words, $\qub$ aims to enhance personalization for $q$, while $\qib$ incorporates the user's interaction history into the ranking scores.

\smallsection{Vectorization} 
We vectorize the entry-wise form of Equation~\eqref{eq:cascrank:inst} using the following matrix-vector multiplications:
\begin{align}
    \label{eq:cascrank:vect}
    \begin{split}
    \rub &= \gamma \cdot \Abnorm \cdot \rib \
    + \alpha\cdot\qub + \beta\cdot\rubp, \\
    \rib &= \gamma \cdot
        \Abnorm^{\top} \cdot \rub \
     + \alpha\cdot\qib + \beta\cdot\ribp,
    \end{split}
\end{align}
where $\gamma = 1 - \alpha - \beta$, and
$\Abnorm = \Dub^{-1/2} \Ab \Dib^{-1/2}$ is the symmetrically normalized bi-adjacency matrix. 
$\Dub$ and $\Dib$ are the diagonal degree matrices of $\Ub$ and $\Ib$, respectively, defined as follows:
\begin{equation*}
    \Dub = \texttt{diag}\Big(d_{b}(u) \: | \: \forall u \in \Ub \Big), \quad \text{and} \quad 
    \Dib = \texttt{diag}\Big(d_{b}(i) \: | \: \forall i \in \Ib \Big)
\end{equation*}
where $\texttt{diag}(\cdot)$ returns a diagonal matrix with the input values on its diagonal.
Then, Equation~\eqref{eq:cascrank:vect} can be further compactly represented as follows:
\begin{align}
    \label{eq:cascrank:mat}
    \underset{\rb}
    {\begin{bmatrix}
        \rub\\\rib
    \end{bmatrix}}=
    \gamma \cdot 
    \underset{\tilde{\mathbf{\mathcal{A}}}_b}{\begin{bmatrix}
        \mat{0}_{\Ub} & \Abnorm \\
        \Abnorm^{\top} & \mat{0}_{\Ib}
    \end{bmatrix}}
    \underset{\rb}
    {\begin{bmatrix}
        \rub\\\rib
    \end{bmatrix}}+
    \alpha \cdot
    \underset{\qb}
    {\begin{bmatrix}
        \qub\\\qib
    \end{bmatrix}}+
    \beta\cdot
    \underset{\vect{r}_{b'}}
    {\begin{bmatrix}
        \rubp \\ \ribp
    \end{bmatrix}},
\end{align}
where $\mat{0}_{\Ub} \in \mathbb{R}^{|\Ub| \times |\Ub|}$ and $\mat{0}_{\Ib} \in \mathbb{R}^{|\Ib| \times |\Ib|}$ are zero matrices, and $\rb$ contains the ranking scores for all users and items, whose closed-form solution is given as follows:
\begin{align}
    \label{eq:cascrank:closed}
    \begin{split}
    \big(\mat{I} - \gamma \cdot \tilde{\mathbf{\mathcal{A}}}_b \big) \cdot \rb &= \alpha \cdot \qb + \beta \cdot \vect{r}_{b'}, \\
    \mat{r}_{b} &= \mat{L}_b^{-1} \cdot
    \big(
        \alpha \cdot \qb + \beta \cdot \vect{r}_{b'}
    \big),
    \end{split}
\end{align}
where 
$\mat{I}$ is an identity matrix, and
$\mat{L}_b = \mat{I} - \gamma \cdot \tilde{\mathbf{\mathcal{A}}}_b$ is invertible for $0 < \gamma < 1$ because it is positive definite
\footnote{
The symmetrically normalized adjacency matrix $\mathcal{A}_b$ is symmetric with eigenvalues $\lambda$ in $[-1, 1]$, and the eigenvalues of $\mathbf{L}_b$, represented as $1 - \gamma \cdot \lambda \in [1 - \gamma, 1 + \gamma]$, are all positive for $0 < \gamma < 1$, indicating that $\mathbf{L}_b$ is positive definite.
}.
Given $\GC = (\Gind{b_1} \rightarrow \Gind{b_2} \rightarrow \cdots \rightarrow \Gind{b_t})$, we first compute $\vect{r}_{b_1}$, initializing the previous score vector $\vect{r}_{b'}$ to $\vect{q}_{b_1}$. 
We then proceed along the chain, repeatedly computing $\rb$ for each behavior, until finally obtaining $\vect{r}_{b_t}$ for the target behavior $b_t$.

\subsection{Iterative Algorithm for \method}
\label{sec:proposed:alg}
Although the ranking scores can be directly computed using Equation~\eqref{eq:cascrank:closed}, the matrix inversion operation of $\mat{L}_{b}^{-1}$ is intractable for large-scale graphs due to its cubic time complexity.
Instead, we employ an iterative approach, such as power iteration, similar to other graph ranking methods~\cite{HeHDWLZW20, HeHGKW17, BrinB98}. 
The main idea is to repeatedly update Equation~\eqref{eq:cascrank:vect} starting from initial values until convergence, which is encapsulated in Algorithm~\ref{alg:method}.
It first initializes $\rubp$ and $\ribp$ as the query vectors (line~\ref{alg:method:init}), as there is no previous behavior for the first behavior in $\GC$.
Then, it computes $\rub$ and $\rib$ for each behavior $b$ along the chain in $\GC$ (lines~\ref{alg:method:outer:for}-\ref{alg:method:outer:forend}).
Suppose we consider $\Gind{b}$ of the link $\Gind{b'} \rightarrow \Gind{b}$ in the chain of $\GC$ (line~\ref{alg:method:outer:for}). 
Then, it symmetrically normalizes $\Ab$ (line~\ref{alg:method:norm}), and then performs the power iteration to obtain $\rub$ and $\rib$ (lines~\ref{alg:method:inner:for}-\ref{alg:method:inner:forend}).
It repeats Equation~\eqref{eq:cascrank:vect} at most $K$ times until the convergence criteria is satisfied (line~\ref{alg:method:convergence}).
After updating $\rubp$ and $\ribp$ for the next iteration (line~\ref{alg:method:update:next}), it continues the power iterations until reaching the last (or target) behavior.
Since we only need the ranking scores of items for the target behavior $b_t$ for Problem~\ref{prob:mbr}, it returns $\vect{r}_{\items_{b_t}}$ at the end (line~\ref{alg:method:return}).
Note that the power iteration converges when $\alpha$, $\beta$, and $\gamma$ are in $(0, 1)$, as guaranteed by Theorem~\ref{theorem:convergence}.

\begin{algorithm}[t]
    %\setstretch{1.1}
    \caption{Iterative algorithm of \method}
    \label{alg:method}
    \begin{algorithmic}[1]
        \small
        \Require 
        \item[] Cascading behavior graph: $\GC$
        \item[] Querying user: $q$
        \item[] Strength of query fitting: $\alpha$
        \item[] Strength of cascading alignment: $\beta$
        \item[] Number of iterations: $K$
        \item[] Convergence threshold: $\epsilon$
        %Set of cascading sequence of behaviors $\C$, set of bi-adjacency matrices for each behavior, set of query vectors for users and items on query $q$
        \Ensure 
        \item[] Ranking score vector $\vect{r}_{\items_{b_t}}$ for items on the target behavior $b_t$ w.r.t. $q$
        \vspace{1mm} 
        \State Let $b'$ and $b$ denote the previous and current behaviors, respectively
        \State Set $\rubp \leftarrow \vect{q}_{\users_{b_1}}$ and $\ribp \leftarrow \vect{q}_{\items_{b_1}}$ \label{alg:method:init}{\Comment{\textbf{\scriptsize
        Initialization for the cascading alignment}}}
        
        \For{\textbf{each} $\Gind{b} \in \mathcal{\GC}$} \label{alg:method:outer:for}
            \State Symmetrically normalize $\Ab$, i.e., $\Abnorm \leftarrow \Dub^{-1/2} \Ab \Dib^{-1/2}$  \label{alg:method:norm}
            \State Set $\rub' \leftarrow \qub$ and $\rib' \leftarrow \qib$ {\Comment{\textbf{\scriptsize Initialization for the power iteration}}} 
            
            \For {$k \leftarrow 1 $ to $K$} \label{alg:method:inner:for} {\Comment{\textbf{\scriptsize Power iteration}}} 
            \label{alg:method:poweriteration}
                \State $\rub \leftarrow \gamma \cdot \Abnorm \cdot \rib' + \alpha\cdot\qub + \beta\cdot\rubp$ \label{alg:method:userscore}
                {\Comment{\textbf{\scriptsize $\gamma = 1-\alpha-\beta$}}} 
                \State $\rib \leftarrow \gamma \cdot\Abnorm^{\top} \cdot \rub'  + \alpha\cdot\qib + \beta\cdot\ribp$
                \label{alg:method:itemscore}
                \If{$\lVert \rub - \rub' \rVert_{1}+ \lVert \rib - \rib' \rVert_{1} \leq \epsilon$} \label{alg:method:convergence}
                    \State \textbf{break}
                \EndIf
                \State Set $\rub' \leftarrow \rub$ and $\rib' \leftarrow \rib$ {\Comment{\textbf{\scriptsize Update for the power iteration}}} 
            \EndFor \label{alg:method:inner:forend}
            \State Set $\rubp \leftarrow \rub$ and $\ribp \leftarrow \rib$ \label{alg:method:update:next}
            {\Comment{\textbf{\scriptsize Update for the cascading alignment}}} 
        \EndFor \label{alg:method:outer:forend}
    \State \textbf{return} $\vect{r}_{\items_{b_t}} \leftarrow \ribp$ \label{alg:method:return}
    \end{algorithmic}
\end{algorithm}

\subsection{Interpretation from the Perspective of Optimization}
\label{sec:proposed:reg}
We interpret the ranking design of \method in the form of an optimization problem, inspired by manifold optimization~\cite{ZhouZBLWS03,AgarwalA06, HeHGKW17}. 
Based on the notations in Equation~\eqref{eq:cascrank:mat}, our ranking model minimizes the following objective function:
\begin{align}
    \label{eq:reg:obj}
    \mathcal{J}(\rb; \mathcal{A}_{b}, \qb, \rbp) = \underbrace{\rb^{\top}(\mat{I} - \mathcal{A}_{b})\rb}_{_{\substack{\text{Ranking score} \\ \text{smoothing}}}} + \underbrace{\theta\lVert\rb - \qb \rVert_{2}^{2}}_{\substack{\text{Query} \\ \text{fitting}}} + \underbrace{\omega\lVert\rb - \rbp \rVert_{2}^{2}}_{\substack{\text{Cascading} \\ \text{alignment}}}, 
\end{align}
where the first term, $\rb^{\top}(\mat{I} - \mathcal{A}_{b})\rb = \sum_{i,j}\mathcal{A}_b(i, j)\cdot\lVert\rb(i) - \rb(j)\rVert_{2}^{2}$, is referred to as \textit{graph Laplacian smoothing}, which aims to measure the difference in the ranking scores between two connected nodes.
$\theta$ and $\omega$ controls the strength of the regularizations of query fitting and cascading alignment, respectively.

The solution $\rb$ obtained by \method minimizes the optimization problem, effectively smoothing the scores of neighboring nodes while also fitting the query information in $\qb$ and the previous scores in $\rbp$.
To prove the claim, we rewrite the objective function as follows:
\begin{align*}
    %\label{eq:reg:obj}
    \begin{split}
    \mathcal{J}(\rb; \mathcal{A}_{b}, \qb, \rbp)  &= 
    \begin{bmatrix}
        \rub^{\top} & \rib^{\top}
    \end{bmatrix}
    \begin{bmatrix}
        \mat{I} & -\tAb \\
        -\tAb^{\top} & \mat{I} 
    \end{bmatrix}
    \begin{bmatrix}
        \rub \\ \rib
    \end{bmatrix} \\
    &+ \theta \cdot \left( \lVert \rub - \qub \rVert_{2}^{2} + \lVert \rib - \qib \rVert_{2}^{2} \right) \\
    &+ \omega \cdot \left(\lVert \rub - \rubp \rVert_{2}^{2} + \lVert \rib - \ribp \rVert_{2}^{2} \right). 
    \end{split}
\end{align*}

Note that the optimization problem of graph Laplacian smoothing with strictly convex regularization terms (e.g., $\lVert \rub - \qub \rVert_{2}^{2}$) is strictly convex~\cite{HeHGKW17}, and thus it has a unique global minimum.
The first-order derivatives with respect to $\rub$ and $\rib$ are represented as follows: 
\begin{align*}
    %\label{eq:reg:grad}
    \begin{split}
    \frac{\partial\mathcal{J}}{\partial\rub} = 
    2\rub - 2\tAb\rib + 2\theta (\rub - \qub) + 2\omega(\rib - \ribp), \\
    \frac{\partial\mathcal{J}}{\partial\rib} = 
    2\rib - 2\tAb^{\top}\rub + 2\theta (\rib - \qib) + 2\omega(\rib - \ribp).
    \end{split}
\end{align*}

By setting each derivative to $0$, it provides the following solutions:
\begin{align*}
    %\label{eq:reg:closed}
    \begin{split}
        \rub = \frac{1}{1+\theta+\omega}\tAb\rib + 
        \frac{\theta}{1+\theta+\omega}\qub + \frac{\omega}{1+\theta+\omega}\rubp, \\
        \rib = \frac{1}{1+\theta+\omega}\tAb^{\top}\rub + 
        \frac{\theta}{1+\theta+\omega}\qib + \frac{\omega}{1+\theta+\omega}\ribp,
    \end{split}
\end{align*}
which are the same as Equation~\eqref{eq:cascrank:vect} if $\theta=\frac{\alpha}{1-\alpha-\beta}$ and $\omega = \frac{\beta}{1-\alpha-\beta}$.

\subsection{Theoretical Analysis}
\label{sec:proposed:analysis}
In this section, we theoretically analyze \method and its iterative algorithm with the following questions:
\begin{enumerate}[leftmargin=9mm,noitemsep]
    \item[T1.] {
        \textbf{Cascading effect.}
        Does our ranking design adhere to the cascading effect, where a behavior occurring later in the sequence has a greater influence on the target behavior?
    }
    \item[T2.] {
        \textbf{Convergence guarantee.}
        Is the convergence of our iterative algorithm guaranteed? 
    }
    \item[T3.] { 
        \textbf{Computational complexity.} Does our iterative algorithm scale linearly with the number of interactions?
    }
\end{enumerate}

\subsubsection{Analysis on Cascading Effect (T1)}
\label{sec:proposed:analysis:casc}
In this section, we investigate the cascading effect of \method by analyzing the ranking score vector $\vect{r}_{b_t}$ on the target behavior $b_t$ based on Equation~\eqref{eq:cascrank:closed} in the following theorem.

\begin{theorem}[Cascading Effect] 
\label{theorem:decay}
The ranking score vector $\vect{r}_{b_t}$ on the target behavior $b_t$ is represented as follows:
\begin{align}
    \label{eq:casc_effect}
    \vect{r}_{b_t} = \left(\sum_{i=0}^{t-1}\beta^{i} \cdot \mat{\mathcal{L}}_{b_t \leftsquigarrow b_{t-i}} \cdot \hat{\vect{q}}_{b_{t-i}} \right) + \beta^{t}\cdot\mat{\mathcal{L}}_{b_t \leftsquigarrow b_1}\cdot\vect{r}_{b_0},
\end{align}
where $\hat{\vect{q}}_{b_i} = \alpha \cdot \vect{q}_{b_i}$, $\mat{\mathcal{L}}_{b_i} = \mat{L}_{b_i}^{-1} = (\mat{I} - \gamma \cdot \tilde{\mathbf{\mathcal{A}}}_{b_i})^{-1}$, $\mat{\mathcal{L}}_{b_j\leftsquigarrow b_i} = \mat{\mathcal{L}}_{b_j} \cdot \mat{\mathcal{L}}_{b_{j-1}}\cdots\mat{\mathcal{L}}_{b_i}$, 
$\Linvprod{i}{i} = \mat{\mathcal{L}}_{b_i}$,
and $\vect{r}_{b_0}$ is an initial vector of ranking scores (i.e., $\vect{q}_{b_1}$).
\end{theorem}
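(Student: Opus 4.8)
The plan is to unroll the closed-form recurrence of Equation~\eqref{eq:cascrank:closed} along the chain $\GC=(\Gind{b_1}\rightarrow\cdots\rightarrow\Gind{b_t})$ and then re-index the resulting sum. First I would rewrite Equation~\eqref{eq:cascrank:closed} compactly as $\rb=\Linv_{b}\bigl(\hat{\vect{q}}_{b}+\beta\cdot\rbp\bigr)$, where $\hat{\vect{q}}_{b}=\alpha\cdot\qb$ and $\Linv_{b}=\mat{L}_{b}^{-1}$; this is legitimate because $\mat{L}_{b}=\mat{I}-\gamma\cdot\tilde{\mathbf{\mathcal{A}}}_{b}$ is positive definite for $0<\gamma<1$ (the footnote after Equation~\eqref{eq:cascrank:closed}), hence invertible. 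Reading this identity at each behavior $b_k$ along the chain, together with the initialization $\vect{r}_{b_0}=\qind{b_1}$ from line~\ref{alg:method:init} of Algorithm~\ref{alg:method}, gives the first-order linear recurrence $\vect{r}_{b_k}=\Linv_{b_k}\bigl(\qindhat{k}+\beta\cdot\vect{r}_{b_{k-1}}\bigr)$ for $k=1,\dots,t$.

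Next I would prove by induction on $k$ that
\begin{equation*}
    \vect{r}_{b_k}\;=\;\Bigl(\sum_{j=1}^{k}\beta^{\,k-j}\cdot\Linvprod{k}{j}\cdot\qindhat{j}\Bigr)\;+\;\beta^{\,k}\cdot\Linvprod{k}{1}\cdot\vect{r}_{b_0},
\end{equation*}
with the conventions $\Linvprod{k}{j}=\Linv_{b_k}\Linv_{b_{k-1}}\cdots\Linv_{b_j}$ and $\Linvprod{j}{j}=\Linv_{b_j}$. The base case $k=1$ is exactly the recurrence $\vect{r}_{b_1}=\Linv_{b_1}\qindhat{1}+\beta\cdot\Linv_{b_1}\vect{r}_{b_0}$. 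For the inductive step, I would substitute the hypothesis for $\vect{r}_{b_k}$ into $\vect{r}_{b_{k+1}}=\Linv_{b_{k+1}}\bigl(\qindhat{k+1}+\beta\cdot\vect{r}_{b_k}\bigr)$, distribute $\Linv_{b_{k+1}}$ over the sum, and use associativity of matrix multiplication together with $\Linv_{b_{k+1}}\cdot\Linvprod{k}{j}=\Linvprod{k+1}{j}$ to merge the leading factor into each nested product; collecting the shifted powers of $\beta$ (the $\qindhat{k+1}$ term supplies the $j=k+1$ summand with $\beta^{0}$, while the isolated $\vect{r}_{b_0}$ term gains one more factor of $\beta$ and one more $\Linv_{b_{k+1}}$) yields the $k+1$ case.

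Finally I would set $k=t$ and apply the change of index $i=t-j$, so that $j=t-i$ ranges over $1,\dots,t$ as $i$ ranges over $t-1,\dots,0$ and $\beta^{\,t-j}=\beta^{\,i}$; this converts $\sum_{j=1}^{t}\beta^{\,t-j}\Linvprod{t}{j}\qindhat{j}$ into $\sum_{i=0}^{t-1}\beta^{\,i}\Linvprod{t}{t-i}\qindhat{t-i}$ while leaving the boundary term $\beta^{\,t}\Linvprod{t}{1}\vect{r}_{b_0}$ unchanged, which is precisely Equation~\eqref{eq:casc_effect}. I do not expect a genuine obstacle here: the only points requiring care are the well-definedness of $\Linv_{b}$ (already guaranteed) and the bookkeeping of the summation index and the powers of $\beta$ in the inductive step. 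The geometric weight $\beta^{\,i}$ attached to $\qindhat{t-i}$ is exactly what makes the statement formalize the cascading effect, since behaviors occurring earlier in $\C$ (larger $i$) influence $\vect{r}_{b_t}$ with exponentially decaying strength.
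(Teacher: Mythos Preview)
Your proposal is correct and follows essentially the same approach as the paper: both unroll the closed-form recurrence of Equation~\eqref{eq:cascrank:closed} by induction along the chain, verify the base case at $b_1$, and push the inductive hypothesis through $\vect{r}_{b_{k+1}}=\Linv_{b_{k+1}}(\qindhat{k+1}+\beta\cdot\vect{r}_{b_k})$ using $\Linv_{b_{k+1}}\cdot\Linvprod{k}{j}=\Linvprod{k+1}{j}$. The only cosmetic difference is that you index the sum by $j$ and re-index to $i=t-j$ at the end, whereas the paper works directly with the $i$-indexed form throughout.
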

\begin{proof}
    We prove the claim by mathematical induction.
    As a base case when $b_t = b_1$,  $\vect{r}_{b_1}$ is represented as follows:
    \begin{align*}
        \vect{r}_{b_1} = \Linvprod{1}{b_1}\cdot\qindhat{1} + \beta\cdot\Linvprod{1}{1}\cdot\vect{r}_{b_0} \Leftrightarrow \vect{r}_{b_1} = \L^{-1}_{b_1}(\alpha \cdot \vect{q}_{b_1}+\beta \cdot \vect{r}_{b_0}).
    \end{align*}
    Thus, it obviously holds for Equation~\ref{eq:cascrank:closed} when applied to $\vect{r}_{b_1}$.
    Suppose the claim holds at behavior $b_t$. 
    Based on Equation~\eqref{eq:cascrank:closed}, $\vect{r}_{b_{t+1}}$ is represented as follows:
    \begin{align*}
        \mat{r}_{b_{t+1}} &= \mat{L}_{b_{t+1}}^{-1} \cdot
    \big(
        \alpha \cdot \vect{q}_{b_{t+1}} + \beta \cdot \vect{r}_{b_{t}}
    \big) \Leftrightarrow
    \vect{r}_{b_{t+1}} = \Linv_{b_{t+1}}\cdot\qindhat{t+1} + \beta \cdot \Linv_{b_{t+1}}\cdot\vect{r}_{b_t}
    \end{align*}
    By replacing $\vect{r}_{b_t}$ of Equation~\eqref{eq:casc_effect} into the above equation, it is represented as follows: 
    \begin{align*}
        \vect{r}_{b_{t+1}} &= \Linv_{b_{t+1}}\cdot\qindhat{t+1} + \beta\cdot \Linv_{b_{t+1}}\cdot\Biggl(\sum_{i=0}^{t-1}\beta^{i}\cdot\Linvprod{t}{t-i}\cdot\qindhat{t-i} + \beta^{t}\cdot\Linvprod{t}{1}\cdot\vect{r}_{b_0} \Biggl),\\
         &= \sum_{i=0}^{t}\beta^{i}\cdot\Linvprod{t+1}{t+1-i}\cdot\qindhat{t+1-i} + \beta^{t+1}\cdot\Linvprod{t+1}{1}\cdot\vect{r}_{b_0}.
    \end{align*}
    If we replace $t+1$ with $k$, then it is represented as follows:
    \begin{align*}
        \vect{r}_{b_k} = \sum_{i=0}^{k-1} \beta^{i}\cdot\Linvprod{k}{k-i}\cdot\qindhat{k-i} + \beta^{k}\cdot\Linvprod{k}{1}\cdot\vect{r}_{b_0}. 
    \end{align*}
    This equation has exactly the same form as Equation~\eqref{eq:casc_effect}. 
    Thus, it holds for $b_{t+1}$, and therefore, the claim is true by mathematical induction.
\end{proof}

From Theorem~\ref{theorem:decay}, the ranking score vector $\vect{r}_{b_t}$ on the target behavior can be represented as follows:
\begin{align*}
    \vect{r}_{b_t} \propto \beta^{0}\cdot\Linvprod{t}{t}\cdot\qindhat{t} + \beta^{1}\cdot\Linvprod{t}{t-1}\cdot\qindhat{t-1} + \cdots + \beta^{t-1}\cdot\Linvprod{t}{1}\cdot\qindhat{1}, 
\end{align*}
where the term $\Linvprod{t}{t-i}$ is interpreted as the diffused propagation of the information in $\qindhat{t-i}$ from $\mathcal{G}_{b_{t-i}}$ to $\mathcal{G}_{b_{t}}$ along the chain, and the result decays as $i$ increases, with a factor of $0\leq \beta \leq 1$.
This aligns with the cascading effect, indicating that behaviors with stronger preference signals (e.g., purchases) are emphasized more than those with weaker preference signals (e.g., views).
A smaller value of $\beta$ further attenuates the impact of earlier behaviors in the chain on the ranking scores. 

\subsubsection{Analysis on Convergence (T2)}
\label{sec:proposed:analysis:convergence}
Inspired by~\cite{HeHGKW17}, we theoretically analyze the convergence of the iterative algorithm in Algorithm~\ref{alg:method} for \method in the following theorem.

\begin{theorem}[Convergence] 
\label{theorem:convergence}
The power iteration in Algorithm~\ref{alg:method} converges when $\gamma$ is in $(0, 1)$.
\end{theorem}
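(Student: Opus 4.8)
The plan is to read the inner loop of Algorithm~\ref{alg:method} (lines~\ref{alg:method:inner:for}--\ref{alg:method:inner:forend}) as a stationary affine iteration and apply the classical spectral-radius criterion for such iterations. Stacking $\rub$ and $\rib$ into $\rb$ exactly as in Equation~\eqref{eq:cascrank:mat}, and observing that throughout the inner loop the previous-behavior vectors $\rubp,\ribp$ and the query vector $\qb$ stay fixed, the update on lines~\ref{alg:method:userscore}--\ref{alg:method:itemscore} becomes
\begin{align*}
    \rb^{(k)} \;=\; \gamma\cdot\tilde{\mathbf{\mathcal{A}}}_{b}\cdot\rb^{(k-1)} \;+\; \vect{c}, \qquad \vect{c} \;\coloneqq\; \alpha\cdot\qb + \beta\cdot\rbp,
\end{align*}
a linear fixed-point iteration with constant term $\vect{c}$ and iteration matrix $\gamma\,\tilde{\mathbf{\mathcal{A}}}_{b}$. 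Its unique fixed point is precisely the closed-form solution $\rb^{\star}=\mat{L}_{b}^{-1}\vect{c}$ of Equation~\eqref{eq:cascrank:closed}, which is well defined because $\mat{L}_{b}=\mat{I}-\gamma\,\tilde{\mathbf{\mathcal{A}}}_{b}$ is invertible for $0<\gamma<1$.

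Next I would bound the spectral radius of the iteration matrix. The block matrix $\tilde{\mathbf{\mathcal{A}}}_{b}$ is real symmetric, and, as recalled in the footnote to Equation~\eqref{eq:cascrank:closed}, the symmetric normalization forces every eigenvalue of $\tilde{\mathbf{\mathcal{A}}}_{b}$ into $[-1,1]$; equivalently, the singular values of $\Abnorm=\Dub^{-1/2}\Ab\Dib^{-1/2}$ are at most $1$, which holds since $\Abnorm\Abnorm^{\top}$ is similar to the nonnegative row-stochastic matrix $\Dub^{-1}\Ab\Dib^{-1}\Ab^{\top}$. Consequently $\rho\!\big(\gamma\,\tilde{\mathbf{\mathcal{A}}}_{b}\big)=\gamma\,\rho\!\big(\tilde{\mathbf{\mathcal{A}}}_{b}\big)\le\gamma<1$ for every $\gamma\in(0,1)$.

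The remaining argument is routine. Subtracting $\rb^{\star}=\gamma\,\tilde{\mathbf{\mathcal{A}}}_{b}\rb^{\star}+\vect{c}$ from the iteration yields $\rb^{(k)}-\rb^{\star}=\big(\gamma\,\tilde{\mathbf{\mathcal{A}}}_{b}\big)^{k}\big(\rb^{(0)}-\rb^{\star}\big)$, and since $\tilde{\mathbf{\mathcal{A}}}_{b}$ is symmetric, $\big\lVert(\gamma\,\tilde{\mathbf{\mathcal{A}}}_{b})^{k}\big\rVert_{2}=\gamma^{k}\,\rho(\tilde{\mathbf{\mathcal{A}}}_{b})^{k}\le\gamma^{k}\to 0$. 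Hence $\rb^{(k)}\to\rb^{\star}$ geometrically with rate at most $\gamma$, independently of the initialization $\rb^{(0)}=\qb$; in particular the stopping test $\lVert\rub-\rub'\rVert_{1}+\lVert\rib-\rib'\rVert_{1}\le\epsilon$ on line~\ref{alg:method:convergence} is satisfied after finitely many steps. Since $\GC$ has finitely many behaviors and each triggers exactly one such convergent inner loop (whose output then becomes the fixed constant $\rbp$ for the next behavior), the outer loop on lines~\ref{alg:method:outer:for}--\ref{alg:method:outer:forend} also terminates, so the whole algorithm converges; one may additionally read off the limit as the closed form of Equation~\eqref{eq:cascrank:closed}.

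I do not foresee a serious obstacle. The one point that needs care is pinning down the iteration matrix precisely: the Jacobi-style updates on lines~\ref{alg:method:userscore}--\ref{alg:method:itemscore} use the old iterates $\rib'$ and $\rub'$, and it is exactly this choice that makes the stacked update linear in $\rb^{(k-1)}$ with the block matrix $\gamma\,\tilde{\mathbf{\mathcal{A}}}_{b}$. Combined with the bound $\rho(\tilde{\mathbf{\mathcal{A}}}_{b})\le 1$ (for which, if one prefers not to cite the footnote, the short row-stochasticity computation above suffices), everything reduces to the standard convergence theory of stationary iterative methods.
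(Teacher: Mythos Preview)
Your proof is correct, and it reaches the same conclusion as the paper's but by a slightly different route. The paper eliminates one block first: it substitutes the update for $\rub$ into that for $\rib$ to obtain a single-variable recurrence $\rib\leftarrow\gamma^{2}\Abnorm^{\top}\Abnorm\,\rib+\text{const}$, and then shows that $\mat{S}=\gamma^{2}\Abnorm^{\top}\Abnorm$ has eigenvalues in $[-\gamma^{2},\gamma^{2}]$ by exhibiting its similarity to the column-stochastic matrix $\gamma^{2}\Ab^{\top}\Dub^{-1}\Ab\Dib^{-1}$, after which the Neumann series gives convergence of $\rib$ (and hence of $\rub$). You instead keep both blocks stacked as $\rb$ and work directly with the iteration matrix $\gamma\,\tilde{\mathbf{\mathcal{A}}}_{b}$ of Equation~\eqref{eq:cascrank:mat}, bounding $\rho(\tilde{\mathbf{\mathcal{A}}}_{b})\le 1$ via the same stochastic-similarity trick applied to $\Abnorm\Abnorm^{\top}$. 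The two arguments are equivalent at heart---the eigenvalues of $\tilde{\mathbf{\mathcal{A}}}_{b}$ are exactly $\pm\sigma_i(\Abnorm)$, so bounding $\rho(\gamma\,\tilde{\mathbf{\mathcal{A}}}_{b})\le\gamma$ and bounding $\rho(\gamma^{2}\Abnorm^{\top}\Abnorm)\le\gamma^{2}$ are the same fact---but your stacked version is a bit more direct: it reuses the block form already introduced in Equation~\eqref{eq:cascrank:mat}, avoids the intermediate substitution, and makes the Jacobi nature of lines~\ref{alg:method:userscore}--\ref{alg:method:itemscore} explicit. The paper's elimination, on the other hand, makes the dependence of the convergence rate on $\gamma^{2}$ per two Jacobi sweeps a little more visible.
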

\begin{proof}
    Suppose we consider $\Gind{b}$ of the link $\Gind{b'} \rightarrow \Gind{b}$ in the chain of $\GC$.
    In Equation~\eqref{eq:cascrank:vect}, we substitute $\rub$ into the equation for $\rib$, which is represented as follows:
    \begin{align*}
        \rib &\leftarrow 
        \underbrace{\gamma^2 \cdot \Abnorm^{\top}\cdot\Abnorm \cdot\rib}_{\text{Variant term}}
        + \underbrace{\alpha\gamma \cdot \Abnorm^{\top}\cdot\qub + \beta\gamma\cdot \Abnorm^{\top}\cdot\rubp + \alpha \cdot \qib + \beta \cdot \ribp}_{\text{Invariant term}}, \\
        \rib &\leftarrow \mat{S}\cdot\rib + \vect{x},
    \end{align*}
where $\gamma = 1-\alpha-\beta$, and $\rubp$ and $\ribp$ are fixed at this step.
Let $\mat{S}$ represent $\gamma^{2} \cdot \Abnorm^{\top} \cdot \Abnorm$ in the variant term, and let $\vect{x}$ denote the invariant term.
Suppose $\rib^{(k)}$ represents the result after $k$ iterations of the above equation, starting from $\rib^{(0)}$. It can then be represented as follows:
\begin{align*}
    \rib^{(k)} = \mat{S}^{k}\cdot\rib^{(0)} + \sum_{t=0}^{k-1} \mat{S}^{t} \cdot \vect{x}.
\end{align*}
Assuming the eigenvalues of $\mat{S}$ lie within the range $(-1, 1)$, the infinite iterations for $\rib^{(k)}$ converge as follows:
\begin{align*}
\lim_{k\rightarrow\infty}\mat{S}^{k}\cdot\rib^{(0)} = 0, \quad \text{and} \quad \lim_{k\rightarrow\infty}\sum_{t=0}^{k-1}\mat{S}^{k} = (\mat{I}-\mat{S})^{-1},
\end{align*}
which is guaranteed by the Neumann series (i.e., the geometric series for matrices).
To analyze the range of the eigenvalues of $\mat{S}$, we represent it as follows:
\begin{align*}
    \mat{S} &= \gamma^{2}\cdot \Abnorm^{\top}\cdot \Abnorm, \\
    &= \gamma^{2}\cdot\Big(\mat{D}_{\users_b}^{-1/2}\Ab\mat{D}_{\items_b}^{-1/2}\Big)^{\top}\cdot\Big(\mat{D}_{\users_b}^{-1/2}\Ab\mat{D}_{\items_b}^{-1/2} \Big), \\
    &= \gamma^2\cdot \Big(\mat{D}_{\items_b}^{-1/2}\Ab^{\top}\mat{D}_{\users_b}^{-1}\Ab\mat{D}_{\items_b}^{-1/2} \Big), \\
    &= \mat{D}_{\items_b}^{-1/2}\mat{S}_{v}\mat{D}_{\items_b}^{1/2},
\end{align*}
where $\mat{S}_{v} = \gamma^{2}(\Ab^{\top}\mat{D}_{\users_b}^{-1}\cdot\Ab\mat{D}_{\items_b}^{-1})$. 
Note that the largest absolute eigenvalue of  $\Ab^{\top}\mat{D}_{\users_b}^{-1}\cdot\Ab\mat{D}_{\items_b}^{-1}$ is $1$ because it is a column-stochastic matrix, and thus, the eigenvalues of $\mat{S}_{v}$ are in the range $[-\gamma^2, \gamma^2]$.
In addition, the eigenvalues of $\mat{S}_{v}$ are identical to those of $\mat{S}$. 
The reason is as follows:
\begin{align*}
    |\mat{S}_{v} - \lambda_{v}\mat{I}| = |\mat{D}_{\items_b}^{1/2}(\mat{S} - \lambda_{v}\mat{I})\mat{D}_{\items_b}^{-1/2}| = 
    |\mat{D}_{\items_b}^{1/2}|\cdot|\mat{S} - \lambda_{v}\mat{I}|\cdot |\mat{D}_{\items_b}^{-1/2}| = |\mat{S} - \lambda_{v}\mat{I}| = 0,
\end{align*}
where $|\cdot|$ indicates the determinant of a given matrix. 
The largest absolute eigenvalue of  $\Ab^{\top}\mat{D}_{\users_b}^{-1}\cdot\Ab\mat{D}_{\items_b}^{-1}$ is $1$ because it is column stochastic.
Therefore, the eigenvalues of $\mat{S}$ lie within the range $[-\gamma^2, \gamma^2]$, and the condition $\gamma \in (0, 1)$ ensures that this range is bounded within $(-1, 1)$.
Note that the convergence of $\rib$ leads to that of $\rub$. 
%and when $\gamma$ in $(0,1)$ then the algorithm always converges.
%Without loss of generality, we prove the convergence of $\rub$ and $\rib$ on each behavior $b \in \B$.
%Note that lower $\gamma$ leads to faster convergence.
\end{proof}

\subsubsection{Complexity Analysis  (T3)}
\label{sec:proposed:analysis_complexity}

We analyze the time complexity of Algorithm~\ref{alg:method} for \method. 
Suppose $m \coloneqq \sum_{b \in \B} |\Eb|$ is the total number of interactions, and $n \coloneqq |\users| + |\items|$ is the number of nodes in each behavior graph.
Note that each bi-adjacency matrix $\Ab$ is sparse and is therefore stored in a sparse matrix format, such as compressed sparse row (CSR)~\cite{daglib}.

\begin{theorem}[Time Complexity]
The time complexity of Algorithm~\ref{alg:method} for \method is $O(K  (m + |\B| n))$, where $K$ is the maximum number of iterations and $|\B|$ is the number of behaviors.
\end{theorem}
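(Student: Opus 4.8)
The plan is to bound the cost of Algorithm~\ref{alg:method} by walking through its loop structure and charging each operation to either the sparse matrix-vector products or the dense vector operations. First I would observe that the algorithm has an outer loop over the $|\B|$ behavior graphs in $\GC$ (lines~\ref{alg:method:outer:for}--\ref{alg:method:outer:forend}), and inside each iteration a power-iteration inner loop running at most $K$ times (lines~\ref{alg:method:inner:for}--\ref{alg:method:inner:forend}). So the total work is (outer loop body) $\times\,|\B|$, where the outer body is dominated by the inner loop's $K$ iterations plus the one-time symmetric normalization of $\Ab$ (line~\ref{alg:method:norm}).

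Next I would cost a single inner iteration. The expensive steps are lines~\ref{alg:method:userscore} and~\ref{alg:method:itemscore}: computing $\Abnorm\cdot\rib'$ and $\Abnorm^{\top}\cdot\rub'$. Since $\Abnorm$ has the same sparsity pattern as $\Ab$, with $|\Eb|$ nonzeros, each such sparse matrix-vector multiplication costs $O(|\Eb|)$ using the CSR format. The remaining operations in those two lines — scaling by $\gamma$, $\alpha$, $\beta$ and the vector additions involving $\qub,\qib,\rubp,\ribp$ — are $O(n)$ each, as is the convergence check $\lVert\rub-\rub'\rVert_1 + \lVert\rib-\rib'\rVert_1 \le \epsilon$ on line~\ref{alg:method:convergence} and the update on the next line. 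Hence one inner iteration costs $O(|\Eb| + n)$, and the full inner loop for behavior $b$ costs $O(K(|\Eb| + n))$. The symmetric normalization on line~\ref{alg:method:norm} requires computing the degree vectors (one pass over the $|\Eb|$ nonzeros of $\Ab$) and scaling each nonzero, which is $O(|\Eb| + n)$ and is therefore absorbed.

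Then I would sum over the outer loop: $\sum_{b\in\B} O(K(|\Eb| + n)) = O\!\big(K(\sum_{b\in\B}|\Eb| + |\B|n)\big) = O(K(m + |\B|n))$, using $m = \sum_{b\in\B}|\Eb|$. The initialization on lines before the loop (setting $\rubp,\ribp$ to query vectors) is $O(n)$ and the final return is $O(n)$, both negligible. This gives the claimed bound $O(K(m + |\B|n))$.

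I do not expect a genuine obstacle here — the argument is a routine accounting exercise. The only point requiring a little care is justifying that the sparse products really cost $O(|\Eb|)$ rather than $O(n^2)$: this rests on $\Abnorm$ inheriting the CSR sparsity of $\Ab$ and on never materializing the dense block matrix $\tilde{\mathbf{\mathcal{A}}}_b$ of Equation~\eqref{eq:cascrank:mat} — the implementation uses the factored form of Equation~\eqref{eq:cascrank:vect}. A secondary subtlety worth a sentence is that the $|\B|n$ term (rather than just $n$) is necessary because the $O(n)$ vector work in the inner loop is incurred afresh for each of the $|\B|$ behaviors; when $m$ dominates $|\B|n$, which is typical for dense interaction data, the bound simplifies to $O(Km)$, matching the ``scales linearly with the number of interactions'' claim in the introduction.
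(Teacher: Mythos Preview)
Your proposal is correct and follows essentially the same approach as the paper's proof: both walk through the nested loop structure, charge the inner iteration to the $O(|\Eb|)$ sparse matrix-vector products plus $O(n)$ vector work, note that the normalization step is absorbed, and then sum over the $|\B|$ behaviors to obtain $O(K(m + |\B|n))$. Your version is somewhat more explicit about the CSR justification and the role of the $|\B|n$ term, but the argument is the same.
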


\begin{proof}
Algorithm~\ref{alg:method} repeats the power iteration in its inner loop for each behavior, following the symmetric normalization and initialization, both of which require $O(|\Eb| + |\Ub| + |\Ib|)$ time.
The power iteration repeats at most $K$ times, with its main bottleneck arising from sparse matrix-vector multiplications~\cite{daglib}, requiring $O(K(|\Eb| + |\Ub| + |\Ib|))$ time for all iterations.
The outer loop repeats $|\B|$ times, as the length of the cascading chain is $|\B|$.
Putting everything together, the total time complexity is $\sum_{b \in \B} (K+1)(|\Eb| + |\Ub| + |\Ib|) = (K+1)(m + |\B| n) \in O(K  (m + |\B| n))$, where $|\Ub| = |\users|$ and $|\Ib| = |\items|$.
\end{proof}

Note that in real-world multi-behavior datasets, the main factor is the number $m$ of interactions, since $|\B|$ is small (e.g., 3 or 4), $K$ is a fixed constant, and the number $n$ of nodes is much smaller than $m$, as shown in Table~\ref{tab:data}. 
This indicates that the algorithm for \method primarily takes $O(m)$ time, i.e., it scales linearly with the number of interactions, as empirically verified in Figure~\ref{fig:scalability}.

\section{Experiments}
\label{sec:experiments}
In this section, we conducted experiments to address the following questions:

\begin{enumerate}[leftmargin=9mm,noitemsep]
    \item[Q1.] {
        \textbf{Recommendation performance. } 
        How effective is the personalized ranking provided by \method for multi-behavior recommendation compared to its competitors?
    }
    \item[Q2.] {
        \textbf{Ablation study.} How does each module of \method affect its recommendation performance?
    }
    \item[Q3.] {
        \textbf{Effect of hyperparameters.} 
        How do the hyperparameters of \method, such as smoothness, query fitting, and cascading alignment, influence its performance?
    }
    \item[Q4.] {
        \textbf{Convergence.} 
        Does the iterative algorithm of \method converge as the number of iterations increases?
    }
    \item[Q5.] {
        \textbf{Computational Efficiency} Does the \method yield linear computational complexity w.r.t. the number of total interactions?
    }
\end{enumerate}

\def\arraystretch{1.1} 
\setlength{\tabcolsep}{12pt}
\begin{table}[t]
\small
\caption{
    Data statistics of multi-behavior interactions.
}
\label{tab:data}
\centering
\begin{tabular}{crrrrrr}
    \hline
    \toprule
    \textbf{Dataset} & Users & Items & Views& 
    \makecell[r]{Collects$^{*}$\\or Shares$^{\dagger}$}
     & \makecell[r]{Carts$^{*}$\\or Likes$^{\dagger}$} & \makecell[r]{Buys$^{*}$\\or Follows$^{\dagger}$}\ \\
    \midrule
    \texttt{Taobao}$^{*}$ & 15,449 & 11,953 & 873,954 & - & 195,476  & 92,180 \\
    % \texttt{Beibei}$^{*}$ & 21,716 & 7,977 & 2,412,586   & - & 642,622 & 282,860 \\
    \texttt{Tenrec}$^{\dagger}$ & 27,948 & 15,440 & 1,489,997  & 13,947  & 1,914  & 1,307  \\
    \texttt{Tmall}$^{*}$ & 41,738 & 11,953 & 1,813,498  & 221,514 & 1,996  & 255,586 \\
    %\texttt{Jdata}$^{*}$ & 93,334 & 24,624 & 1,681,430   & 45,613 & 49,891 & 321,883 \\
    \bottomrule
    \hline
\end{tabular}
\begin{tablenotes}[flushleft]\footnotesize
{\item[] $\dagger$: Note that \tenrec includes shares, likes, and follows as corresponding behaviors.}
\end{tablenotes}
\end{table}

\subsection{Experimental Setting}
\label{sec:exp:setting}
We describe the setup for our experiments on multi-behavior recommendation.

\smallsection{Datasets}
We conducted experiments on three real-world multi-behavior datasets:  
\taobao~\cite{MengZGGLZZTLZ23, LeeLKSJ24xxaw}, \tenrec~\cite{abs-2210-10629, ZhangBCSYGWHH24} and \tmall~\cite{YanCGSLSL24, MengZGGLZZTLZ23} which are publicly available and have been used as standard benchmarks in previous studies~\cite{LeeLKSJ24xxaw, MengZGGLZZTLZ23, YanCGSLSL24, ZhangBCSYGWHH24, YanYCSSP23}.
The detailed statistics of the datasets are summarized in Table~\ref{tab:data}.
The \texttt{Tmall} dataset has four behavior types: \textit{view}, \textit{add-to-collect}, \textit{add-to-cart}, and \textit{buy}, while \texttt{Taobao}, apart from add-to-collect, consists of three types.
The behavior types of \tenrec are \textit{view}, \textit{share}, \textit{like} and \textit{follow}.
Following the previous studies~\cite{LeeLKSJ24xxaw, MengZGGLZZTLZ23, ZhangBCSYGWHH24}, we set the target behavior as \textit{buy} for $\{\taobao, \tmall\}$ and \textit{follow} for $\{\tenrec\}$, and preprocessed duplicate interactions by retaining only the earliest occurrence for each behavior.

\smallsection{Competitors}
We compared \method with traditional graph ranking methods such as \textbf{RWR}~\cite{TianJ13}, \textbf{CoHITS}~\cite{DengDLK09kqpg}, \textbf{BiRank}~\cite{HeHGKW17}, as well as single-behavior representation learning methods such as \textbf{LightGCN}~\cite{HeHDWLZW20} and \textbf{MF-BPR}~\cite{RendleRFGS12}.
Note that these methods were designed to operate on a single behavior graph. 
For these, we combined all behavior graphs into a unified graph\footnote{We performed an element-wise union operation across the adjacency matrices of all behaviors to construct the adjacency matrix of the unified graph.}, on which these methods were applied. 
Results based solely on the target behavior interactions are provided in~\ref{sec:appendix:single}.
We further compare \method with \textbf{NRank}, an extended version of \textbf{BiRank}~\cite{HeHGKW17} for multi-partite graphs, where the interactions of each behavior form a subgraph between users and items. In this graph, the set of users is shared across behaviors, while the set of items is distinct for each behavior.
We also compared our method with state-of-the-art representation learning methods for multi-behavior recommendation, including: 1) non-cascading approaches such as \textbf{MB-HGCN}~\cite{YanYCSSP23} and \textbf{MuLe}~\cite{LeeLKSJ24xxaw}, and 2) cascading approaches such as \textbf{PKEF}~\cite{MengMZYZL23} and \textbf{HEC-GCN}~\cite{yin2024hecgcn}.
Finally, we evaluated our method against \textbf{BPMR}~\cite{LiLCYLLD24}, a pattern-mining-based method that achieves state-of-the-art accuracy in the recommendation task.

\smallsection{Training and evaluation protocol}
We follow the leave-one-out setting, which has been broadly used in previous studies~\cite{MengZGGLZZTLZ23,LeeLKSJ24xxaw,ChengCHLZGP23fqvn}, where the test set consists of the last interacted item for each users. 
The second most recently interacted item for each user forms the validation set for tuning hyperparameters, while the remaining interactions are used for training.
In the evaluation phase, items within the test set for each user are ranked based on ranking or predicted scores by models. where its top-$k$ ranking quality is measured by HR@$k$ and NDCG@$k$~\cite{MengZGGLZZTLZ23,LeeLKSJ24xxaw,ChengCHLZGP23fqvn,ZhangBCSYGWHH24}. Note that target behavior items(i.e., buy) in the training interaction are excluded during testing.
HR@$k$ measures how often relevant items, on average, appear in the recommendation for each user.
NDCG@$k$ considers both relevance and order of relevant items in a ranking, averaged across all users.

\smallsection{Hyperparameter tuning}
For each dataset, we conducted a grid search to tune hyperparameters on the validation set and reported the test performance with the validated hyperparameters. 
The hyperparameters $\alpha$ and $\beta$ of \method are tuned in $[0,1]$ such that $0 \leq \alpha + \beta \leq 1$.
The validated values of $\alpha$ and $\beta$ for each dataset are provided in Appendix~\ref{sec:appendix:Effect of Hyperparameters_detail}.
For the cascading sequence $\mathcal{C}$, we follow a natural sequence of user behaviors based on their semantics, as used in previous studies~\cite{yin2024hecgcn, MengMZYZL23, LiuXWY00024, YanCGSLSL24, ChengCHLZGP23fqvn}.
We set $\mathcal{C}$ to $(\texttt{view}\rightarrow\texttt{collect}\rightarrow\texttt{cart}\rightarrow\texttt{buy})$ for \tmall, $(\texttt{view}\rightarrow\texttt{cart}\rightarrow\texttt{buy})$ for \taobao, and $(\texttt{view}\rightarrow\texttt{share}\rightarrow\texttt{like}\rightarrow\texttt{follow})$ for \tenrec (see Appendix~\ref{sec:appendix:Performance_for_Different_Cascading_Sequences} for results with different cascading sequences).
For each competitor, we followed the range of its hyperparameters, as reported in the corresponding paper.

\smallsection{Machine and implementation}
We used a workstation with AMD 5955WX and RTX 4090 (24GB VRAM).
Our method \method was implemented using Pytorch 2.0 in Python 3.9. 
Note that the BMPR's algorithm~\cite{LiLCYLLD24} is designed to run sequentially on a CPU and it is hard to parallelize, while other methods are easily parallelizable due to matrix operations and execute on a GPU.
For the other methods, we used their open-source implementations, with detailed information provided in Appendix~\ref{sec:appendix:competitors:information}.

\def\arraystretch{1.1} 
\setlength{\tabcolsep}{6pt}
\begin{table}[t]
\caption{
Multi-behavior recommendation performance in terms of HR@10 and NDCG@10, with the best results in bold and the second-best results underlined.
`\% impv.’ indicates the percentage improvement of the best model over the second-best model.
RL stands for representation learning, GR for graph ranking, and PM for pattern-mining methods.
\textbf{Our \method shows the best recommendation performance among all the tested methods across the datasets.}
}
\small
\label{tab:performance}
\centering
\begin{tabular}{cc|ccc|ccc}
\hline
\toprule
\multirow{2}{*}{\bf Methods} & \multirow{2}{*}{\bf Type} & \multicolumn{3}{c|}{\bf HR@10}       & \multicolumn{3}{c}{\bf NDCG@10}      \\
             & & \taobao  & \tenrec  & \tmall & \taobao  & \tenrec   & \tmall  \\ \midrule
MF-BPR  & RL & 0.0758  & 0.1244 & 0.0855 & 0.0387  & 0.0575  & 0.0423  \\
LightGCN & RL & 0.1025  & 0.1069 & 0.1162 & 0.0566  & 0.0526  & 0.0625  \\ \midrule
MB-HGCN & RL       & 0.1261  & 0.1133 & 0.1413 & 0.0666  & 0.0618  & 0.0753  \\ 
MuLe   & RL        & 0.1949  & 0.1920  & 0.2097 & 0.1128  & 0.1100  & 0.1175  \\ \midrule
PKEF   & RL        & 0.1349  & 0.0968 & 0.1222 & 0.0763  & 0.0530  & 0.0696  \\
HEC-GCN & RL       & 0.1905  & 0.2673 & 0.1806 & 0.1038  & 0.1565  & 0.1000  \\ \midrule
RWR & GR     & 0.2130  & 0.2074 & 0.2712 & 0.0988  & 0.0962  & 0.1284  \\
CoHITS & GR  & 0.2128  & 0.2074 & 0.2713 & 0.0988  & 0.0957  & 0.1284  \\
BiRank & GR   & \underline{0.3034}  & 0.2949 & \underline{0.3550} & \underline{0.1517}  & 0.1257  & \underline{0.1819}  \\ 
NRank & GR         & 0.2989  & \underline{0.4562} & 0.3477 & 0.1419  & 0.2508  & 0.1726  \\
\midrule
BPMR  & PM        & 0.2846  & 0.4286 & 0.3289 & 0.1429  & \underline{0.2579}  & 0.1598  \\ \midrule
\textbf{\method} & GR  & \textbf{0.3324}  & \textbf{0.4747} & \textbf{0.3751} & \textbf{0.1626}  & \textbf{0.2723}  & \textbf{0.1871}  \\ \midrule
\% impv.   & -     & 9.56\%  & 5.67\% & 5.67\% & 7.16\%  & 5.57\%  & 2.85\%  \\
\bottomrule
\hline
\end{tabular} 
\end{table}

\subsection{Recommendation Performance (Q1)} 
We evaluate the effectiveness of \method in multi-behavior recommendation by comparing it to its competitors.

\smallsection{Top-10 performance}
We first examine the recommendation quality of the top 10 highest-ranked items by each method, measured in terms of HR@10 and NDCG@10.
From Table~\ref{tab:performance}, we observe the following:
\begin{itemize}[leftmargin=9mm,noitemsep]
    \item{
        % overall comparison
        Our proposed \method achieves superior performance compared to its competitors across all datasets, providing \textbf{up to 9.56\% improvement in HR@10 and up to 7.16\% in NDCG@10 over the second-best model}, with the most notable gains observed on the \taobao dataset.
    }
    \item{
        The representation learning (RL) methods generally underperform compared to the graph ranking (GR) or the pattern-mining (PM) methods.
        Specifically, naive GR models on a unified graph such as BiRank surpass state-of-the-art RL models exploiting multi-behavior interactions such as MuLe and HEC-GCN on \taobao and \tmall, highlighting the RL approach’s limitations in providing accurate recommendations due to their limited expressiveness, mainly caused by over-smoothing and bias issues.
    }
    \item{
        % our apporach is better than simple graph ranking
        For the GR methods, exploiting a cascading sequence in measuring ranking scores, as in \method, is beneficial because it outperforms other GR methods, such as BiRank and NRank, which use all interactions but do not fully exploit the semantics of behaviors.
    }
    \item{
        % mention on cascading approach on RL
        It matters more to precisely encode embeddings than to use additional information, such as the cascading pattern, in the RL methods. 
        As evidence of this, HEC-GCN, a cascading method, performs better than MuLe, a non-cascading method, on \tenrec, while the opposite is observed on the other datasets.
    }
    \item {
        % mention on BPMR
        Our \method outperforms BPMR, which is recognized for achieving state-of-the-art accuracy and surpassing the RL methods. 
        The main difference is that BPMR considers only a few steps of paths, while our method accounts for the global structure of each behavior graph and the cascading effect, enabling it to generate higher-quality scores.      Furthermore, BPMR is significantly slower than \method, as discussed in Section~\ref{sec:exp:efficiency}.
    }
\end{itemize}

\begin{figure}[t]
    \centering
    %\hspace{-6mm}
    \includegraphics[width=0.95\linewidth]{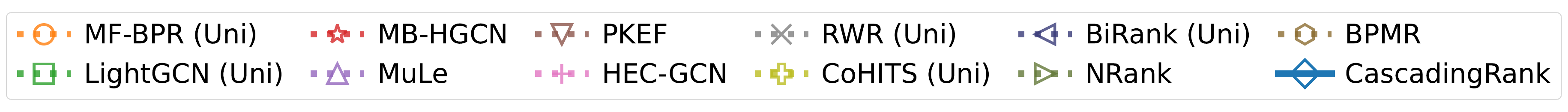}\\
    \subfigure[\taobao]{
        \hspace{-7mm}
        \includegraphics[width=0.318\linewidth]{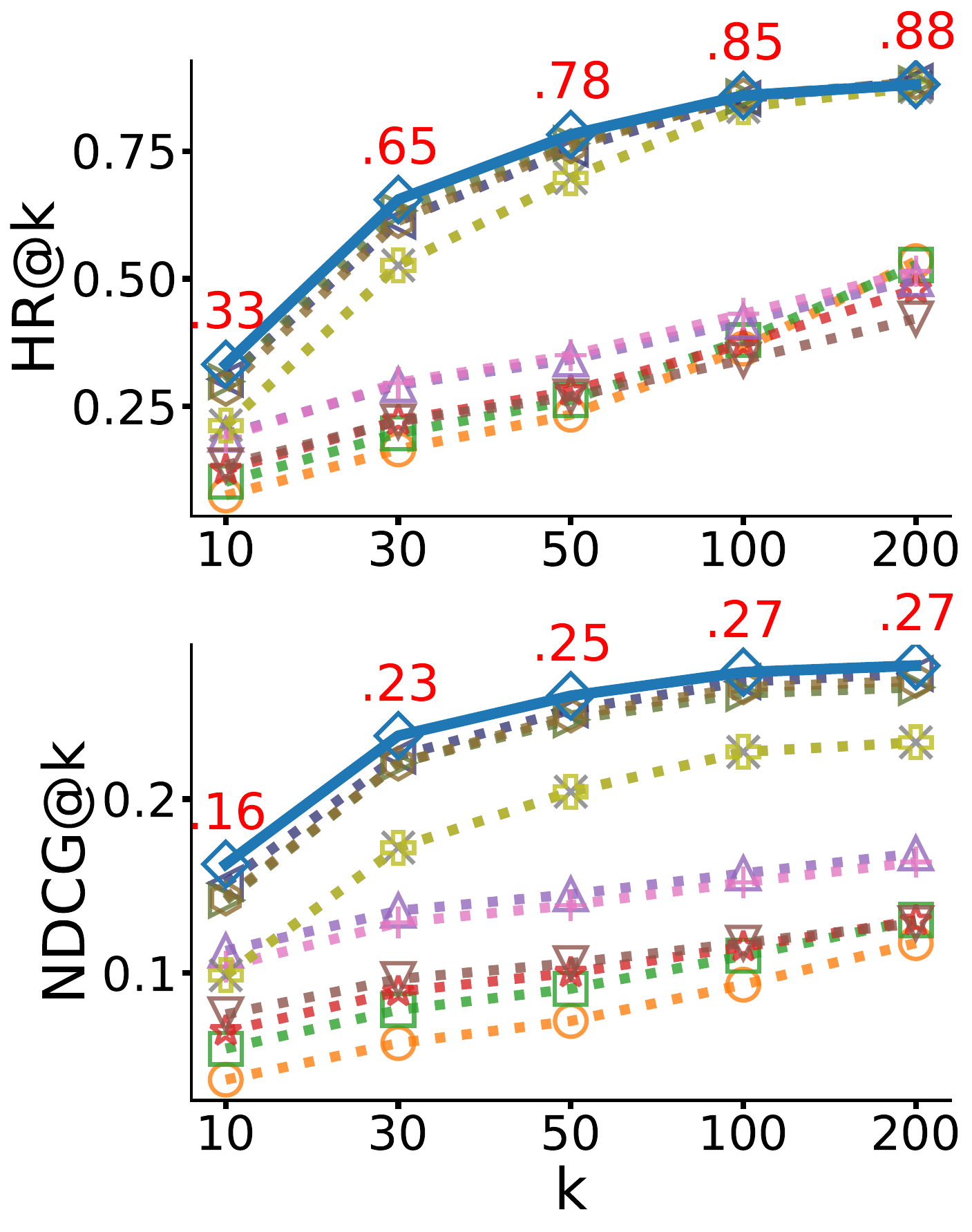}
        %\hspace{-3.5mm}
        \label{fig:appendix:All_k_score:Taobao}
    }
    \subfigure[\tenrec]{
        \hspace{-3mm}
        \includegraphics[width=0.300\linewidth]{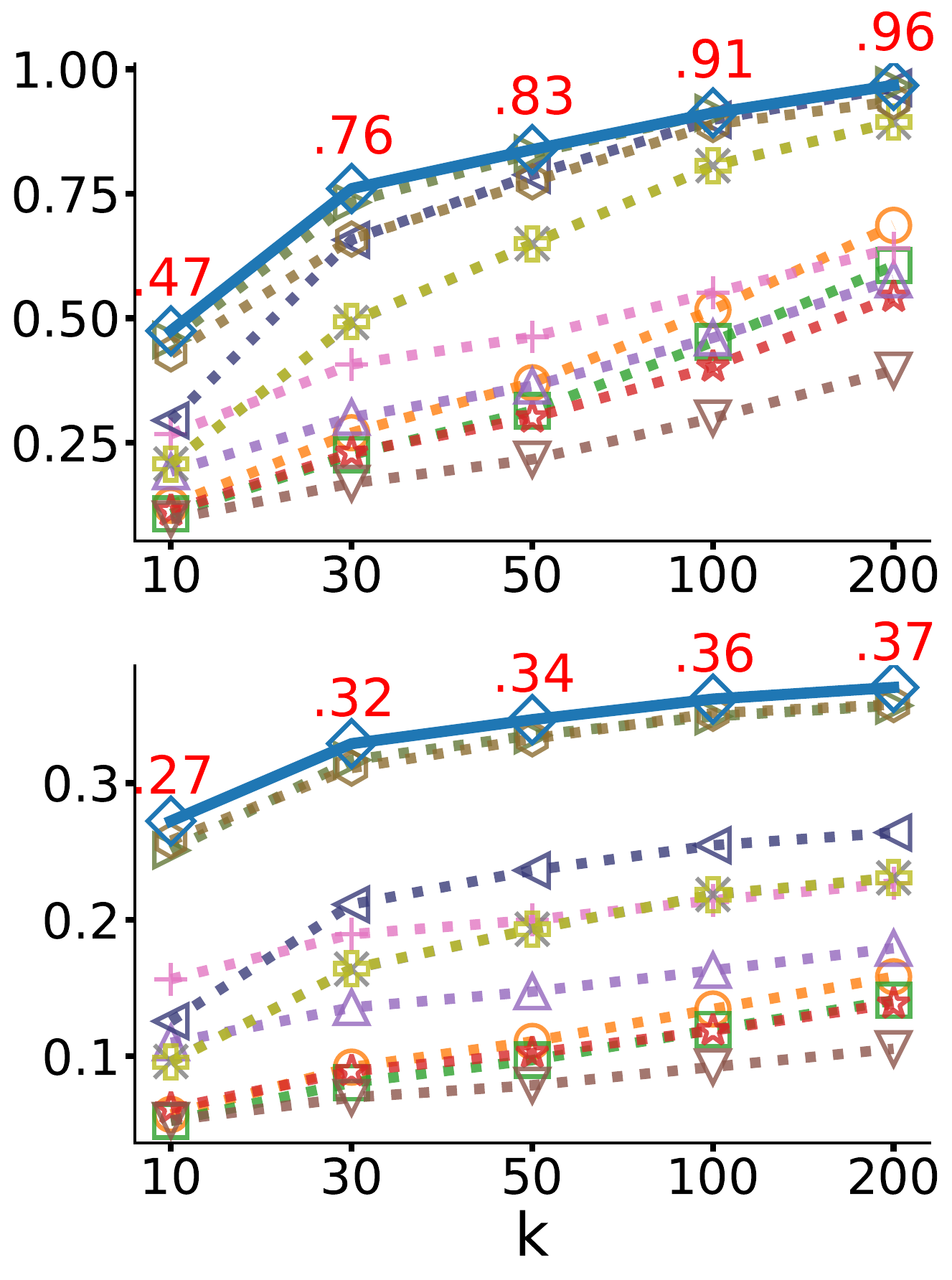}
        %\hspace{-1mm}
        \label{fig:experiments:All_k_score:Tenrec}
    }
    \subfigure[\tmall]{
        \hspace{-4.5mm}
        \includegraphics[width=0.30\linewidth]{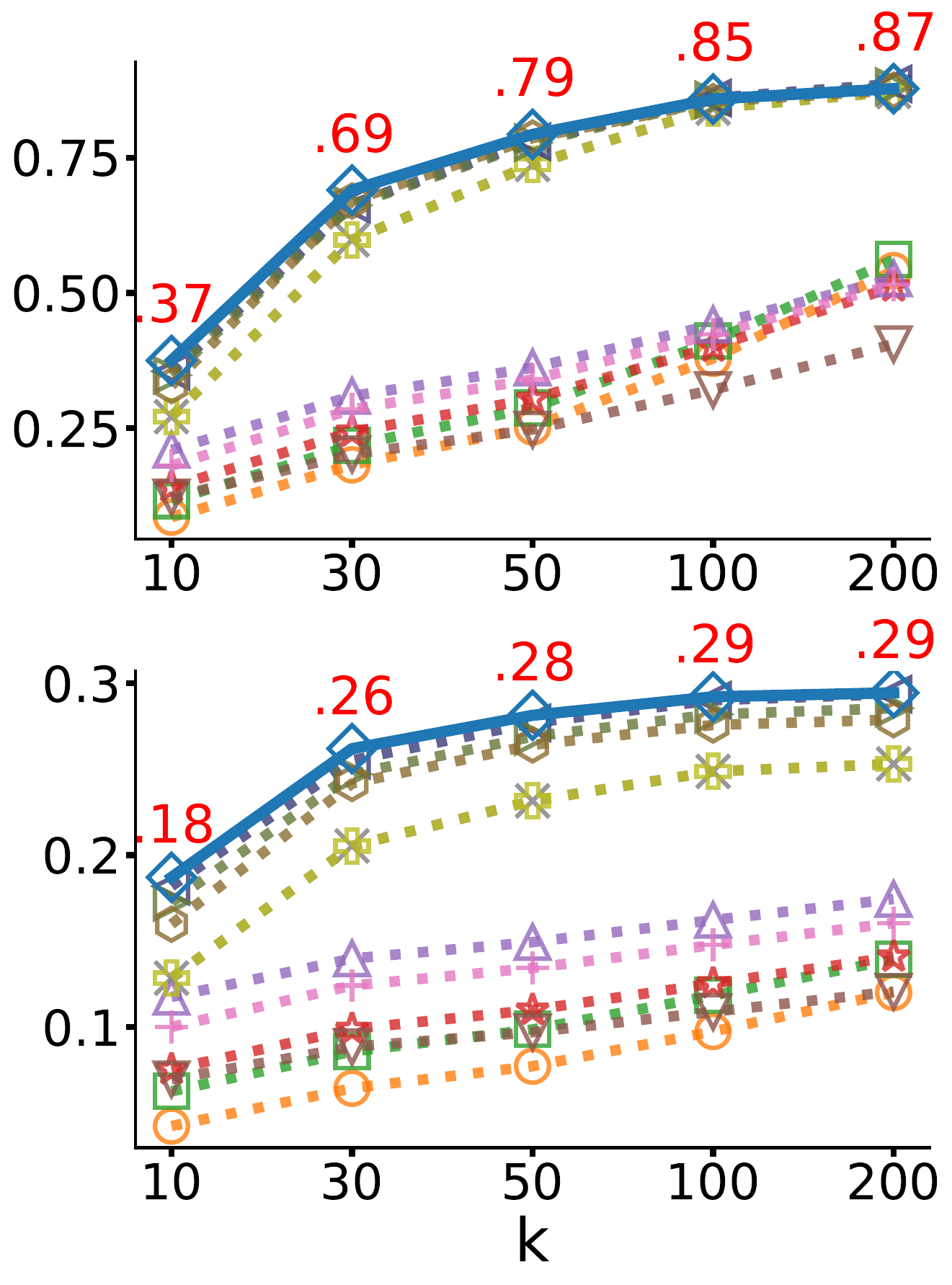}
        %\hspace{-3mm}
        \label{fig:appendix:All_k_score:Tmall}
    }
    
    \caption{
        \label{fig:experiments:varik}
        Recommendation performance in HR@$k$ and NDCG@$k$, where $k$ varies in \{10, 30, 50, 100, 200\}.
        \textbf{Our \method provides better ranking scores than its competitors in both metrics.}
    }
\end{figure}

\smallsection{Top-$k$ performance}
We further evaluate the ranking quality of \method by comparing it with its competitors in terms of HR@$k$ and NDCG@$k$ for various values of $k$ ranging from 10 to 200.
As shown in Figure~\ref{fig:experiments:varik}, \textbf{our \method achieves the highest ranking quality for items that test users are likely to purchase}, outperforming its competitors across all datasets.
Specifically, NDCG@$k$ of \method is the highest across all $k$, significantly outperforming the RL methods, indicating that the target items are more likely to be ranked higher in our results.
It is worth noting that HR@$k$ of the GR methods increases significantly as $k$ grows, compared to that of the RL methods, indicating that the rankings produced by the GR methods are more likely to contain the target items.
This implies that the GR methods are more suitable for generating candidate items, which can then be refined through re-ranking process~\cite{ValizadeganJZM09}.

\begin{figure}[t]
    \centering
    \includegraphics[width=1\linewidth]{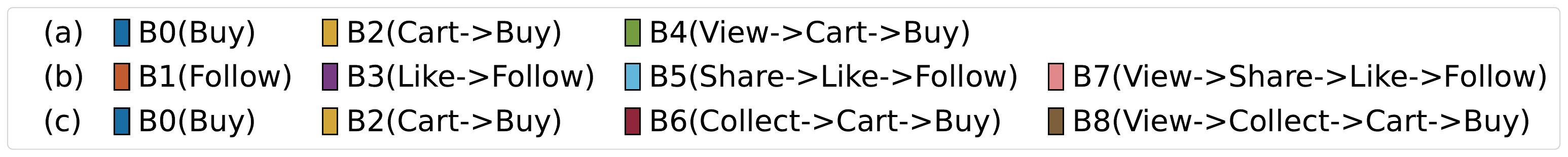}
    
    \subfigure[\taobao]{
        \hspace{-3mm}
        \includegraphics[width=0.320\linewidth]{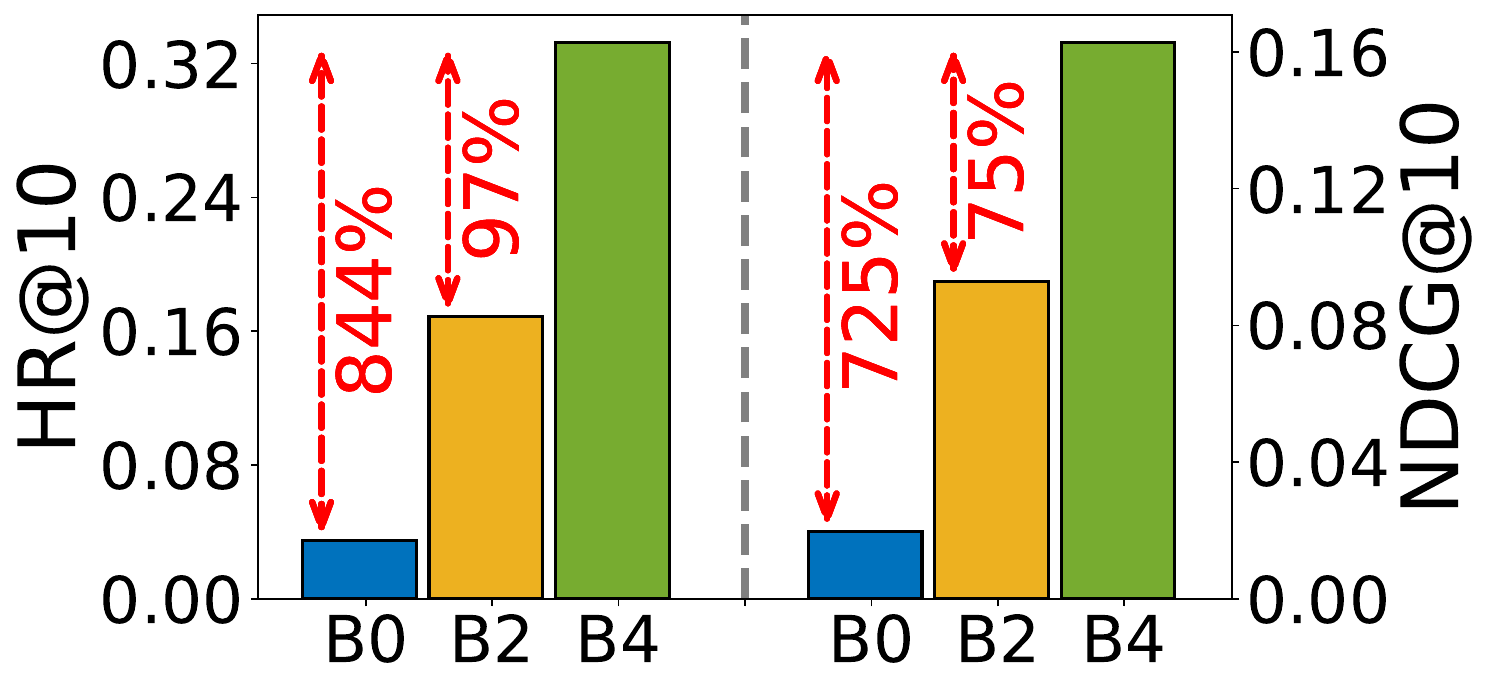}
        % \hspace{-3mm}
        \label{fig:experiments:ablation:Taobao}
    }
    \subfigure[\tenrec]{
        \hspace{-3mm}
        \includegraphics[width=0.312\linewidth]{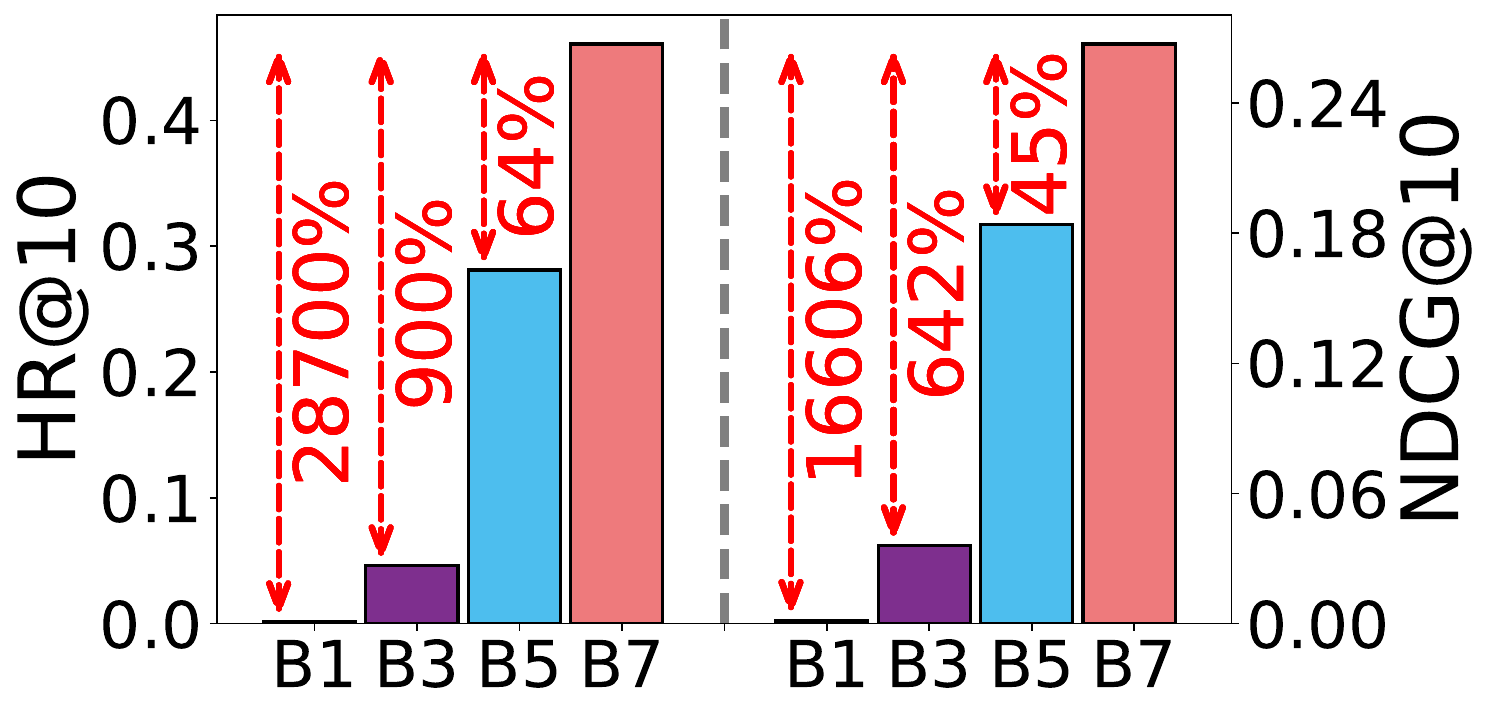}
        % \hspace{-3mm}
        \label{fig:experiments:ablation:Tenrec}
    }
    \subfigure[\tmall]{
        \hspace{-3mm}
        \includegraphics[width=0.320\linewidth]{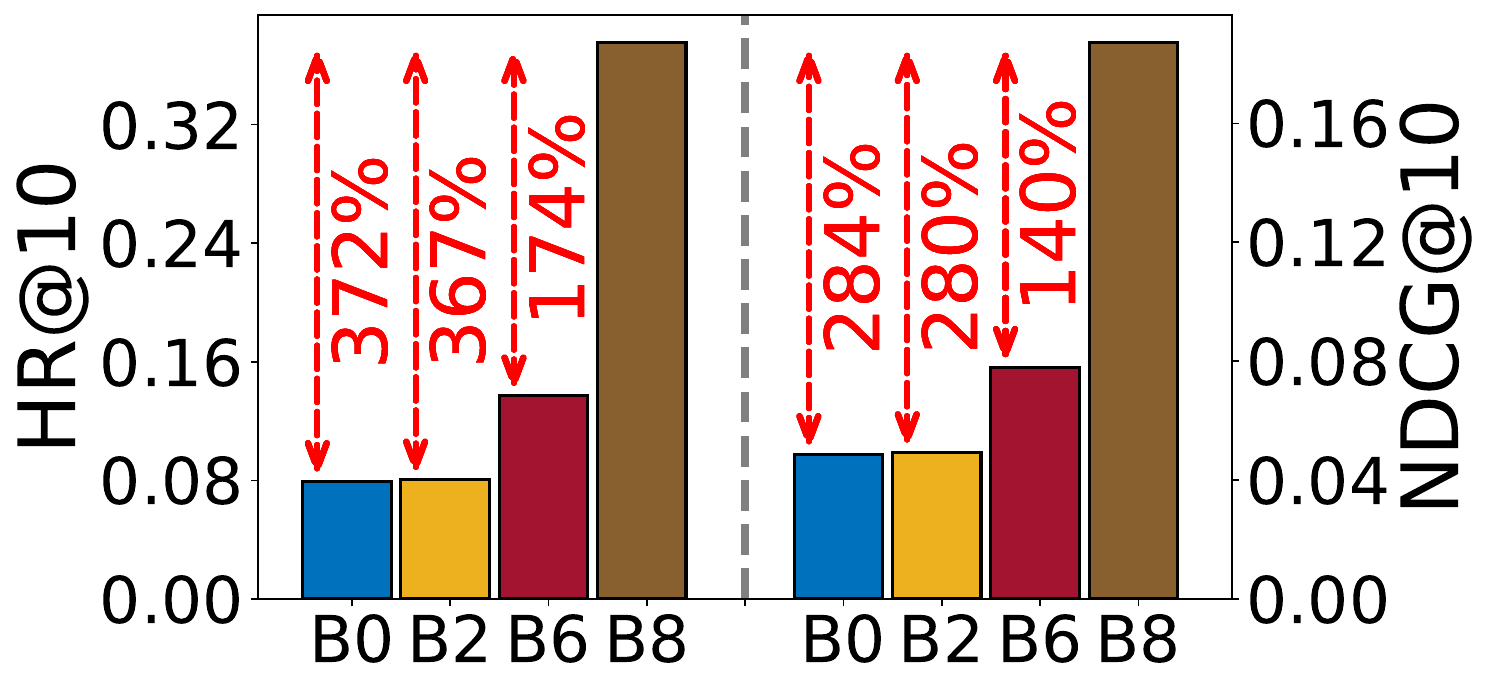}
        \label{fig:experiments:ablation:Tmall}
        \hspace{-3mm}
    }
    \caption{
        \label{fig:experiments:ablationstudy}
        Effect of behaviors in the cascading sequence, where \texttt{B4} is used for \method, \texttt{B7} for \tenrec, and \texttt{B8} for \tmall to produce the final ranking scores of \method.
        \textbf{Note that utilizing all behaviors in the sequence is beneficial for recommendation}, as the performance degrades when earlier behaviors are excluded from the sequence.
    }
\end{figure}

\subsection{Ablation Study (Q2)}
We investigate the effectiveness of our design choices in \method through ablation studies.

\smallsection{Effect of auxiliary behaviors in the cascading sequence}
We conducted an ablation study to verify the impact of auxiliary behaviors in the cascading sequences used in our method for each dataset.
For this experiment, we sequentially excluded each auxiliary behavior from the cascading sequence $\mathcal{C}$, where it was initially set to $\texttt{B8}: (\texttt{view}\rightarrow\texttt{collect}\rightarrow\texttt{cart}\rightarrow\texttt{buy})$ for \tmall, $\texttt{B4}:(\texttt{view}\rightarrow\texttt{cart}\rightarrow\texttt{buy})$ for \taobao, and $\texttt{B7}:(\texttt{view}\rightarrow\texttt{share}\rightarrow\texttt{like}\rightarrow\texttt{follow})$ for \tenrec.
As shown in Figure~\ref{fig:experiments:ablationstudy}, using all behaviors in the cascading sequence leads to better recommendations than the variants that exclude auxiliary behaviors.
Specifically, using only the graph of interactions of the target behavior shows the worst performance across all datasets, and the performance improves as auxiliary behaviors are added in the order of the cascading sequence.
This indicates that incorporating all auxiliary behaviors in the order of the cascading sequence is essential for achieving optimal performance.

% 2. CascadingCoHITS v.s. CascadingRank: normalization - column-norm / symmetric -> birank 논문
\def\arraystretch{1.1} 
\setlength{\tabcolsep}{6.7pt}
\begin{table}[t]
\caption{
Effect of normalization on measuring the ranking scores of \method.
\textbf{Ranking scores with symmetric normalization provides more accurate recommendation than those with column normalization.}
}
\label{tab:ablation:normalization}
\centering
\small
\begin{tabular}{c|ccc|ccc}
\hline
\toprule
\multirow{2}{*}{\bf Variants} & \multicolumn{3}{c|}{\bf HR@10}       & \multicolumn{3}{c}{\bf NDCG@10}      \\
             & \taobao  & \tenrec  & \tmall & \taobao  & \tenrec   & \tmall  \\ 
\midrule
\methodcol       & 0.2919  & 0.4439 & 0.3270  & 0.1465  & 0.2559  & 0.1605  \\
\methodsym       & \bf 0.3324  & \bf 0.4747 & \bf 0.3751  & \bf 0.1626  & \bf 0.2723  & \bf 0.1871  \\
\midrule
\% impv.                 & 13.86\% & 6.93\% & 14.74\% & 10.97\%  & 6.39\%  & 16.60\% \\ 
\bottomrule
\hline
\end{tabular} 
\end{table}

\smallsection{Effect of normalization}
We check the effect of normalization for estimating ranking scores in Equation~\eqref{eq:cascrank:vect}. 
For this experiment, we compared the following:
\begin{itemize}[leftmargin=9mm,noitemsep]
    \item {
        \methodcol: it uses column normalization on each adjacency matrix, i.e., $\Abnorm = \Ab \Dib^{-1}$ and $\Abnorm^{\top} = \Ab^{\top} \Dub^{-1}$.
    }
    \item {
        \methodsym: it uses symmetric normalization on each adjacency matrix, i.e., $\Abnorm = \Dub^{-1/2} \Ab \Dib^{-1/2}$ and $\Abnorm^{\top} =  \Dib^{-1/2} \Ab^{\top} \Dub^{-1/2}$.
    }
\end{itemize}
As shown in Table~\ref{tab:ablation:normalization}, the symmetric normalization achieves better recommendation performance than the column normalization, with improvements of up to 14.74\% in HR@$10$ and 16.60\% in NDCG@$10$ on the \tmall dataset.
This indicates that reducing the impact of both users and items in terms of size is more beneficial than reducing that of either one alone for scoring, especially when recommending items in long-tail distributions.

% \smallsection{Effect of query setting}
\begin{figure}[t]
    \centering
    \subfigure[$\alpha$: query fitting]{
        %\hspace{-7mm}
        \includegraphics[width=0.3145\linewidth]{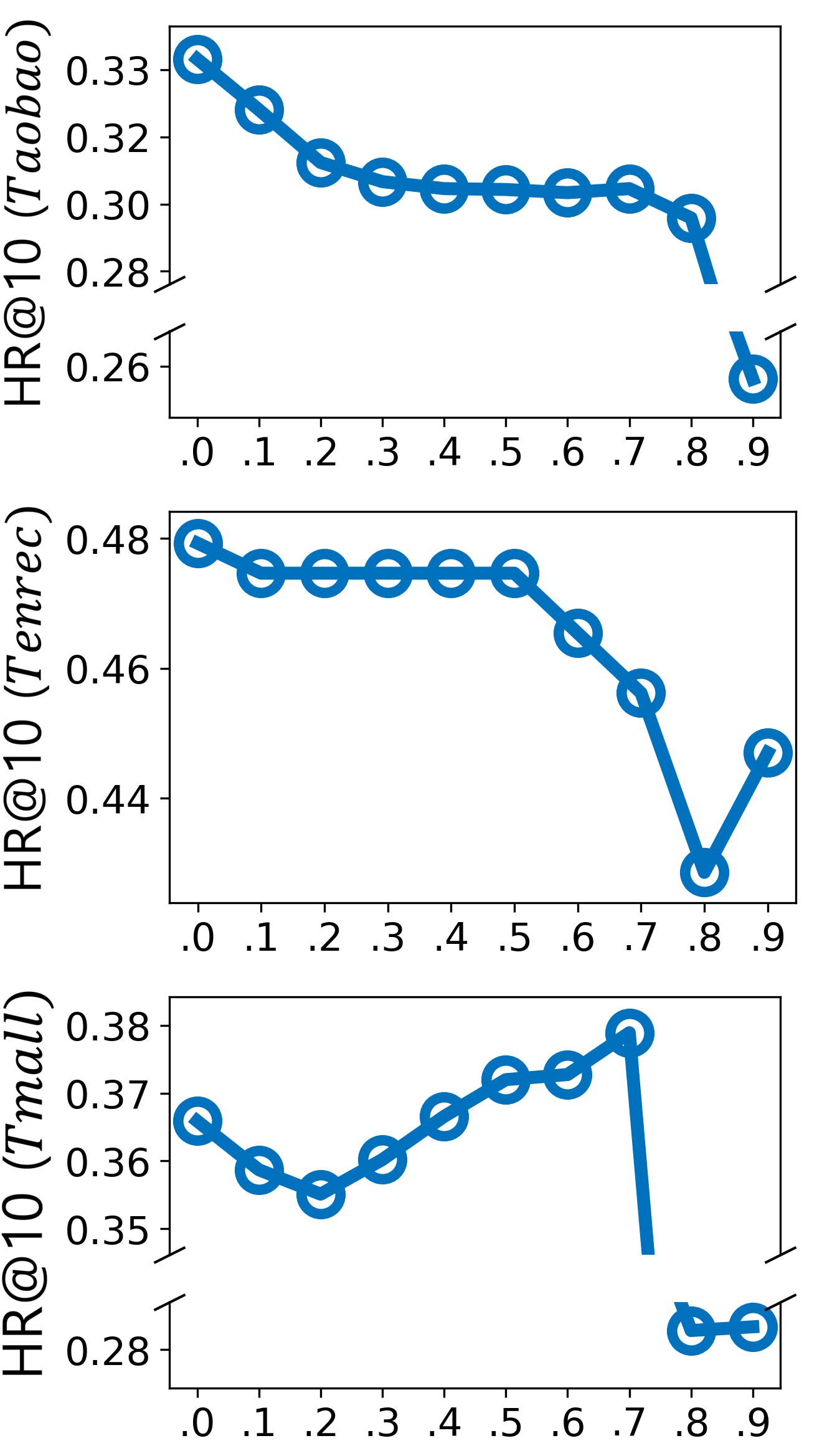}
        \label{fig:experiments:hyper_sens_split_view:alpha}
        \hspace{-4mm}
    }
    \subfigure[$\beta$: cascading alignment]{
        %\hspace{-7mm}
        \includegraphics[width=0.3\linewidth]{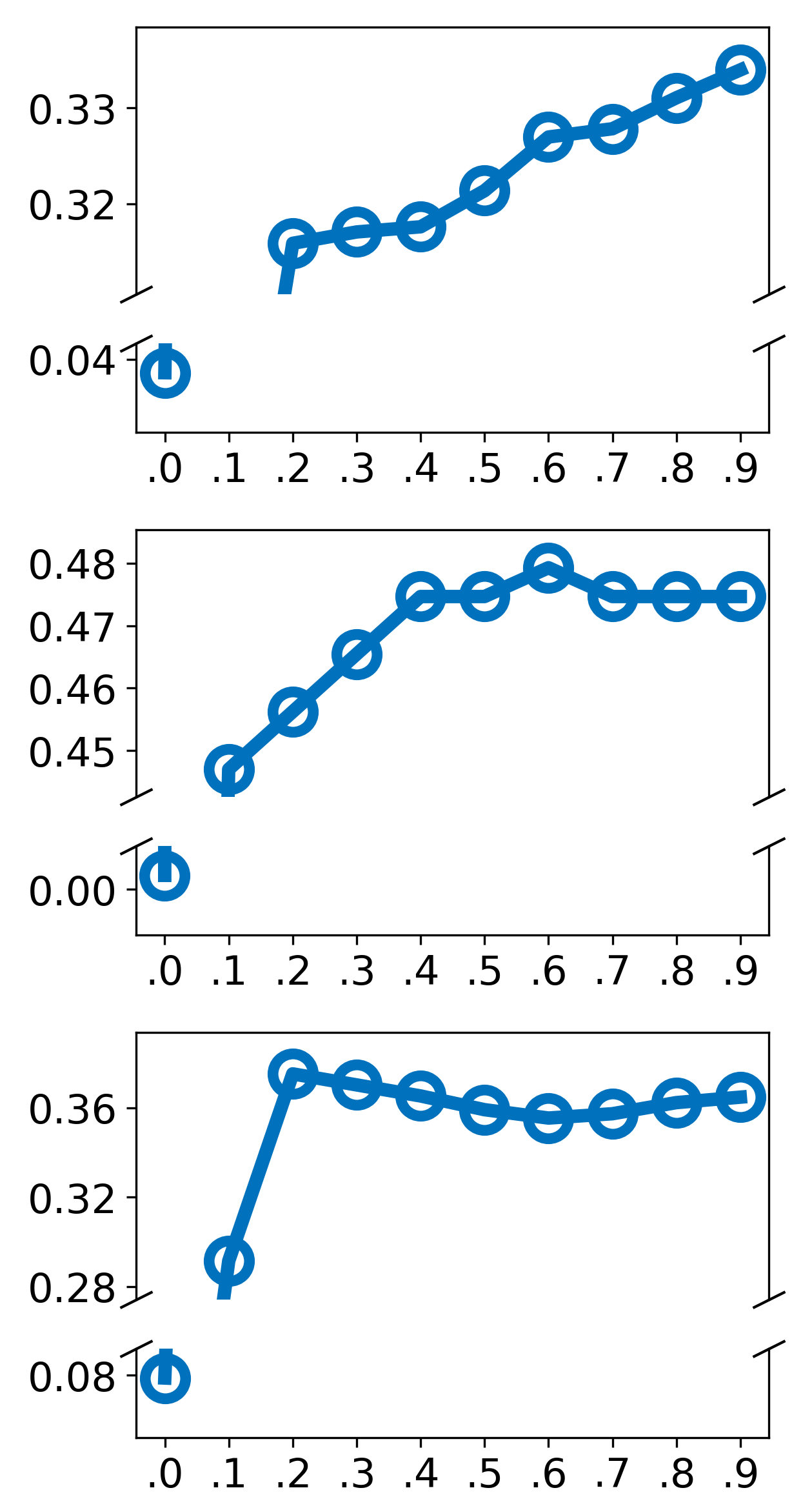}
        \label{fig:experiments:hyper_sens_split_view:beta}
        \hspace{-4mm}
        
    }
    \subfigure[$\gamma$: smoothing]{
        %\hspace{-7mm}
        \includegraphics[width=0.3\linewidth]{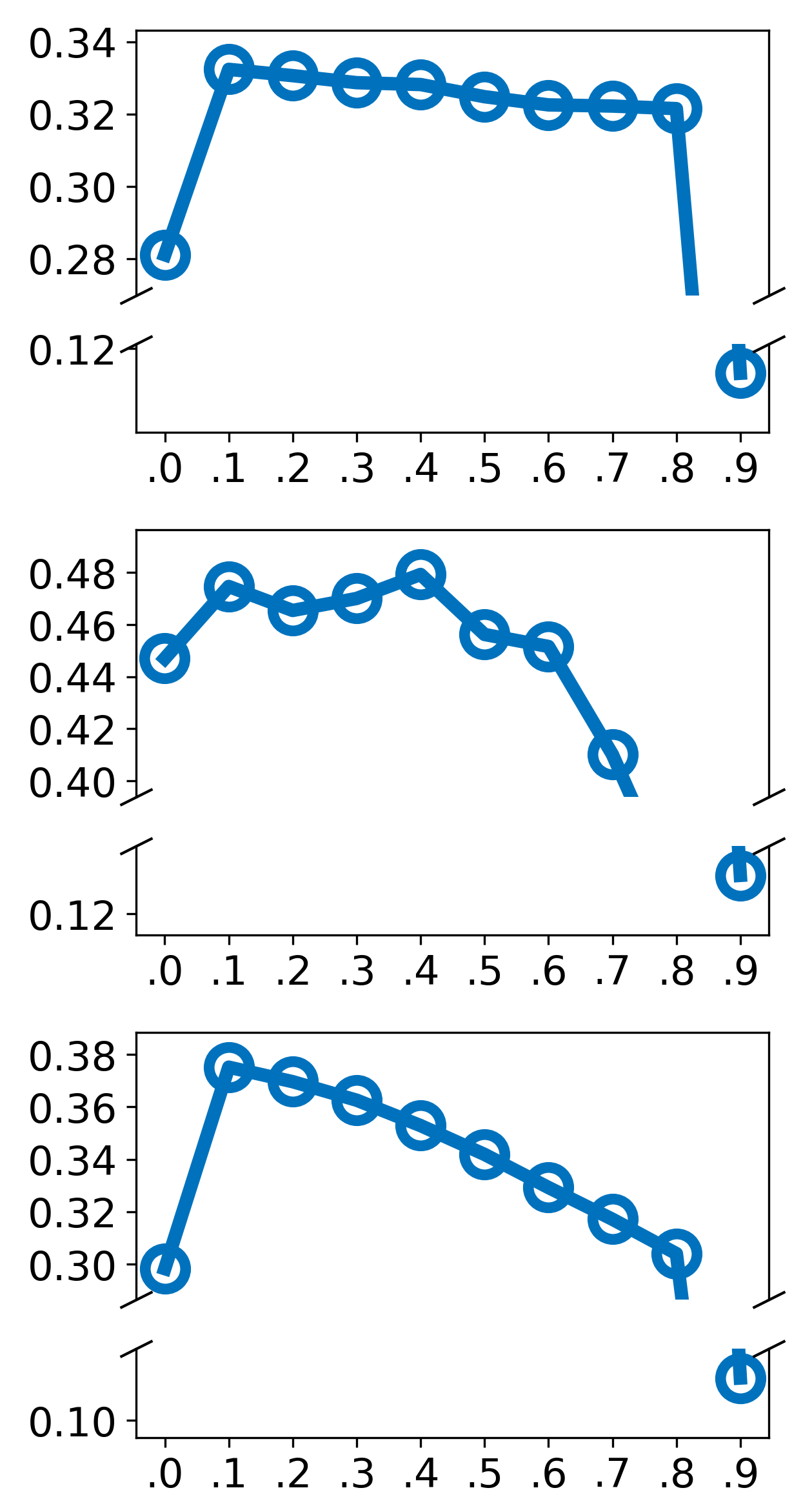}
        \label{fig:experiments:hyper_sens_split_view:gamma}
    }
    \caption{
    \label{fig:experiments:hyper_sens_split_view}
    Effect of the hyperparameters $\alpha$, $\beta$, and $\gamma$ of \method on the recommendation performance in HR@10, where $\gamma = 1 - \alpha - \beta$ is the strength of smoothing, and $\alpha$ and $\beta$ are the strengths of query fitting and cascading alignment, respectively.
    %$0 \leq \alpha + \beta \leq 1 $ and $\gamma = 1 - (\alpha  + \beta )$. Note that, All hyperparameters in our method have an impact on accuracy.
    }
\end{figure}

\subsection{Effect of Hyperparameters (Q3)}
\label{sec:experiments:hyper_sens_split_view}
We investigate the impact of the hyperparameters $\alpha$, $\beta$, and $\gamma$ in \method on the performance of multi-behavior recommendation.
For this experiment, we varied $\alpha$ and $\beta$ from 0 to 1 with a step size of 0.1, and measured HR@10 of \method with them such that $\alpha + \beta$ is between 0 and 1, where the results of all possible combinations are provided in Appendix~\ref{sec:appendix:details:hyperparams}.
For each value of a hyperparameter\footnote{For better visualization, we excluded the results when the value of each hyperparameter is $1$, as the accuracies were significantly low in all cases.}, varied in increments of 0.1, we reported the maximum accuracy it achieved along with the possible values of the others, to analyze its effect.

Figure~\ref{fig:experiments:hyper_sens_split_view} shows the effects of $\alpha$, $\beta$, and $\gamma$ on the recommendation performance.
As shown in Figure~\ref{fig:experiments:hyper_sens_split_view:beta}, the accuracy improves with an increase in the strength $\beta$ of cascading alignment across all tested datasets, highlighting the importance of leveraging cascading information\footnote{However, relying solely on the cascading information, such as setting $\beta = 1$, results in poor performance, as shown in Figure~\ref{fig:appendix:Hyper-sens-detail}.}.
The effects of $\alpha$ and $\gamma$ depend on the datasets. 
For the strength $\alpha$ of query fitting, a smaller value works better on \taobao and \tenrec, while a moderately large value performs better on \tmall.
For the strength $\gamma$ of smoothing, the accuracy remains relatively high for values between 0.1 and 0.7 on \taobao and \tenrec, whereas the accuracy significantly drops after 0.1. 
This indicates that query information is far more crucial on \tmall compared to \taobao and \tenrec, whereas the smoothing plays a more significant role on the latter datasets.

\begin{figure}[t]
    \centering
    \includegraphics[width=0.7\linewidth]{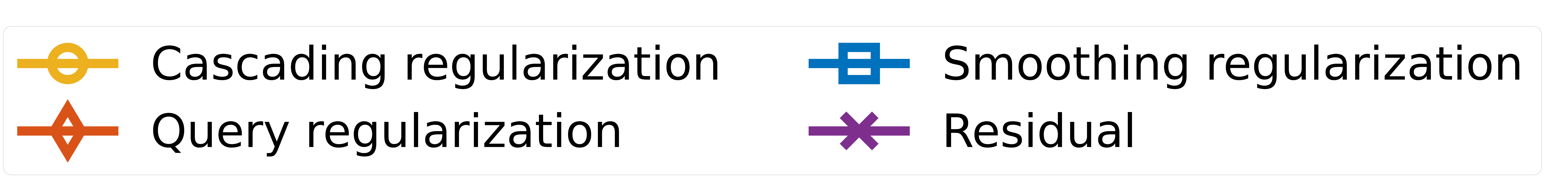}\\
    \subfigure[\taobao]{
        %\hspace{-2mm}
        \includegraphics[width=0.32\linewidth]{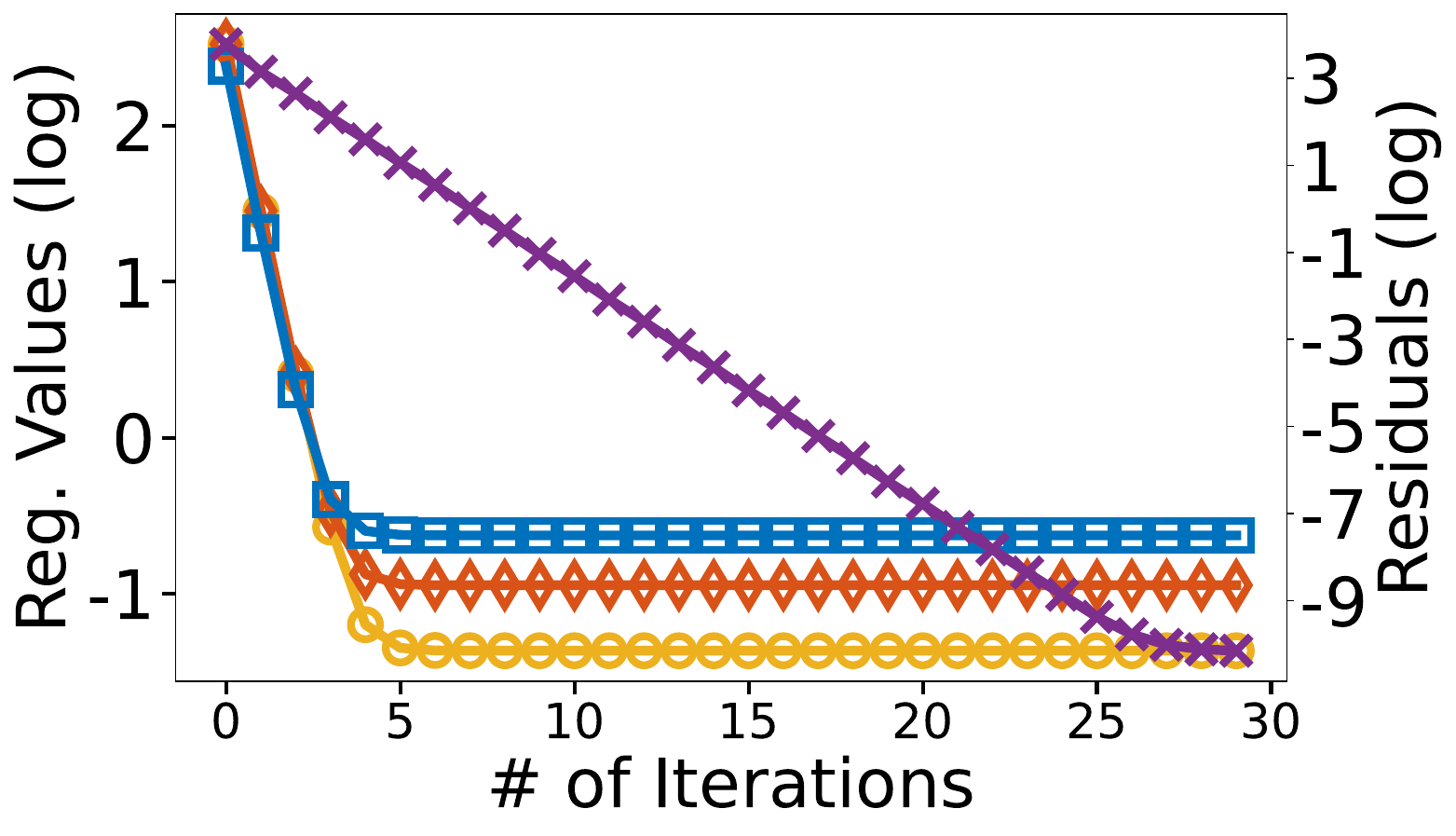}
        \hspace{-1mm}
    }
    \subfigure[\tenrec]{
        \hspace{-3mm}
        \includegraphics[width=0.320\linewidth]{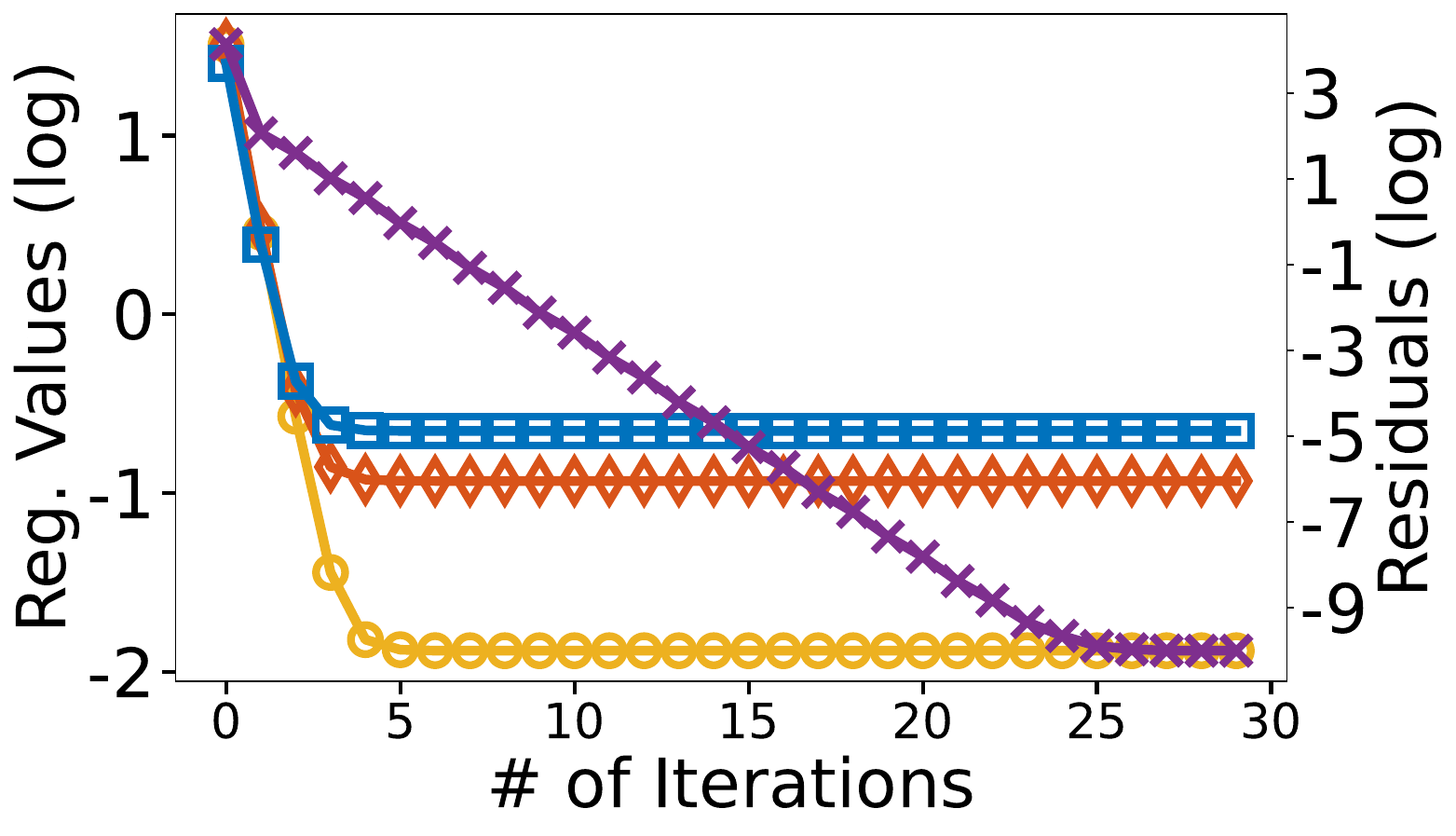}
        \hspace{-1mm}
    }
    \subfigure[\tmall]{
        \hspace{-3mm}
        \includegraphics[width=0.32\linewidth]{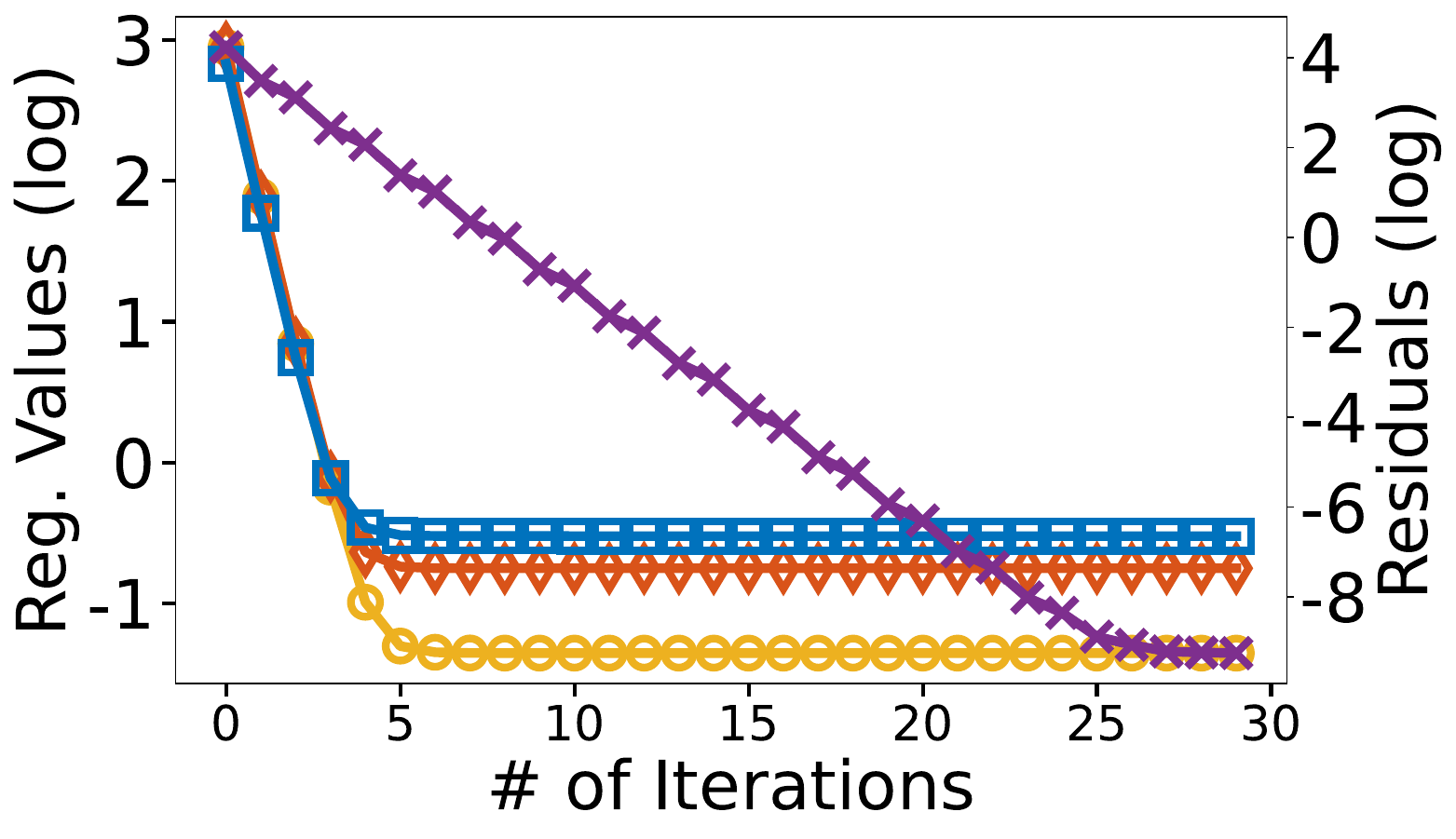}
        \hspace{-1mm}
    }
    
    \caption{
        \label{fig:experiments:reg_convergence}
        Convergence analysis on the regularization values of Equation~\eqref{eq:reg:obj} and the residuals in Algorithm~\ref{alg:method}.
        \textbf{As the number of iterations increases, the values of all regularization terms and residuals decrease and eventually converge.}
    }
\end{figure}

\begin{figure}[t]
    \centering
    \subfigure[\taobao]{
        \hspace{-5mm}
        \includegraphics[width=0.302\linewidth]{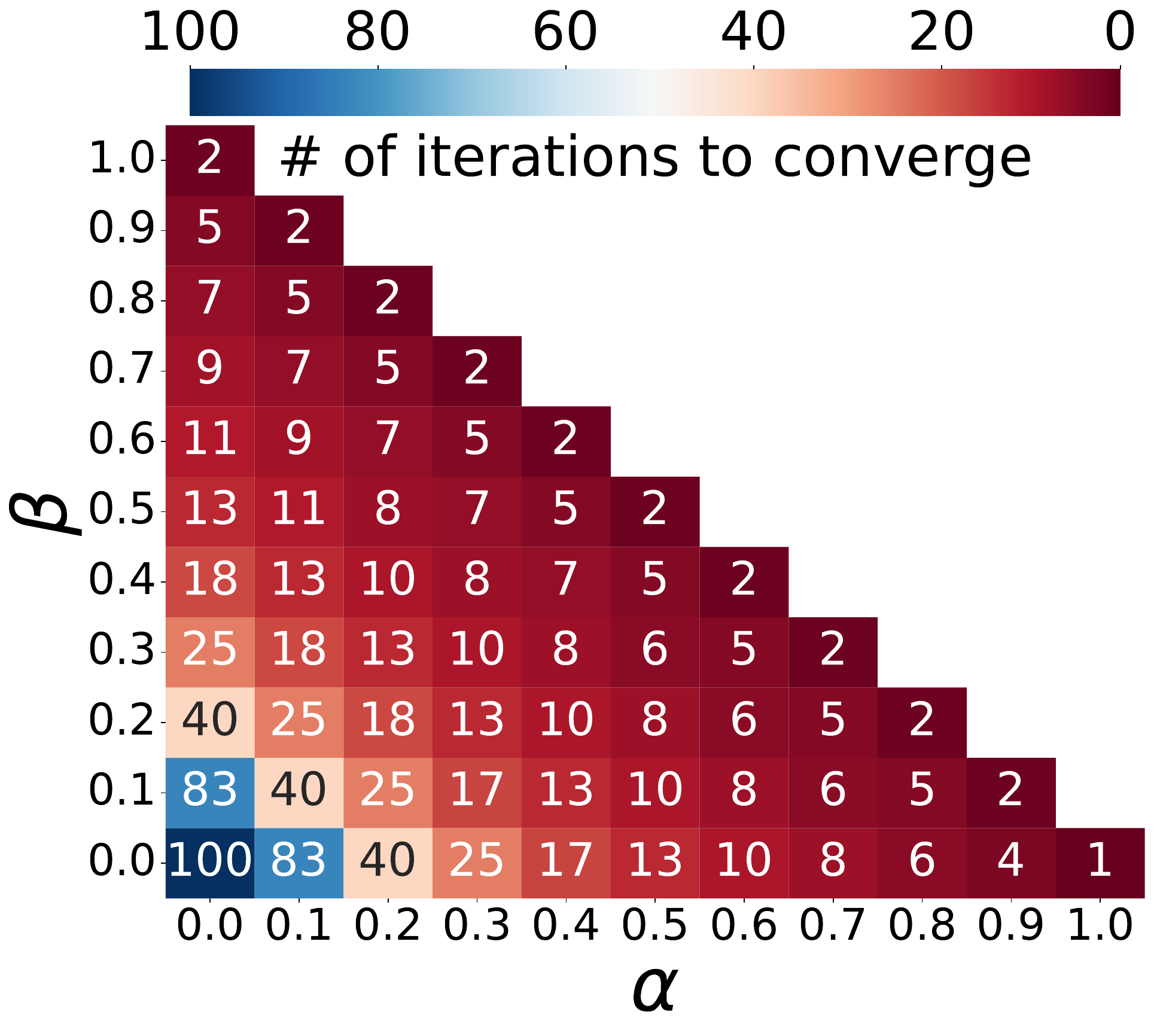}
        \hspace{-1mm}
        \label{fig:experiments:iteration:Taobao}
    }
    \subfigure[\tenrec]{
        \hspace{-3mm}
        \includegraphics[width=0.300\linewidth]{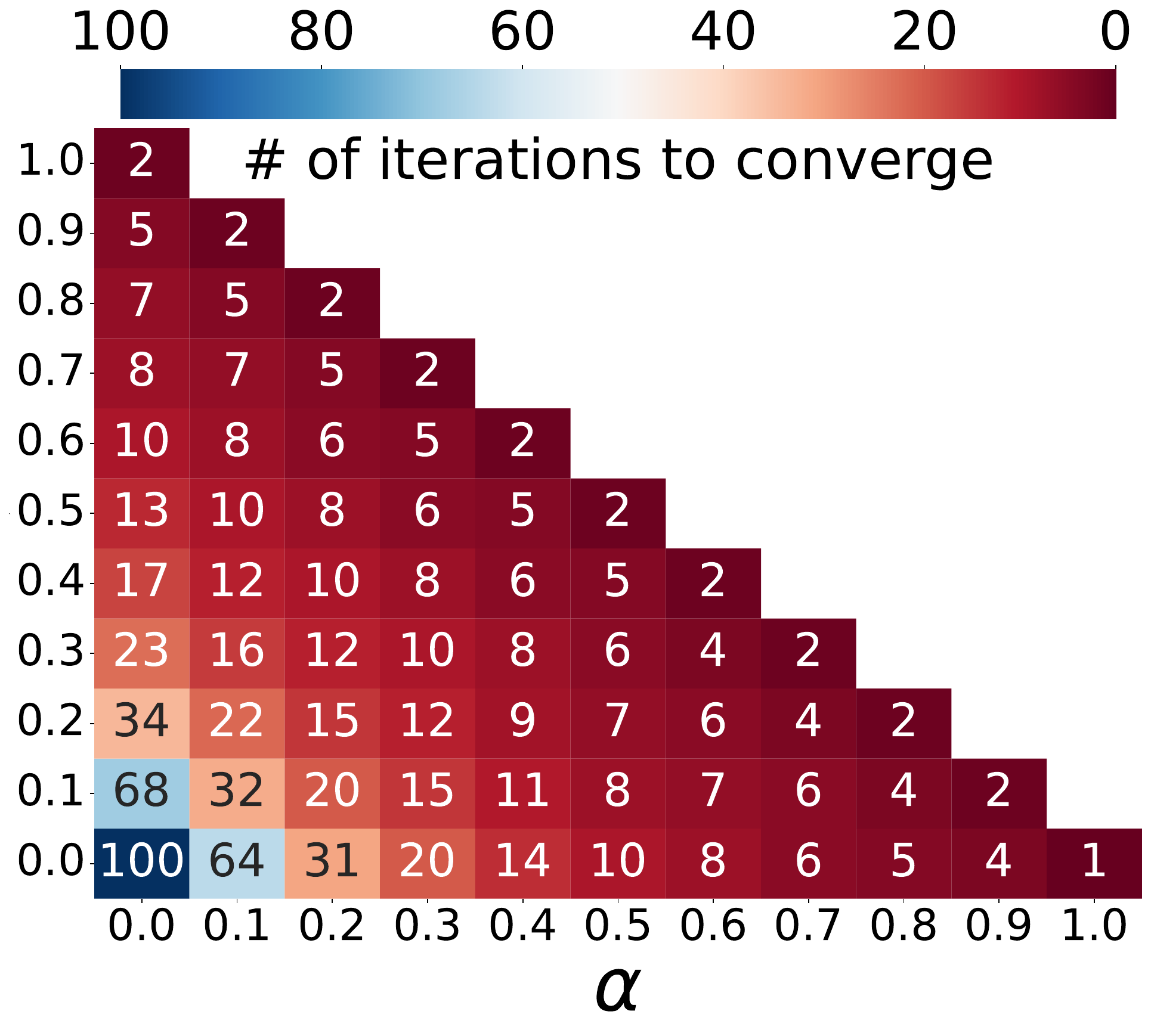}
        \hspace{-1mm}
        \label{fig:experiments:iteration:Tenrec}
    }
    \subfigure[\tmall]{
        \hspace{-3mm}
        \includegraphics[width=0.30\linewidth]{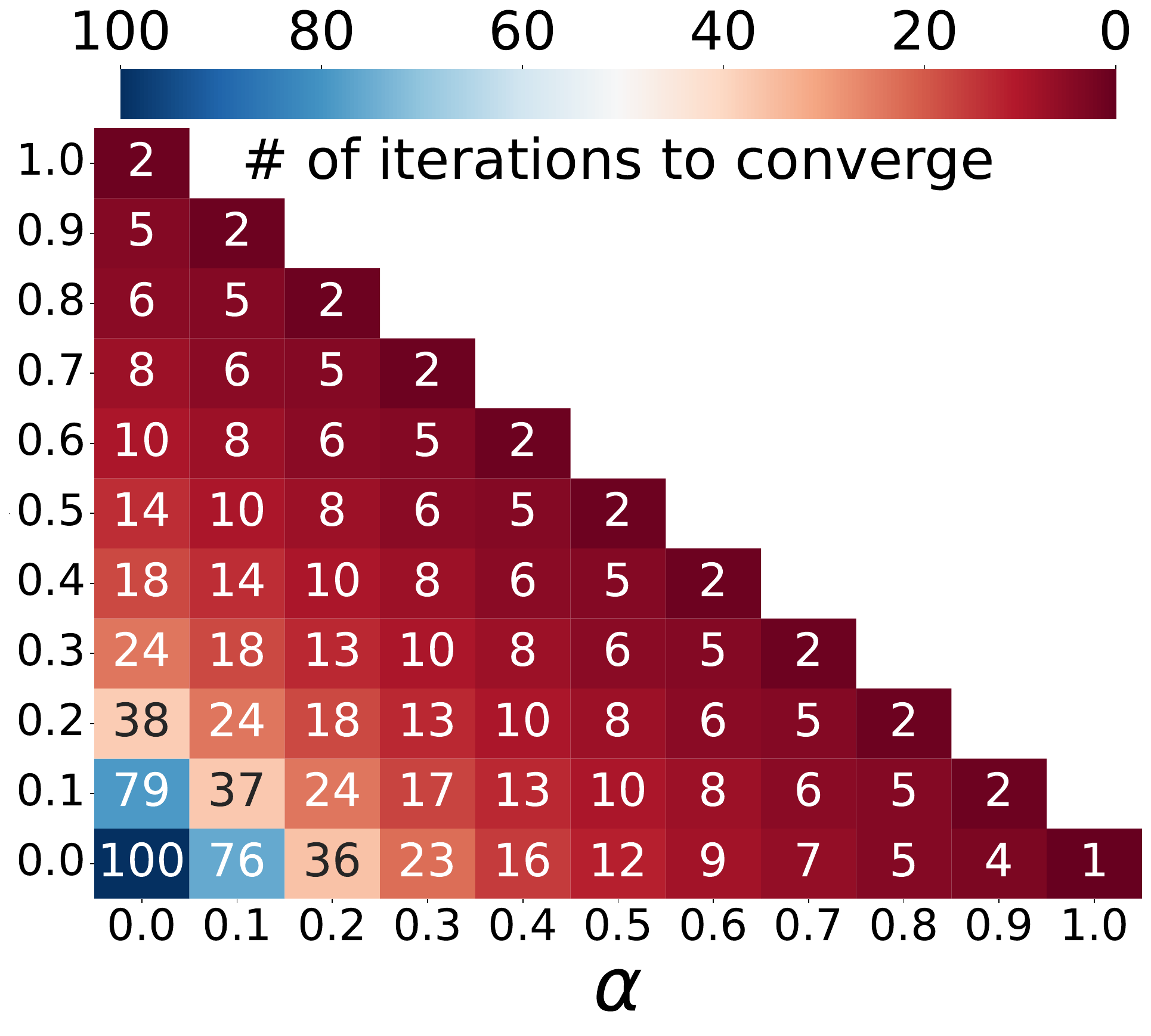}
        \hspace{-1mm}
        \label{fig:experiments:iteration:Tmall}
    }
    
    \caption{
        \label{fig:experiments:iteration}
        Effect of the hyperparameters $\alpha$ and $\beta$ of \method on the number of iterations to converge, where $\epsilon$ is set to $10^{-5}$.
        \textbf{Our algorithm for \method converges for all combinations of $\alpha$ and $\beta$ such that $\alpha + \beta \in (0, 1)$, with faster convergence for larger values of $\alpha + \beta$.}
        %All combinations of $\alpha$
    }
\end{figure}

\subsection{Convergence Analysis (Q4)}
\label{sec:experiments:convergence}
In this section, we analyze the convergence of the iterative algorithm for \method.

\smallsection{Analysis on regularization and residual}
We measured the average residuals of Algorithm~\ref{alg:method} and regularization values of Equation~\eqref{eq:reg:obj} for smoothing, query, and cascading across all querying users as the number of iterations increased.
To broadly observe the convergence of these terms, we set $\alpha = 0.3$ and $\beta = 0.4$ and randomly initialized the ranking vectors in Algorithm~\ref{alg:method}.
Figure~\ref{fig:experiments:reg_convergence} shows the results of this analysis, with the left $y$-axis representing the log values of regularizations and the right $y$-axis representing the log values of residuals.
The values of all regularization terms and residuals decrease and converge as the number of iterations increases sufficiently.
This indicates that Algorithm~\ref{alg:method} ensures convergence of the residuals, and the resulting scores minimize the objective function of Equation~\eqref{eq:reg:obj}, i.e., they adhere to ranking smoothness while aligning with the querying and cascading vectors. 
Note that convergence is guaranteed for any valid value of $\alpha$ and $\beta$, as discussed in Section~\ref{sec:proposed:analysis}.

\smallsection{Analysis on the number of iterations}
We further analyzed the number of iterations to convergence for various values of $\alpha$ and $\beta$, where the threshold $\epsilon$ for convergence is set to $10^{-5}$.
As shown in Figure~\ref{fig:experiments:iteration}, all valid combinations of $\alpha$ and $\beta$ where $\alpha + \beta \in (0, 1]$ result in convergence. 
Note that larger values of $\alpha + \beta$ lead to faster convergence because $\gamma = 1 - \alpha - \beta$ becomes smaller, which shrinks the range of the eigenvalues of $\mat{S}$.

\subsection{Computational Efficiency (Q5)}
\label{sec:exp:efficiency}
%overview
We evaluated the efficiency of our proposed \method in terms of scalability and the trade-off between  accuracy and running time. 

\begin{figure}[t]
    \centering  
    \vspace{3mm}
    \subfigure[\taobao]{
        \includegraphics[width=0.45\linewidth]{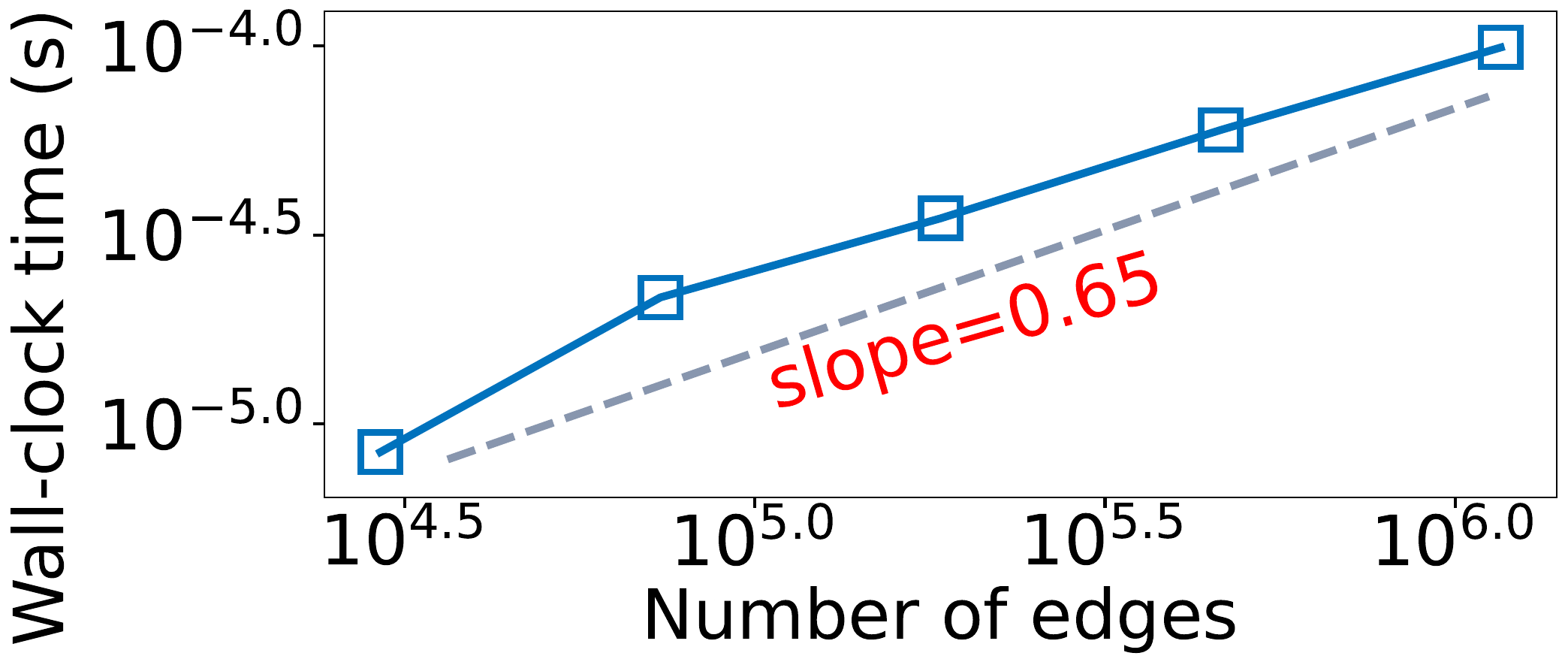}
        %\hspace{-3.5mm}
    }
    \subfigure[\tenrec]{
        %\hspace{-3mm}
        \includegraphics[width=0.4473447598\linewidth]{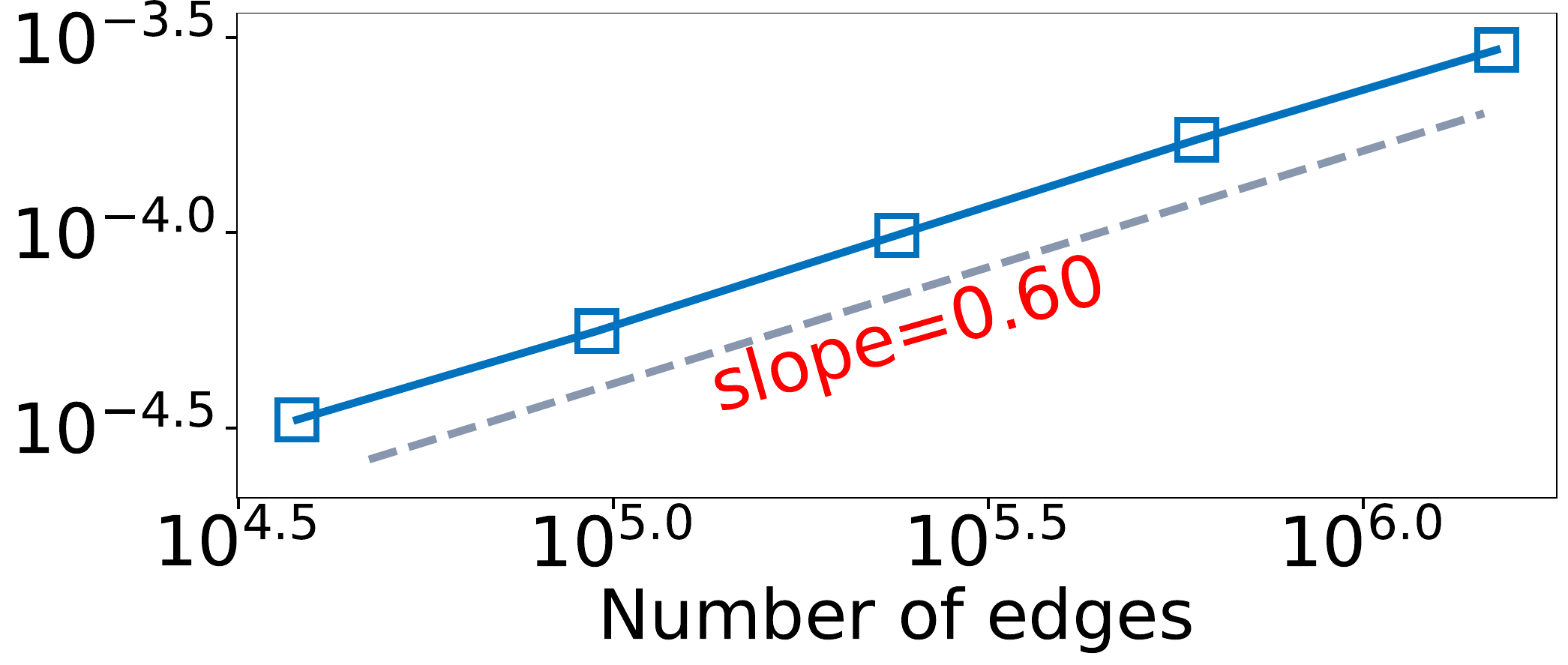}
        %\hspace{-1mm}
    }
    \caption{
        \label{fig:scalability}
        Scalability of \method.
        \textbf{Our iterative algorithm for \method scales linearly with respect to the number of edges (or interactions).}
    }
\end{figure}

\begin{figure}[t!]
    \centering
    \hspace{5mm}\includegraphics[width=0.7\linewidth]{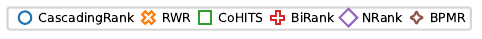}\vspace{-2mm}\\
    \subfigure[\taobao]{
        \hspace{-7mm}
        \includegraphics[width=0.3238536585\linewidth]{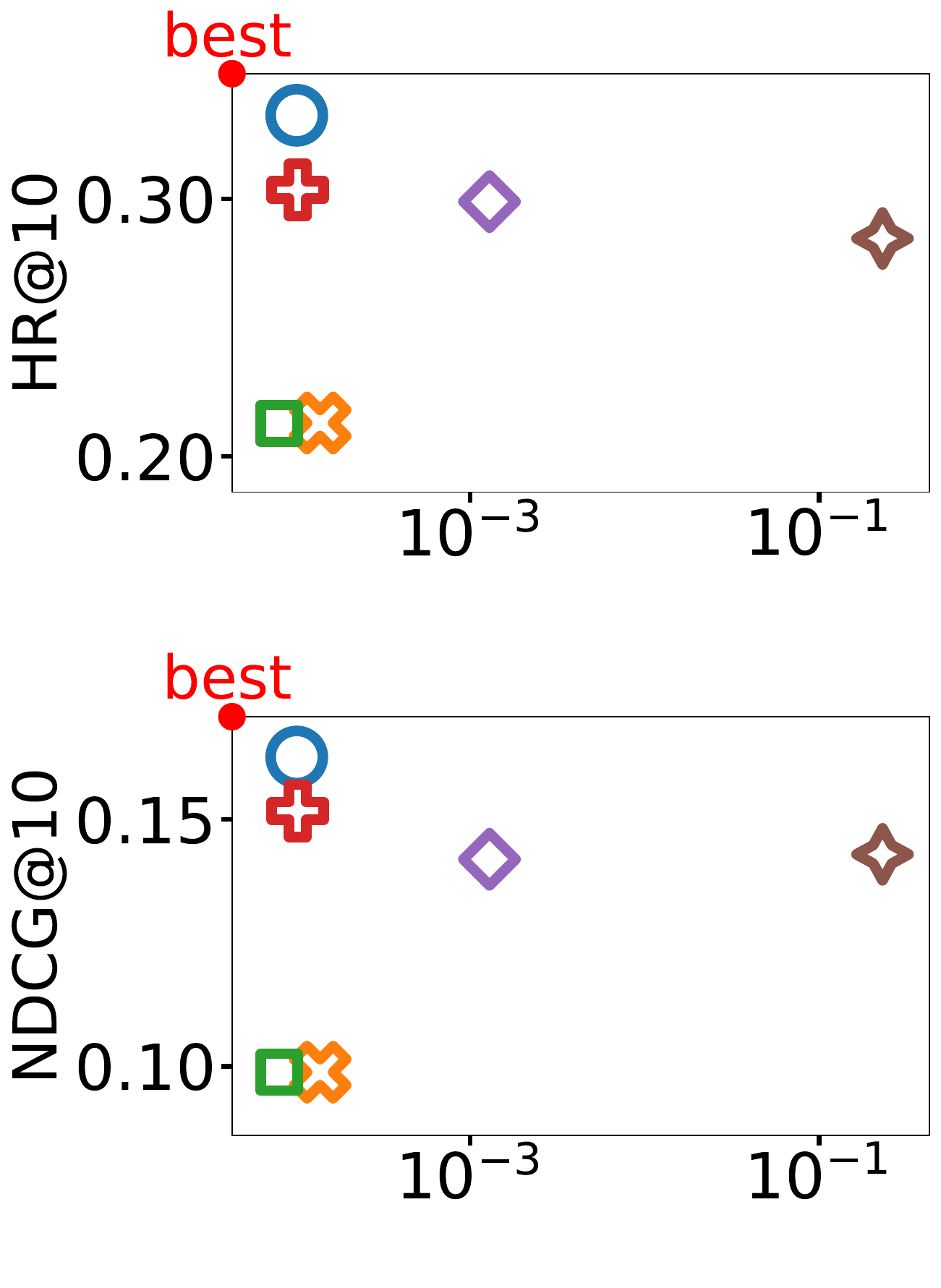}
        \hspace{-1mm}
    }
    \subfigure[\tenrec]{
        \hspace{-3mm}
        \includegraphics[width=0.300\linewidth]{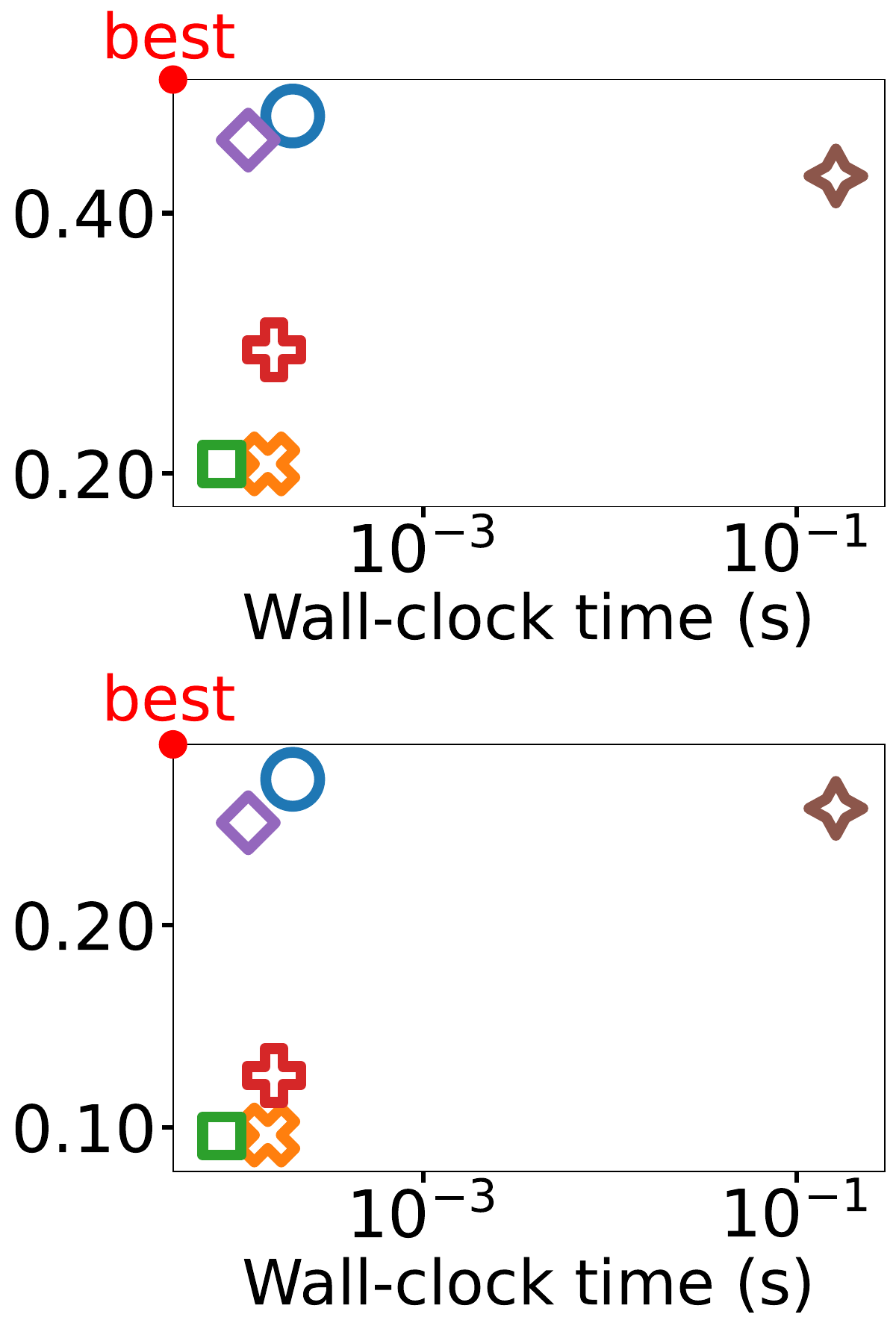}
        \hspace{-1mm}
        \label{fig:experiments:tradeoff:Tenrec}
    }
    \subfigure[\tmall]{
        \hspace{-3mm}
        \includegraphics[width=0.30\linewidth]{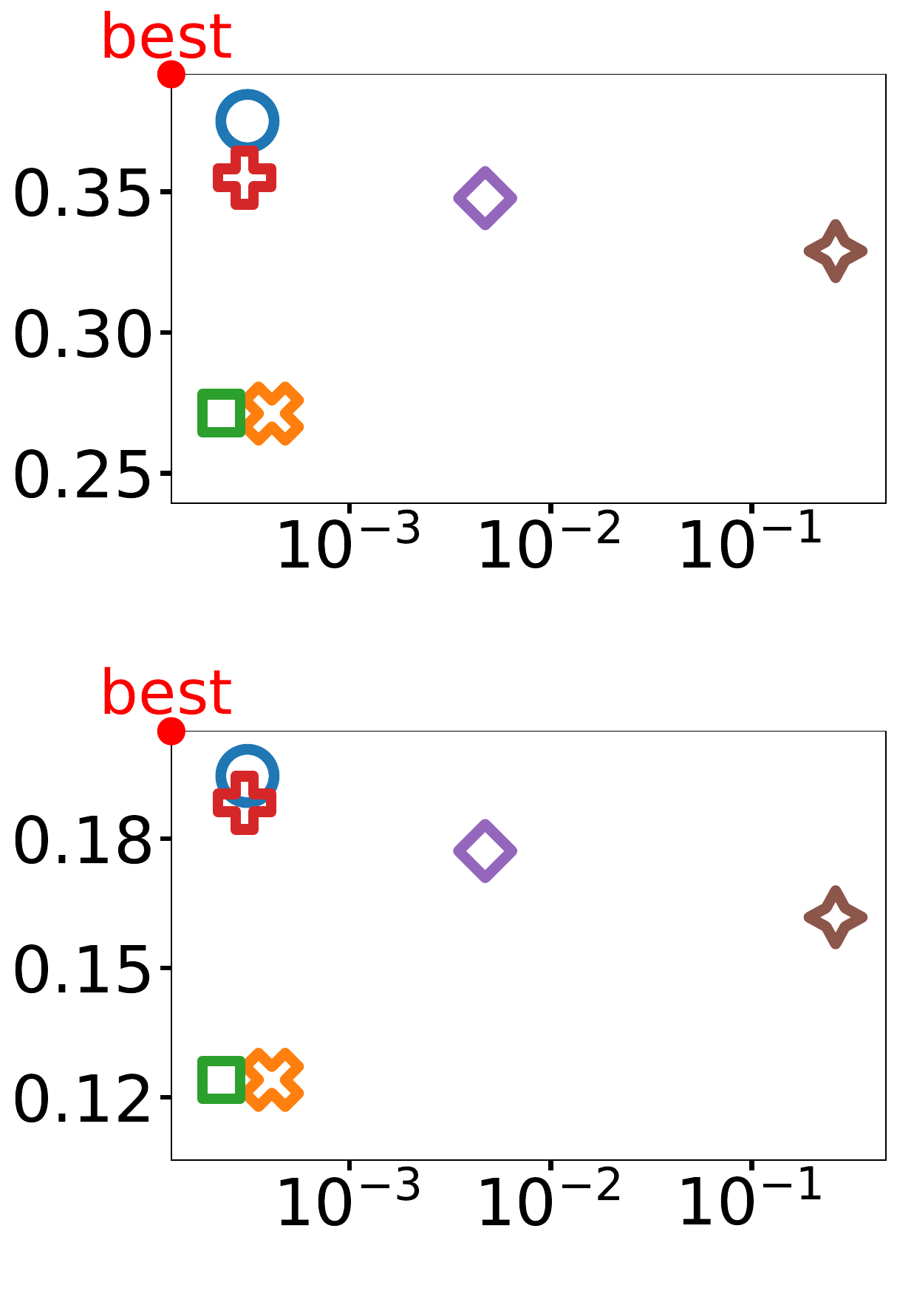}
        \hspace{-3mm}
        \label{fig:experiments:tradeoff:Tmall}
    }
    
    \caption{
        \label{fig:experiments:trade_off}
        Trade-off between accuracy and running time for the graph ranking methods.
        \textbf{Note that our \method achieves the highest accuracy while maintaining competitive runtime performance compared to other methods.}
    }
\end{figure}

\smallsection{Scalability}
To assess scalability, we measured the running time of \method by varying the number of interactions, using the \taobao and \tenrec datasets, which contain a large number of interactions. 
For each dataset, we first apply the same random permutation to all adjacency matrices, and then extract principal submatrices from each by slicing the upper-left part, ensuring that the number of interactions ranges from about $10^{4.5}$ to the original number of interactions.
Since bipartite graphs usually have different row and column sizes, we applied the same ratio (i.e., from 0 to 1) to both dimensions.
Figure \ref{fig:scalability} demonstrates that the running time of \method increases linearly with the number of interactions on both datasets, consistent with the theoretical analysis in Theorem~\ref{sec:proposed:analysis_complexity}.

\smallsection{Trade-off between accuracy and running time}
We also analyzed the trade-off between accuracy and running time. 
Note that \method is categorized as a graph ranking method. 
Therefore, we compared it with other graph ranking methods in this experiment (refer to Appendix~\ref{sec:appendix:EfficiencyComparisonwithRepresentationLearningMethods} for a comparison with representation learning methods).
Figure~\ref{fig:experiments:trade_off} demonstrates that our \method achieves the highest accuracy while maintaining competitive runtime performance compared to other graph ranking methods.
The traditional graph ranking methods, such as BiRank and NRank, demonstrate either competitive speed or accuracy compared to \method, but they fail to achieve both simultaneously.
Note that BPMR, a state-of-the-art method, performs worse than the graph ranking methods including \method in running time due to its sequential CPU-based algorithmic design (i.e., it is hard to parallelize), while others leverage parallelizable matrix operations on a GPU.

\section{Conclusion}
\label{sec:conclusion}
In this paper, we propose \method, a novel personalized graph ranking method for accurate multi-behavior recommendation. 
Our main idea is to leverage a cascading sequence of user behaviors leading to the target behavior by constructing a cascading behavior graph and measuring ranking scores along the graph, ensuring smoothness, query fitting, and cascading alignment.
We develop an iterative algorithm for scalable computation of ranking scores and theoretically analyze \method and its algorithm in terms of cascading effect, convergence, and scalability.
Through extensive experiments on three real-world multi-behavior datasets, we demonstrate that \method outperforms its competitors, particularly representation learning methods, and highlight the superiority of the graph ranking approach for multi-behavior recommendation.

% For natbib users:
\bibliographystyle{unsrtnat}
\bibliography{reference}
% For bibLaTeX users:
% \printbibliography

\appendix
\newpage
\appendix
\section{Appendix}
\label{sec:appendix}

\subsection{Validated Hyperparameters of \method}
\label{sec:appendix:Effect of Hyperparameters_detail}
In Table~\ref{tab:appendix:besthyperparameter}, we report the validated values of the hyperparameters $\alpha$ and $\beta$ that provide the best performance on the validation set for each metric and dataset, where $\gamma$ is calculated as $\gamma = 1 - \alpha - \beta$.

\setlength{\tabcolsep}{15pt}
\begin{table}[h!]
\small
\caption{
    Validated values of the hyperparameters of \method, providing the best performance for each metric and dataset.
}
\label{tab:appendix:besthyperparameter}
\centering
\begin{tabular}{c|ccc|ccc}
\hline
\toprule
\textbf{Metric}  & \multicolumn{3}{c|}{\textbf{HR@10}} & \multicolumn{3}{c}{\textbf{NDCG@10}} \\
\midrule
\textbf{Datasets} & $\alpha$ & $\beta$ & $\gamma$ & $\alpha$ & $\beta$ & $\gamma$  \\
\midrule
\textbf{Taobao}  & 0              & 0.9   & 0.1         & 0               & 0.9 & 0.1             \\
\textbf{Tenrec}  & 0.3            & 0.6   & 0.1         & 0               & 0.6 & 0.4 \\
\textbf{Tmall}   & 0.7            & 0.2   & 0.1         & 0.7             & 0.2 & 0.1            \\
\bottomrule\hline
\end{tabular}
\end{table}

\subsection{Further Experiment on a Single Behavior Graph} 
\label{sec:appendix:single}

We additionally measured the performance of traditional methods in multi-behavior recommendation, which are designed for single-behavior graphs. 
Specifically, we used the interactions associated with the target behavior (e.g., \texttt{buy}) solely for their intended purpose.
As shown in Table~\ref{tab:appendix:SingleBehaviorCompareTable}, the performance is significantly low because the graph is extremely sparse when only single-behavior interactions are used.
In contrast, merging all interactions into a unified graph yields better performance than using a single-behavior graph, as it resolves the sparsity issue.
In Section~\ref{sec:experiments}, we report the unified graph's performance for each method as the baseline of \method.

\def\arraystretch{1.2} 
\setlength{\tabcolsep}{6pt}
\begin{table}[h]
\small
\caption{
Performance comparison of existing methods using either a single behavior graph or a unified behavior graph.
Relying on a single behavior results in poorer performance compared to utilizing all interactions across multiple behaviors.
\label{tab:appendix:SingleBehaviorCompareTable}
}
\centering
\begin{tabular}{cc|c|ccr|ccr}
\hline
\toprule
\multicolumn{1}{c}{\multirow{2}{*}{\textbf{Methods}}} & \multicolumn{1}{c|}{\multirow{2}{*}{\textbf{Type}}} & \multicolumn{1}{c|}{\multirow{2}{*}{\textbf{Datasets}}} & \multicolumn{3}{c|}{\textbf{HR@10}}                   & \multicolumn{3}{c}{\textbf{NDCG@10}}                  \\ 

\multicolumn{1}{l}{} & \multicolumn{1}{l|}{} & \multicolumn{1}{l|}{}  & \textbf{Single} & \textbf{Unified} & \textbf{\% impv.} & \textbf{Single} & \textbf{Unified} & \textbf{\% impv.} \\ \midrule

\multirow{3}{*}{\textbf{MF-BPR}}   &  \multirow{3}{*}{RL}          & \taobao                                        & 0.0371          & 0.0758           & 104.2\%            & 0.0177          & 0.0387           & 118.4\%            \\
&   & \tenrec                                        & 0.0092          & 0.1244           & 1249.5\%           & 0.0042          & 0.0575           & 1275.6\%           \\
&   & \tmall                                         & 0.0647          & 0.0855           & 32.0\%             & 0.0291          & 0.0423           & 45.4\%             \\ \midrule
\multirow{3}{*}{\textbf{LightGCN}}    &  \multirow{3}{*}{RL}                & \taobao                                        & 0.0368          & 0.1025           & 178.8\%            & 0.0216          & 0.0566           & 162.3\%            \\
 &    & \tenrec                                        & 0.0046          & 0.1069           & 2224.3\%           & 0.0023          & 0.0526           & 2147.9\%           \\
  &   & \tmall                                         & 0.0913          & 0.1162           & 27.3\%             & 0.0402          & 0.0625           & 55.3\%             \\ \midrule
\multirow{3}{*}{\textbf{BiRank}}       &  \multirow{3}{*}{GR}               & \taobao                                        & 0.0390          & 0.3034           & 678.6\%            & 0.0216          & 0.1517           & 602.2\%            \\
  &   & \tenrec                                        & 0.0046          & 0.2949           & 6300.0\%           & 0.0018          & 0.1257           & 6951.5\%           \\
 &    & \tmall                                         & 0.0849          & 0.3550           & 317.9\%            & 0.0519          & 0.1819           & 250.4\%            \\ \midrule
\multirow{3}{*}{\textbf{CoHITS}}          &  \multirow{3}{*}{GR}    & \taobao      & 0.0404          & 0.2128           & 426.9\%            & 0.0241          & 0.0988           & 310.1\%            \\
  &   & \tenrec                                        & 0.0046          & 0.2074           & 4400.0\%           & 0.0020          & 0.0957           & 4722.6\%           \\
  &   & \tmall                                         & 0.0735          & 0.2713           & 269.2\%            & 0.0458          & 0.1284           & 180.2\%            \\ \midrule
\multirow{3}{*}{\textbf{RWR}}               &  \multirow{3}{*}{GR}   & \taobao                                        & 0.0412          & 0.2130           & 417.5\%            & 0.0246          & 0.0988           & 301.4\%            \\
 &    & \tenrec                                        & 0.0000          & 0.2074           & $\infty$              & 0.0000          & 0.0962           & $\infty$              \\
  &   & \tmall                                         & 0.0726          & 0.2712           & 273.8\%            & 0.0452          & 0.1284           & 184.1\%            \\ \bottomrule \hline 
\end{tabular}
\end{table}

\newpage
\subsection{Efficiency Comparison with Representation Learning Methods}
%minseo
\label{sec:appendix:EfficiencyComparisonwithRepresentationLearningMethods}

We compared the graph ranking (GR) methods including \method with the representation learning (RL) and pattern mining (PM) methods in terms of running time and accuracy. 
For a fair comparison, we measured the wall-clock time (in seconds) required to generate ranking (or recommendation) scores for all users from the input graph. 
We applied early stopping to the RL methods, mirroring the convergence criteria used for the GR methods.
As shown in Table~\ref{tab:appendix:efficienycompare}, our \method provides the best accuracy while showing competitive speed compared to its competitors. 
Notably, the graph ranking approach, including our proposed \method, is significantly faster than the RL and PM approaches while also achieving higher accuracy, underscoring its strengths in both speed and accuracy.

\def\arraystretch{1.2} 
\setlength{\tabcolsep}{6pt}
\begin{table}[h]
%\vspace{-3mm}
\small
\centering
\caption{
\label{tab:appendix:efficienycompare}
Comparison of \method with representation learning (RL), graph ranking (GR), and pattern mining (PM) methods in terms of efficiency (Time) and accuracy (HR@10), where we measured the wall-clock time in seconds for the end-to-end process of generating recommendation scores for all users from the input graph.
Note that our \method is significantly faster than the RL and PM methods, and comparable to the GR methods, while achieving the best accuracy.
}
\begin{tabular}{cc|rc|rc|rc}
\hline
\toprule
\multicolumn{2}{c|}{\bf Datasets}       & \multicolumn{2}{c|}{\bf \taobao} & \multicolumn{2}{c|}{\bf \tenrec}  & \multicolumn{2}{c}{\bf \tmall}   \\
\textbf{Methods} & \textbf{Type}       & \textbf{Time (s) } & \textbf{HR@10}  & \textbf{Time (s)} & \textbf{HR@10}   & \textbf{Time (s)} & \textbf{HR@10}  \\ \midrule
MB-HGCN & RL  & 96.0               & 0.1261 & 126.4              & 0.1413 & 306.8              & 0.1133 \\
MuLe   & RL       & 764.4              & 0.1949 & 39.7             & 0.2097  & 3937.2               & 0.1920 \\
PKEF   & RL       & 2031.6             & 0.1349 & 2258.9             & 0.1222  & 1925.2             & 0.0968 \\
HEC-GCN   & RL    & 104.3              & 0.1905 & 134.6              & 0.1806  & 323.9              & 0.2673 \\ \midrule
RWR  & GR   & 2.1                & 0.2130 & 6.1               & 0.2712  & 11.4               & 0.2074 \\
CoHITS & GR  & 1.2               & 0.2128 & 3.5                & 0.2713  & 6.5                & 0.2074 \\
BiRank & GR  & 1.6                & 0.3034 & 6.6                & 0.3550  & 8.4                & 0.2949 \\
NRank  & GR        & 19.9               & 0.2989 & 4.8              & 0.3477  & 132.0              & 0.4562 \\
\midrule
BPMR  & PM        & 3567.9             & 0.2846 & 6753.1             & 0.3289  & 7300.2             & 0.4286 \\
\midrule
\bf \method & GR & 1.6                & 0.3324 & 8.3                & 0.3751  & 8.6                 & 0.4608 \\ \bottomrule \hline 
\end{tabular}

\end{table}

\subsection{Comparison of Different Cascading Sequences}
\label{sec:appendix:Performance_for_Different_Cascading_Sequences}

% context of this experiment
As described in Section~\ref{sec:exp:setting}, we fixed the cascading sequence $\C$ to the bold sequence shown in Table~\ref{tab:appendix:behaviorpertubation} for each dataset, assuming a natural sequence of user behaviors inspired by earlier works~\cite{LiuXWY00024, yin2024hecgcn, ChengCHLZGP23fqvn, YanCGSLSL24}.
% motivation of this experiment
However, the performance of \method can depend on the order of behaviors in $\C$, and the assumed sequence may therefore be suboptimal.
% setting
To examine the effect of the order of $\C$, we conducted an experiment in which the last behavior in $\C$ was fixed to the target behavior $b_t$, while the other behaviors were permuted.
% result
Table~\ref{tab:appendix:behaviorpertubation} presents the experimental results in terms of HR@10 and NDCG@10.
Notably, in \taobao and \tmall, the sequence assumed in Section~\ref{sec:experiments} demonstrates suboptimal accuracy compared to other sequences, whereas it achieves the best performance in \tenrec.
Nevertheless, the assumed sequence for each dataset delivers competitive performance relative to the other sequences, outperforming the competitors of \method.
This result suggests that our cascading approach is effective in providing accurate recommendation, but the optimal sequence $\C$ of user behaviors depends on datasets, indicating that learning such a sequence in this setting is a promising direction of future work.

\setlength{\tabcolsep}{17pt}
\begin{table}[h]
%\vspace{-3mm}
\small
\caption{
\label{tab:appendix:behaviorpertubation}
Performance of \method for different cascading sequences in terms of HR@10 and NDCG@10. 
The last behavior is fixed to the target behavior, while the others are permuted in the sequence, with the sequence used in Section~\ref{sec:experiments} highlighted in bold.
}
\centering
\begin{tabular}{c|l|cc}
\hline
\toprule
\textbf{Datasets}                 & \textbf{Cascading Sequences}                  & \multicolumn{1}{l}{\textbf{HR@10}} & \multicolumn{1}{l}{\textbf{NDCG@10}} \\ \midrule
\multirow{2}{*}{\taobao} & \bf{view$\rightarrow$cart$\rightarrow$buy}           & 0.3324                    & 0.1626                      \\
                        & cart$\rightarrow$view$\rightarrow$buy           & \bf 0.3409                    & \bf 0.1675                      \\ \midrule
\multirow{6}{*}{\tenrec} & \bf{view$\rightarrow$share$\rightarrow$like$\rightarrow$follow} & \bf 0.4793                    & \bf 0.2723                      \\
                        & view$\rightarrow$like$\rightarrow$share$\rightarrow$follow & 0.4747                    & 0.2700                      \\
                        & share$\rightarrow$view$\rightarrow$like$\rightarrow$follow & 0.4700                    & 0.2545                      \\
                        & share$\rightarrow$like$\rightarrow$view$\rightarrow$follow & 0.4654                    & 0.2544                      \\
                        & like$\rightarrow$view$\rightarrow$share$\rightarrow$follow & \bf 0.4793                    & 0.2698                      \\
                        & like$\rightarrow$share$\rightarrow$view$\rightarrow$follow & 0.4700                    & 0.2578                      \\ \midrule
\multirow{6}{*}{\tmall}  & \bf{view$\rightarrow$collect$\rightarrow$cart$\rightarrow$buy}  & 0.3751                    & 0.1871                      \\
                        & view$\rightarrow$cart$\rightarrow$collect$\rightarrow$buy  & 0.3699                    & 0.1849                      \\
                        & collect$\rightarrow$view$\rightarrow$cart$\rightarrow$buy  & 0.3954                    & 0.1999                      \\
                        & collect$\rightarrow$cart$\rightarrow$view$\rightarrow$buy  & 0.3968                    & \bf 0.2086                      \\
                        & cart$\rightarrow$view$\rightarrow$collect$\rightarrow$buy  & 0.3757                    & 0.1866                      \\
                        & cart$\rightarrow$collect$\rightarrow$view$\rightarrow$buy  & \bf 0.3974                    & 0.2037                      \\ 
                        \bottomrule \hline
\end{tabular}
\end{table}

\newpage
\subsection{Detailed Analysis on Effect of Hyperparameters}
\label{sec:appendix:details:hyperparams}

We report how the hyperparameters $\alpha$ and $\beta$ affect \method across all possible combinations in terms of HR@10 and NDCG@10. 
Figure~\ref{fig:appendix:Hyper-sens-detail} displays the experimental results, showing that the trends vary across the datasets.
In \taobao, smaller values of $\alpha$ and higher values of $\beta$ tend to yield better performance, while in \tenrec, moderate $\beta$ values perform better.
In \tmall, higher values of $\alpha+\beta$ tend to show higher accuracy.

\begin{figure}[h]
    \centering
    \subfigure[\taobao]{
        %\vspace{-3mm}
        \hspace{-5mm}
        \includegraphics[width=0.321\linewidth]{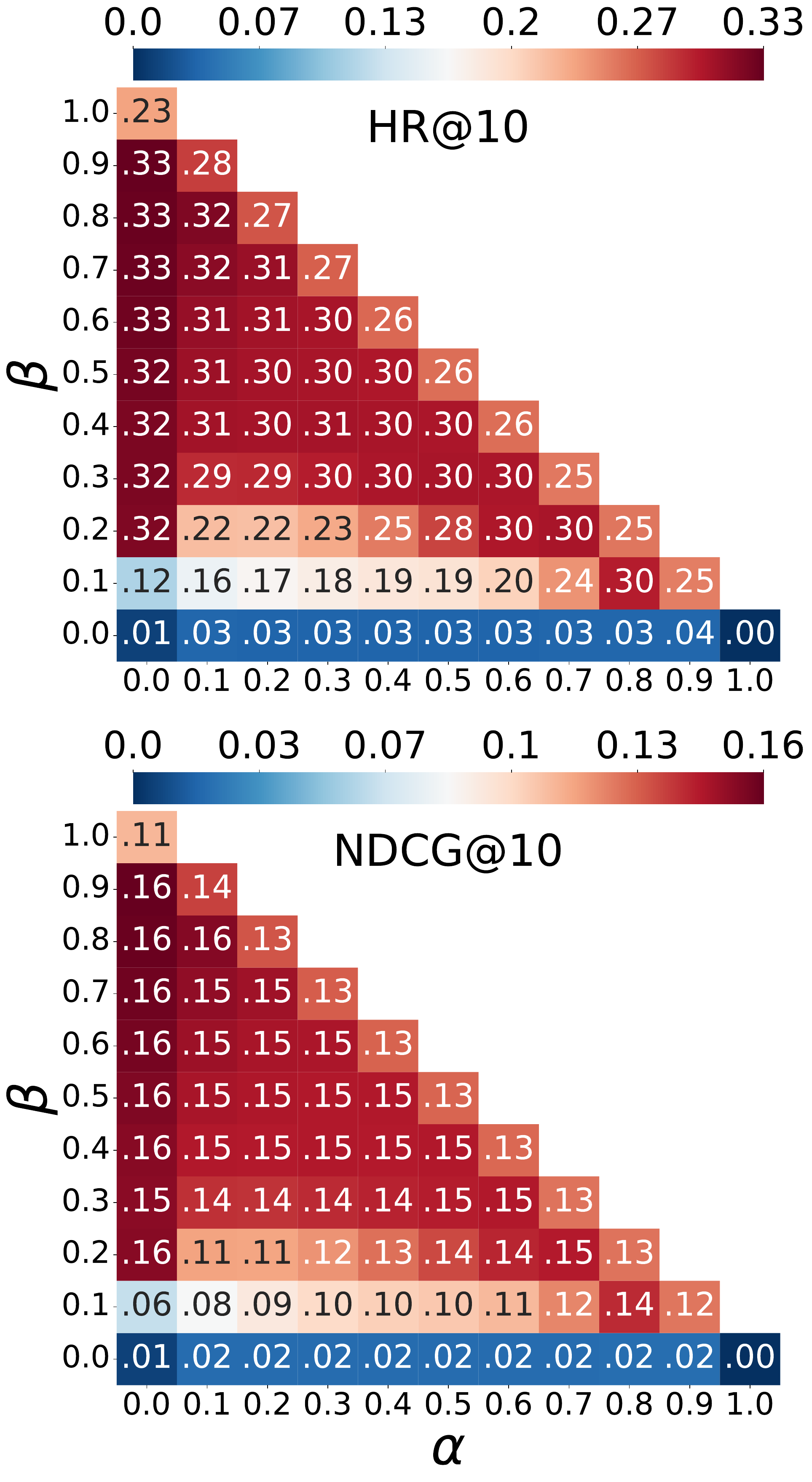}
        \hspace{-1mm}
        \label{fig:experiments:Hyper-sens:Taobao}
    }
    \subfigure[\tenrec]{
        \hspace{-3mm}
        \includegraphics[width=0.320\linewidth]{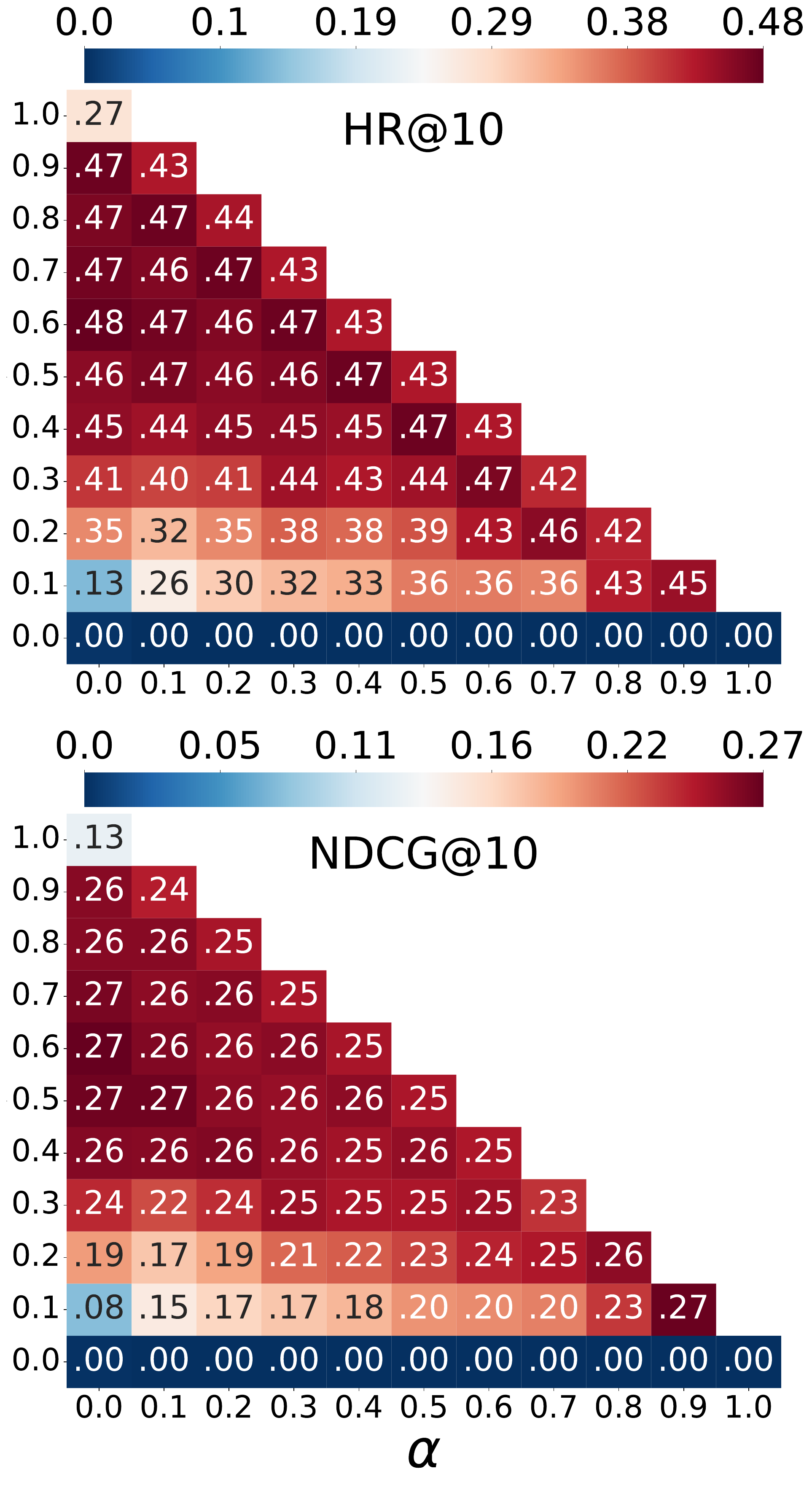}
        \hspace{-1mm}
        \label{fig:experiments:Hyper-sens:Tenrec}
    }
    \subfigure[\tmall]{
        \hspace{-3mm}
        \includegraphics[width=0.32\linewidth]{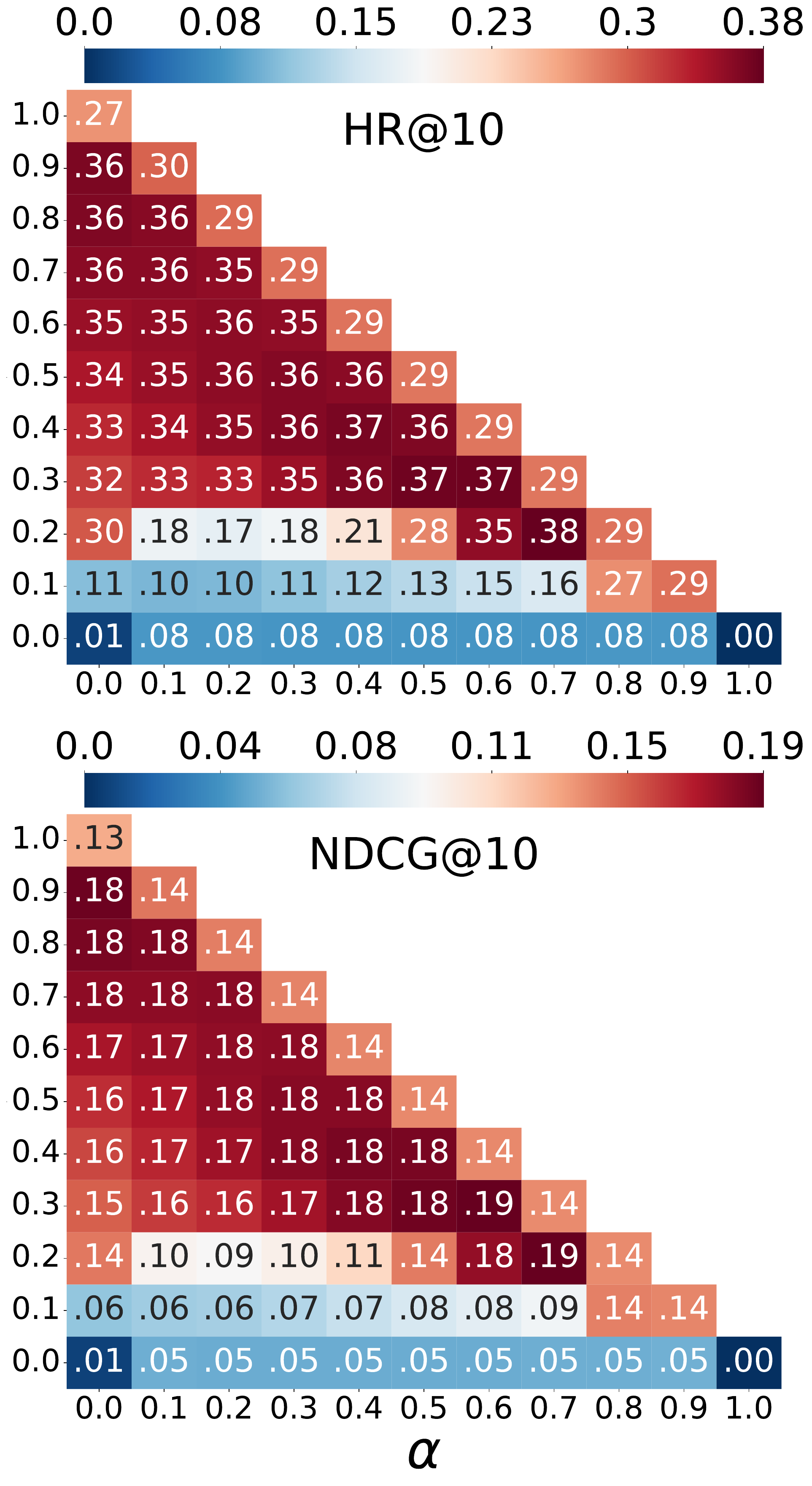}
        \hspace{-1mm}
        \label{fig:experiments:Hyper-sens:Tmall}
    }

    \caption{
        \label{fig:appendix:Hyper-sens-detail}
        Detailed effect of hyperparameters in \method, where $\alpha$ controls the strength of query fitting, and $\beta$ controls the strength of cascading fitting. These hyperparameters are searched within the range $0 \leq \alpha + \beta \leq 1$ and $0 \leq \alpha, \beta \leq 1$. 
    }
\end{figure}

\subsection{Implementation Information of Competitors}
\label{sec:appendix:competitors:information}
%minseo - dines ref
We used open-source implementations for the competitors in our experiments, with detailed information provided below:
\begin{itemize}[leftmargin=9mm,noitemsep]
    \item{ 
        \textbf{MF-BPR}: \url{https://github.com/RUCAIBox/RecBole} 
    }
    \item{
        \textbf{LightGCN}: \url{        https://github.com/RUCAIBox/RecBole}
    }
    \item{
        \textbf{MB-HGCN}: 
        \url{https://github.com/MingshiYan/MB-HGCN}
    }
    \item{
        \textbf{MuLe}: \url{https://github.com/geonwooko/MULE}
    }
    \item{
        \textbf{PKEF}: \url{https://github.com/MC-CV/PKEF}
    }
    \item{
        \textbf{HEC-GCN}: \url{https://github.com/marqu22/HEC-GCN}
    }
    \item{
        \textbf{RWR}: \url{https://github.com/jinhongjung/pyrwr}
    }
    \item{
        \textbf{CoHITS}: \url{https://github.com/BrianAronson/birankr}
    }
    \item{
        \textbf{BiRank}: \url{https://github.com/BrianAronson/birankr}
    }
    \item{
        \textbf{NRank}: \url{https://github.com/BrianAronson/birankr}
    }
    \item{
        \textbf{BPMR}: \url{https://github.com/rookitkitlee/bpmr}
    }
\end{itemize}

\end{document}